\documentclass[13pt]{article}
\usepackage{float}

\usepackage{fullpage}
\usepackage{graphicx}
\usepackage{graphicx}
\usepackage{amssymb}
\usepackage{amsmath}
\usepackage{enumitem}
\usepackage{xcolor}
\usepackage[hidelinks]{hyperref}

\numberwithin{equation}{section}

\newcommand{\Hm}[1]{\leavevmode{\marginpar{\tiny%
$\hbox to 0mm{\hspace*{-0.5mm}$\leftarrow$\hss}%
\vcenter{\vrule depth 0.1mm height 0.1mm width \the\marginparwidth}%
\hbox to
0mm{\hss$\rightarrow$\hspace*{-0.5mm}}$\\\relax\raggedright #1}}}

\newtheorem{claim}{Claim}[section]
\newtheorem{theorem}[claim]{Theorem}
\newtheorem{lemma}[claim]{Lemma}

\newtheorem{corollary}[claim]{Corollary}
\newtheorem{remark}[claim]{Remark}

\newenvironment{proof}[1][Proof]{\textsl{#1.} }{\ \rule{0.4em}{0.7em}}
\usepackage{lineno}
\usepackage{amsmath}   

\begin{document}


\title{Resonances in a Dirichlet quantum waveguide coupled to a cavity}

\author{Sylwia Kondej$^1$\footnote{Corresponding author.}\,, Nikoloz Kurtskhalia$^2$}
\date{\small  $^1$ Institute of Physics, University of Zielona G\'ora, ul.\ Szafrana 4a, 65246 Zielona G\'ora, Poland, \emph{e-mail:  s.kondej@if.uz.zgora.pl}\\ \quad \quad  \,\,\,
$^2$ School of Physics, Free University of Tbilisi, 240 Davit Aghmashenebeli Alley, Tbilisi 0159, Georgia, \emph{e-mail: nkurt22@freeuni.edu.ge}
}

\maketitle

\begin{center}
\emph{Dedicated to Professor Pavel Exner on the occasion of his 80th birthday.}
\end{center}

\begin{abstract}
We consider a Dirichlet waveguide in $\mathbb{R}^n$ ($n = 2,3$) with an attached cavity. We show that if the cavity admits a small gap, then the original embedded eigenvalues turn into resonances. The main question we address is how the size of the gap affects the resonant properties, in particular the imaginary part of the resonant pole. For example, in the case of a two dimensional waveguide with a gap of size $\varepsilon$, we show that the leading order term of the resonance behaves as $\mathcal O (\varepsilon^2)$. In the three-dimensional case, if the aperture is defined by a rectangular opening with volume proportional to $\varepsilon^2$, the resonant component behaves as $\mathcal{O}(\varepsilon^4)$.
This shows that, in the analyzed class of models, the characteristic time scale
associated with the resonances is generically of order
$\mathcal{O}((\mathrm{vol}_\varepsilon)^{-2})$, where $\mathrm{vol}_\varepsilon$
denotes the volume of the aperture inducing the resonance.
\end{abstract}

{\bf Keywords:} Hamiltonian of quantum system, Dirichlet waveguides, embedded eigenvalues, resonances.
\\
\indent {\bf Mathematics Subject Classification:} 47B38, 81Q10, 81Q15, 81Q80
\section{Introduction}

\bigskip

In this paper, we consider a class of models involving semi-infinite straight
waveguides in $\mathbb{R}^n$, with $n = 2,3$, equipped with a resonant cavity. To ensure clarity of presentation, we begin by describing a two-dimensional planar semi-infinite waveguide $\Sigma$ of width $d_2$:
$$
\Sigma = \{(x_1, x_2 )\,:\, x_1 \in [0, \infty )\,,\,\,\,x_2 \in [\,0 , d_2] \, \}\,,
$$
bearing in mind that the results obtained in this paper are applicable to a more general situation, which will be described later.
The waveguide is equipped with a cavity of width $d_1$, situated at  the closed end,  see~Fig.~\ref{fig:enter1}.
The right-hand wall of the cavity contains a gap of size $\varepsilon > 0$; more precisely, it is defined as
$$
I_\varepsilon := \{(d_1, x_2) \,:\, x_2 \in [\,0\,,t\,] \cup [\,t+\varepsilon\,, d_2 \,] \}\,,
$$
where $t \geq 0$, and $t+\varepsilon \leq d_2$.
In particular, $
I_0 = \{(d_1, x_2) \,:\, x_2 \in [\,0\,,d_2\,]\, \}\,$ and the gap is determined by $\bar I_\varepsilon =I_0 \setminus I_\varepsilon$.
The Hamiltonian of the system is defined as the Dirichlet Laplacian $-\Delta^D_{I_\varepsilon}$ acting in $L^2(\Sigma)$ and subject to Dirichlet boundary conditions on $\partial \Sigma \cup I_\varepsilon$.
\begin{figure}[H]
    \centering
    \includegraphics[width=0.6\linewidth]{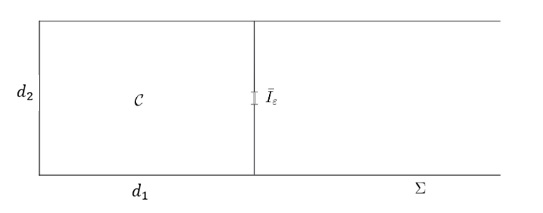 }
    \caption{Geometry of the waveguide $\Sigma$ with a cavity $\mathcal{C}$ containing  a  gap $\bar{I}\varepsilon$. }
    \label{fig:enter1}
\end{figure}
The threshold of the continuous spectrum is determined by the lowest transverse mode; therefore, the essential spectrum takes the form
\begin{equation}\label{eq-embe_intro}
\sigma_{\mathrm{ess}}(-\Delta^D_{I_\varepsilon}) = \Big[ \Big(\frac{\pi}{d_2}\Big)^2,\, \infty \Big)\,.
\end{equation}
For a closed cavity, i.e., when $\varepsilon = 0$, the trapped modes give rise to discrete energy levels
$$
\xi_{l,k} = \Big (\frac{\pi l}{d_1}\Big)^2 + \Big(\frac{\pi k}{d_2}\Big)^2,
$$
which, in fact, represent eigenvalues embedded in the essential spectrum.

Introducing a small aperture in the right hand hard wall  of the cavity enables the possibility of quantum tunnelling from the cavity into the infinite channel of the waveguide.
As a result, the previously trapped modes become metastable. It is natural to
expect that the size of the gap is related to the characteristic time scale
associated with these metastable states.
The problem can be formulated in terms of resonances, where the width of a resonance is inversely proportional to the corresponding  the characteristic time scale. The main question addressed in this paper is the following:
\emph{What is the asymptotic behavior of the resonance width as a function of $\varepsilon$?}

At this stage, we would like to emphasize that in the current paper, we also tackle the problem of a three-dimensional waveguide with the geometry illustrated in Fig.~\ref{fig:enter2}. In this case, the gap $\bar{I}_\varepsilon$ is defined by a rectangle with dimensions $\varepsilon$ and $a\varepsilon$, where $a > 0$ is a constant.
\noindent
\begin{figure}[H]
    \centering
    \includegraphics[width=0.8\linewidth]{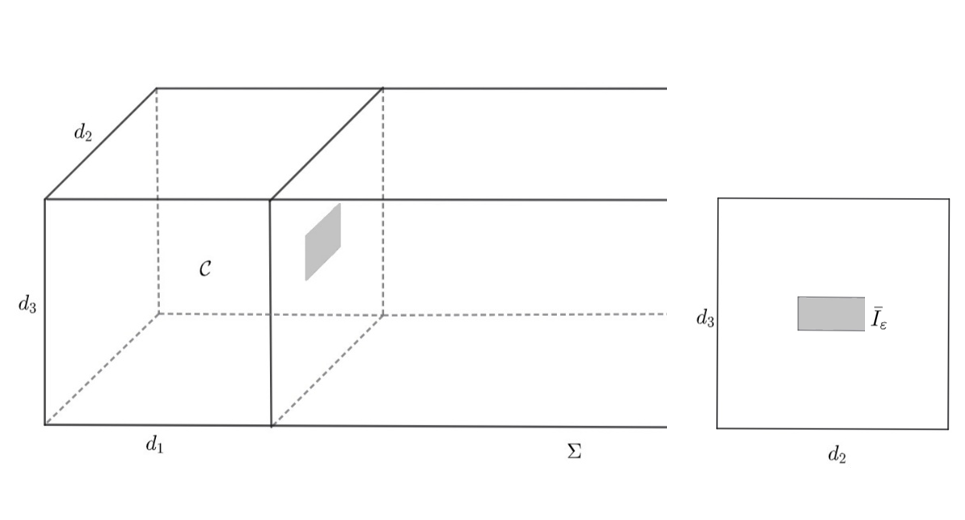}
    \caption{ Geometry of the waveguide $\Sigma $ with a box cavity $\mathcal C$ containing a rectangular gap $\bar I_\varepsilon$.}
    \label{fig:enter2}
\end{figure}

To address the problem of resonances, we formulate the spectral problem in terms of the kernel of a certain operator $K^\varepsilon(z)$, which characterizes the presence of an eigenvalue. Specifically, if
$$
\ker K^\varepsilon(z_0) \neq \emptyset,
$$
then $z_0$ is an eigenvalue of the Dirichlet Laplacian $-\Delta_{I_\varepsilon}^D$. This criterion can be considered  as a realization of the Birman–Schwinger principle in systems with a varying spatial structure, in particular those involving Dirichlet boundary conditions. Motivated by this characterization, we investigate whether there exist complex values $z$ for which
$
\ker K^\varepsilon(z) \neq \emptyset,
$
interpreted as resonance poles in the complex plane.
The main results of the paper can be formulated as follows.

\noindent \textbf{Main Results.} Assume that in a two-dimensional planar waveguide, an embedded eigenvalue $\xi_{l,k}$, given by (\ref{eq-embe_intro}), has multiplicity $\mathrm{N}$. Then, under a small perturbation induced by the opening of the gap, there exist $\mathrm{N}$ complex-valued functions $\varepsilon \mapsto z_j(\varepsilon)$, for $j = 1, \dots, \mathrm{N}$, such that
$$
\ker K^\varepsilon\big(z_j(\varepsilon)\big) \neq \emptyset,
$$
and each $z_j(\varepsilon)$ admits the asymptotic expansion
$$
z_j(\varepsilon) = \xi_{l,k} + \mu_j(\varepsilon) + i\,\nu_j(\varepsilon),
$$
where the real-valued functions $\mu_j(\varepsilon)$ and $\nu_j(\varepsilon)$ satisfy
$$
\mu_j(\varepsilon) = \mathcal{O}(\varepsilon^2) \,,\quad  \, \nu_j(\varepsilon) = \mathcal{O}(\varepsilon^2) \quad \text{as} \quad \varepsilon \to 0\,.
$$
These complex quantities $z_j(\varepsilon)$  determine the  resonance poles, with the imaginary parts describing the decay rates of the corresponding metastable states.

In three-dimensional case defined in the previous discussion the result holds true
with $
\mu_j(\varepsilon) = \mathcal{O}(\varepsilon^4) $ and  $\nu_j(\varepsilon) = \mathcal{O}(\varepsilon^4) $.

In our view, the importance of the present problem stems from two main reasons.  First, resonances play a fundamental role in determining the characteristic time scale of metastable states, which in turn influence quantum transport properties.
The results presented in this paper allow us to conclude that, for the considered
 types of waveguides, the  characteristic time scale behaves as $$\tau = \mathcal{O}(|\bar{I}_\varepsilon|^{-2})\,,$$ where $|\bar{I}_\varepsilon|$ denotes the volume of the aperture region.
A precise understanding of resonances features  enables control over the dynamical behavior of quantum particles in waveguides with the structure of $\Sigma$. This, potentially, opens the way to optimization of quantum and electronic devices where transport characteristics can be  tuned through geometric  modifications.

A second motivation for this study arises from a mathematical perspective. To rigorously description the resonance phenomena, it is necessary to develop a set of  spectral tools in  this setting. The main analytical challenge comes  from the fact that the geometry of the domain varies with the size  of the  segment $I_\varepsilon$ in the wall. This does not allow a direct application of techniques worked out in so-called soft waveguide models, cf.~\cite{Kondej2024, KondejSlipko2024}. To overcome this difficulty, we develop several auxiliary results using perturbation theory, a generalized version of the Implicit Function Theorem, analytic continuation methods for operators, and asymptotic techniques that allow for the analysis of the asymptotic behavior of resonance poles.
We believe that the mathematical framework developed here and applied to quantum systems may be of interest to both the mathematical and physical communities, as it provides insight into the relation  between geometry, spectral theory, and resonance phenomena.

To conclude, let us briefly comment on the current state of the art concerning the similar problems.
There exists a large number of papers  on Dirichlet waveguides; see, for example, \cite{EK2015} and references therein. The spectral structure of various types of waveguides has been thoroughly investigated, particularly in relation to how the geometry of a waveguide influences the existence of discrete spectrum — see, e.g., \cite{DuEx, EKK, EK2015}.
Methods for analyzing waveguides with spatially varying geometry have been developed in \cite{Post12}.

Embedded eigenvalues and resonances have been studied from various perspectives
in waveguide-like structures and in configuration spaces with interactions
localized on lower-dimensional subsets; see, for instance,
\cite{Cuenin20, Hsu16, KondejKrejcirikreson, Krejcirik23, Lipovsky2017, Posilicano22, Persson98}.
Resonant tunneling effects in waveguides weakly coupled through small geometric
openings were investigated, among others, in \cite{BorisovExnerGolovina2013},
where resonances generated by distant Dirichlet barriers were analyzed, as well
as in \cite{ChesnelNazarov21, DelitsynGrebenkov2022}, which studied almost complete
transmission through perforated screens or barriers with small apertures.
In particular, Delitsyn and Grebenkov \cite{DelitsynGrebenkov2022} demonstrated how
specific geometric configurations lead to resonant enhancement of reaction rates,
revealing close connections between diffusion dynamics, spectral properties of
the associated operators, and near-complete transmission effects.

The paper is organized as follows. In Section~\ref{sec-Description}, we formulate the problem in a two-dimensional setting and obtain preliminary spectral results. In Section~\ref{sec-Resonances}, we analyze and discuss  the existence of resonances. Section~\ref{Asymptotics} is devoted to the asymptotic analysis of the perturbative components of the resonance poles. In Section~\ref{sec-3D}, we extend the problem to a three-dimensional setting and present the main results. Finally, Section~\ref{sec-final} contains conclusions, final remarks, and open questions.

\section{Description of model and preliminary results in two-dimensional system }\label{sec-Description}

\subsection{Geometry of  planner waveguide and Hamiltonian of system}
We consider a one-sided infinite, planar waveguide with a width of $d_2$:
$$ \Sigma:= \{(x_1, x_2 )\,:\, x_1 \in [0, \infty )\,,\,\,\,x_2 \in [\,0 , d_2] \, \}\,. $$
Furthermore, we equip $\Sigma $ with a partition $I_\varepsilon \subset \Sigma$, with a aperture of the size  $\varepsilon \geq 0$, defined by
$$
I_\varepsilon := \{(d_1, x_2) \,:\, x_2 \in [\,0\,,t\,] \cup [\,t+\varepsilon\,, d_2 \,] \}\,,
$$
where $d_1>0$, $t\geq 0$ and $t+\varepsilon \leq d_2 $. For a special case $\varepsilon =0$ we have
$$
I_0= \{(d_1, x_2) \,:\, x_2 \in [\,0\,,d_2\,]\, \}\,.
$$
In the following we will use the notation $\mathcal C =\{ (x_1, x_2)\,:\, x_1\in [\,0\,,d_1\,]\,, \,\,x_2 \in [\,0\,,d_2\,] \}$.

The Hamiltonian in the waveguide system  is defined as
$$
-\Delta^D_{I_\varepsilon} \,:\, W^{2,2}_0 (\Sigma \setminus I_\varepsilon )\to L^2 (\Sigma)\,,
$$
where $ W^{2,2}_0 (\Sigma \setminus I_\varepsilon)$ denote the Sobolev space with Dirichlet boundary conditions on $ \partial \Sigma \cup I_\varepsilon$.

\subsection{Preliminary results for the special cases}

We consider two special cases: first $\varepsilon =d_2$, with $t=0$ and second $d_1>0$, $\varepsilon =0$. In the first case, we deal with the waveguide without a cavity, while in the second one, the cavity is determined by an entirely closed  hard wall located at the point $(d_1, x_2)$, $x_2\in [0, d_2]$.

The Hamiltonian corresponding to the situation $\varepsilon =d_2$ is given by $-\Delta ^D\equiv -\Delta ^D_{I_{d_2}}\,:\, W^{2,2}_0 (\Sigma )\to L^2 (\Sigma )$. Its resolvent $R(z):=(-\Delta ^D - z)^{-1}$  is an integral operator with the  kernel (Green's function) given by
\begin{equation}\label{eq-reslvent_free}
G (z;(x_1 , x_2), (y_1, y_2)):= \sum _{n=1}^\infty G_n (z;x_1,y_1 ) e^0_n(x_2) e^0_n ( y_2)\,,
\end{equation}
where \begin{equation}\label{eq-G_k}
    G_n (z;x_1,y_1) = \frac{\mathrm e ^{i\sqrt{z-(\pi n/d_2)^2}|x_1 + y_1|}}{i\sqrt{z-(\pi n/d_2)^2}}-\frac{\mathrm e ^{i\sqrt{z-(\pi n/d_2)^2}|x_1 - y_1|}}{i\sqrt{z-(\pi n/d_2)^2}}\,,\quad \Im \sqrt{z-(\pi n/d_2)^2} >0\,,
\end{equation}
and  $n\in \mathbb N$, with $\mathbb N$ defined as positive  integers; (we keep the  absolute value in the above formula although $x_1+y_1>0$), moreover
\begin{equation}
    e^0_n (\cdot ):= \sqrt{\frac{2}{d_2}}\sin \Big(\frac{\pi n }{d_2} \,\cdot \Big)\,.
\end{equation}
We define also the operator $R^\varepsilon (z)\,:\, L^2 (\Sigma )\to L^2 (\Sigma)$ with the  integral kernel
\begin{equation}\label{eq-G_k_gen}
G ^\varepsilon (z;(x_1 , x_2), (y_1, y_2)):= \sum _{n=1}^\infty G_n (z;x_1,y_1) e^\varepsilon_{n}( x_2) e^\varepsilon_{n}
( y_2)\,,
\end{equation}
where
\begin{equation}\label{eq-e_eps}
    e_{n}^{ \varepsilon} (\cdot ):= e^0_{n} (\cdot )\chi_{ I_\varepsilon } (\cdot )=\sqrt{\frac{2}{d_2}}\sin  \Big(\frac{\pi n }{d_2} \,\cdot \Big)\,\chi_{ I_\varepsilon} (\cdot )\,.
\end{equation}
and $\chi_{ I_\varepsilon} (\cdot )$ stands for the characteristic function of $ I_\varepsilon $.

To formulate the first results we define  $K^\varepsilon (z)\,: L^2 (I_\varepsilon)\to L^2 (I_\varepsilon)$ as an integral operator with the kernel
and $K^\varepsilon (z; x_2, y_2)$ given by
\begin{equation}\label{eq-def_K}
K^\varepsilon (z; x_2, y_2):=G ^\varepsilon (z;(d_1,x_2), (d_1, y_2))= \sum _{n=1}^\infty G_n (z;d_1, d_1) e_{n}^\varepsilon ( x_2) e_{n}^\varepsilon
( y_2)\,,
\end{equation}
where
$$G_n (z; d_1, d_1)= \frac{\mathrm e ^{i2d_1 \sqrt{z-(\pi n/d_2)^2}}}{i\sqrt{z-(\pi n/d_2)^2}}-\frac{1}{i\sqrt{z-(\pi n/d_2)^2}}\,,$$
cf.~(\ref{eq-G_k}) and (\ref{eq-G_k_gen}).
In fact, operator $K^\varepsilon(z)$ determines a bilateral embedding of $R^\varepsilon(z)$ to the space $L^2 (I_\varepsilon)$.


Throughout the paper, the operator $K^\varepsilon(z)$ is defined as a bounded
operator on $L^2(I_\varepsilon)$ equipped with its natural topology.
However, in the Birman--Schwinger argument below, in order to ensure
injectivity and sufficient regularity of the associated single-layer
potential, we consider $K^\varepsilon(z)$ as acting on the space
$H^{1/2}(I_\varepsilon)$.

\begin{theorem}\label{th-BS}
A number $z_0$ is an eigenvalue of $-\Delta^D_{I_\varepsilon}$ if and only if
\[
\ker K^\varepsilon(z_0)\neq\{0\},
\]
where the operator $K^\varepsilon(z)$ is considered as acting in
$H^{1/2}(I_\varepsilon)$.
Moreover,
\begin{equation}\label{eq-BS}
\dim \ker \bigl(-\Delta^D_{I_\varepsilon}-z_0\bigr)
=
\dim \ker K^\varepsilon(z_0).
\end{equation}
\end{theorem}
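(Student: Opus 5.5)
The plan is to prove the statement by a Birman--Schwinger type argument built on single-layer potentials. The idea is to view $-\Delta^D_{I_\varepsilon}$ as the free Dirichlet Laplacian $-\Delta^D$ in $\Sigma$ with an additional Dirichlet condition imposed on $I_\varepsilon$. For $z_0$ not in the spectrum of $-\Delta^D$, i.e.\ $z_0\notin[(\pi/d_2)^2,\infty)$, the two directions go as follows.

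\emph{Eigenfunction to kernel element.} Take $f\in \ker(-\Delta^D_{I_\varepsilon}-z_0)$, extended by zero outside $\Sigma$. Then $f$ solves $(-\Delta-z_0)f=0$ on $\Sigma\setminus I_\varepsilon$, vanishes on $\partial\Sigma\cup I_\varepsilon$, and is continuous across $I_\varepsilon$. Its normal derivative, however, jumps across $I_\varepsilon$. Set
\[
\psi:=\big[\partial_{x_1} f\big]_{I_\varepsilon},
\]
the jump of $\partial_{x_1}f$ across the segment. Green's formula applied on both sides of $I_\varepsilon$ then gives the representation
\[
f(x)=c\int_{I_\varepsilon} G(z_0;x,(d_1,y_2))\,\psi(y_2)\,dy_2,
\]
up to a normalising constant $c$ fixed by the choice of Green's function. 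Since $\psi$ is supported in $I_\varepsilon$, we have $e^0_n\psi=e^\varepsilon_n\psi$ there, so $G$ may be replaced by $G^\varepsilon$. Taking the trace on $I_\varepsilon$, where $f=0$, gives $K^\varepsilon(z_0)\psi=0$. Moreover $\psi\neq 0$: otherwise $f$ would be an eigenfunction of $-\Delta^D$ on $\Sigma$, which has none.

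\emph{Kernel element to eigenfunction.} Conversely, for $\psi\in\ker K^\varepsilon(z_0)$, define $f$ by the single-layer formula above. Using the series (\ref{eq-G_k_gen}) and the exponential decay of $G_n$ in $n$ away from $x_1=d_1$, one shows that $f\in W^{1,2}_0(\Sigma)$ and that $(-\Delta-z_0)f=0$ off $I_\varepsilon$. The condition $K^\varepsilon(z_0)\psi=0$ says exactly that $f|_{I_\varepsilon}=0$. Hence $f$ lies in the form domain of $-\Delta^D_{I_\varepsilon}$ and satisfies the weak eigenvalue equation. Finally, $f\neq0$, because the jump of $\partial_{x_1}f$ across $I_\varepsilon$ recovers $\psi$.

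The two maps $f\mapsto\psi$ and $\psi\mapsto f$ are linear and mutually inverse, which yields the equality of dimensions (\ref{eq-BS}).

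The main obstacle is functional-analytic, and it has two parts.

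\textbf{Trace-space matching.} The jump $\psi$ of the normal derivative naturally lies in $H^{-1/2}$, and near the endpoints of $I_\varepsilon$ (the edges of the aperture) it carries the typical singularity of a crack-tip solution. The statement, however, requires working in $H^{1/2}(I_\varepsilon)$, the space in which $K^\varepsilon$ is considered. I would handle this through the mapping properties of the single-layer operator on a segment: $K^\varepsilon(z)$ is a weakly singular logarithmic-kernel operator, bounded from $\widetilde H^{-1/2}(I_\varepsilon)$ to $H^{1/2}(I_\varepsilon)$ and Fredholm of index zero. Read in these spaces, both directions of the correspondence are well defined. I would therefore interpret the kernel of $K^\varepsilon(z_0)$ with $K^\varepsilon$ acting on these densities, and use that injectivity and the dimension count are unaffected by which of these spaces one uses.

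\textbf{Embedded eigenvalues.} For $z_0$ inside $[(\pi/d_2)^2,\infty)$ there is no free resolvent. For these real $z_0$ I would define $K^\varepsilon(z_0)$ by continuing the formula, taking the branch $\Im\sqrt{\cdot}\ge0$ and $\sqrt{z-(\pi n/d_2)^2}>0$ for the finitely many open channels. An $L^2$ eigenfunction has vanishing coefficients on all open channels in the region $x_1>d_1$, by a Rellich-type argument, so its representation involves only the closed channels. For those channels the argument above goes through unchanged, since their Green's functions $G_n$ still decay exponentially in $x_1$.
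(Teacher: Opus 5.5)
Your route is the paper's: represent an eigenfunction as a single-layer potential over $I_\varepsilon$ whose density is the jump $\psi$ of $\partial_{x_1}f$, restrict to $I_\varepsilon$ to get $K^\varepsilon(z_0)\psi=0$, and reverse the construction. The paper's proof runs these two steps without addressing either difficulty you flag, so raising them is right. However, your treatment of the first difficulty fails, and that case is the only one that matters.

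Since $H^1_0(\Sigma\setminus I_\varepsilon)\subset H^1_0(\Sigma)$ and $\|\partial_{x_2}u\|^2\ge(\pi/d_2)^2\|u\|^2$ there, one has $-\Delta^D_{I_\varepsilon}\ge(\pi/d_2)^2$. Equality is impossible for an eigenfunction, because it would force $\partial_{x_1}u=0$. Hence every eigenvalue lies in $((\pi/d_2)^2,\infty)$, and your main argument only shows that $K^\varepsilon(z_0)$ is injective off the spectrum.

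In your last paragraph, the claim that an eigenfunction's representation involves only closed channels is false. Take an open channel $n$, with $k_n=\sqrt{z_0-(\pi n/d_2)^2}>0$. Then $f_n=\alpha_n\sin(k_nx_1)$ in the cavity and $f_n=0$ beyond the wall. Continuity gives $\alpha_n\sin(k_nd_1)=0$, so the density component $(e^\varepsilon_n,\psi)$, which up to sign is $\alpha_nk_n\cos(k_nd_1)$, has modulus $|\alpha_n|k_n\neq0$ whenever $\alpha_n\neq0$. For instance, for $\varepsilon=0$ and $z_0=\xi_{l,k}$ the eigenfunction lies entirely in the open channel $k$. In the forward direction this is repairable. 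Keep all channels with the outgoing branch; Green's formula still closes at $x_1=L\to\infty$, because $f_n$ vanishes beyond the wall for every open $n$. That needs only $f\in L^2$, not a Rellich argument.

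The converse is where an idea is genuinely missing. For $\psi\in\ker K^\varepsilon(z_0)$, the potential $S\psi:=\int_{I_\varepsilon}G(z_0;\cdot,(d_1,y_2))\psi(y_2)\,\mathrm dy_2$ contains, for $x_1>d_1$ and each open $n$, the term $\frac{2\sin(k_nd_1)}{k_n}(e^\varepsilon_n,\psi)\,e^{ik_nx_1}e^0_n(x_2)$. These terms are not square integrable, and exponential decay of the closed channels does nothing about them. The missing step is a flux identity. $G_n(z_0;d_1,d_1)$ is real for closed channels, and $\Im G_n(z_0;d_1,d_1)=2\sin^2(k_nd_1)/k_n$ for open ones, hence
\[
0=\Im\big\langle\psi,K^\varepsilon(z_0)\psi\big\rangle=\sum_{n\ \mathrm{open}}\frac{2\sin^2(k_nd_1)}{k_n}\,\big|(e^\varepsilon_n,\psi)\big|^2 .
\]
This forces every outgoing amplitude to vanish and puts $S\psi$ in $L^2$. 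This is the Rellich-type input, and it belongs in this direction. At thresholds $z_0=(\pi n/d_2)^2$ the identity says nothing about the threshold channel, whose part of $S\psi$ is constant in $x_1$, so those points need separate treatment.

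Finally, you assert that injectivity and the kernel dimension do not depend on whether densities are taken in $H^{1/2}(I_\varepsilon)$ or in $\widetilde H^{-1/2}(I_\varepsilon)$; this is unsupported. As you observe, densities generically behave like $w^{-1/2}$ at the aperture edges, where $w$ is the distance to the edge. So they are not even in $L^2(I_\varepsilon)$, and the kernel taken in $H^{1/2}(I_\varepsilon)$ can be strictly smaller. What your argument proves is the statement for $K^\varepsilon(z_0)\colon\widetilde H^{-1/2}(I_\varepsilon)\to H^{1/2}(I_\varepsilon)$, and you should say so explicitly.
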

\begin{proof}
Assume first that $f\in\ker K^\varepsilon(z_0)\subset H^{1/2}(I_\varepsilon)$.
Define
\begin{equation}\label{eq-def_g}
g(x_1,x_2)
:=R(z_0)\big|_{\Sigma,I_\varepsilon}f
=\int_{I_\varepsilon}
G^\varepsilon\bigl(z_0;(x_1,x_2),(d_1,y_2)\bigr)f(y_2)\,\mathrm dy_2 ,
\end{equation}
By the mapping properties of the single-layer potential
(see \cite[Thm.~4.18]{McLean}),
the function $g$ belongs to $H^1_{\mathrm{loc}}(\Sigma\setminus I_\varepsilon)$
and satisfies $(-\Delta^D-z_0)g=0$ there.
Interior elliptic regularity (see \cite[Thm.~2.27]{McLean}) then yields\footnote{ Although the mapping properties of the single-layer potential are usually
formulated for densities supported on closed Lipschitz hypersurfaces,
the same argument applies verbatim on every open set
$U\Subset\Sigma\setminus I_\varepsilon$, since $\mathrm{dist}(U,I_\varepsilon)>0$
and the Green kernel is smooth in the field variable away from the support.}
$g\in H^2(\Sigma\setminus I_\varepsilon)$.
\\
For $(x_1,x_2)\in I_\varepsilon$, that is $x_1=d_1$ and
$x_2\in[0,t]\cup[t+\varepsilon,d_2]$, we have
\[
g(d_1,x_2)=(K^\varepsilon(z_0)f)(x_2)=0,
\]
and clearly $g=0$ on $\partial\Sigma$.
Therefore $g\in D(-\Delta^D_{I_\varepsilon})$.
Moreover,
\begin{equation}\label{eq-ee}
(-\Delta^D_{I_\varepsilon}-z_0)g=0
\qquad \text{in }\Sigma\setminus I_\varepsilon.
\end{equation}
Thus each $f\in\ker K^\varepsilon(z_0)$ gives rise to an eigenfunction
$g\in\ker(-\Delta^D_{I_\varepsilon}-z_0)$.

Conversely, let $g\in D(-\Delta^D_{I_\varepsilon})$ satisfy \eqref{eq-ee}.
Then $g=0$ on $\partial\Sigma\cup I_\varepsilon$.
By Green's representation formula for the Dirichlet Laplacian,
there exists $f\in H^{1/2}(I_\varepsilon)$ such that
\[
g=R(z_0)\big|_{\Sigma,I_\varepsilon}f .
\]
Restricting $g$ to $I_\varepsilon$ yields
\[
0=g(d_1,x_2)=(K^\varepsilon(z_0)f)(x_2),
\qquad x_2\in[0,t]\cup[t+\varepsilon,d_2],
\]
and hence $f\in\ker K^\varepsilon(z_0)$.

The above construction establishes a one-to-one correspondence between
$\ker K^\varepsilon(z_0)$ and
$\ker(-\Delta^D_{I_\varepsilon}-z_0)$, which proves \eqref{eq-BS}.
\end{proof}

\begin{remark} \rm{
Using the argument from the proof of Theorem~\ref{th-BS}, one concludes that
for any $z$ belonging to the resolvent set $\rho(-\Delta^D_{I_\varepsilon})$
and any function $g\in D(-\Delta^D_{I_\varepsilon})$, there exists
$f\in W^{2,2}(\Sigma)\cap W^{1,2}_0(\Sigma)$  such that
\begin{equation}\label{eq-frep}
g
=
f
-
R(z)\big|_{\Sigma,I_\varepsilon}\,
K^\varepsilon(z)^{-1}\,
\mathcal I f .
\end{equation}
Here $\mathcal I$ denotes the trace operator
\[
\mathcal I : W^{2,2}(\Sigma)\longrightarrow H^{1/2}(I_\varepsilon),
\]
and the operator $K^\varepsilon(z)$ is invertible since
$z\in\rho(-\Delta^D_{I_\varepsilon})$.
Moreover, the representation \eqref{eq-frep} holds in
$W^{2,2}(\Sigma\setminus I_\varepsilon)$.

Note that  Theorem~\ref{th-BS} represents the limiting case of the Birman--Schwinger principle for a regular or delta potential, cf.~\cite{BEG, BEKS, Gesztesy, HamsmanKrejcirik, Po1}. In the whole-space case, analogous settings are described by Kre\u{\i}n-type resolvent formulae,
linking resolvent poles to Birman--Schwinger conditions; see \cite[Sec.~7.1]{{Mantile2018}}.}
For example, if the  delta barrier is defined by a  coupling constant $\alpha > 0$ and it is localized  on $I_\varepsilon$, then the Birman--Schwinger principle takes the form
\[
\ker\left[\alpha^{-1} + K^\varepsilon(z)\right] \neq \emptyset\,.
\]
Formally taking the limit $\alpha \to \infty$, which models Dirichlet boundary conditions, we get the  equation~\eqref{eq-BS}.
\end{remark}
\subsection{Embedded eigenvalues  }

In a special case, if $\varepsilon =0$, i.e. the cavity $\mathcal C := \{(x_1, x_2)\,:\, x_1\in [0, d_1]\,, x_2\in [0, d_2]\}$ is closed, then  the Hamiltonian
$-\Delta ^D_{I_0}$
decouples
$$
-\Delta ^D_{I_0} = -\Delta ^{D} (\mathcal C) \,\dot +\, \big(-\Delta ^{D} (\Sigma \setminus \mathcal C)\big)\,,
$$
where $\Delta ^{D} (\mathcal C)$ is the Dirichlet Laplacian in $L^2 (\mathcal C)$ and $\Delta ^{D} (\Sigma \setminus \mathcal C)$ in $L^2 (\Sigma \setminus \mathcal C)$.
The operator $-\Delta ^D_{\mathcal C}$ has discrete spectrum given by
\begin{equation} \label{eq-disc_cavity}
\xi_{l,k}:=\Big( \frac{\pi l}{d_1} \Big)^2+ \Big( \frac{\pi k}{d_2} \Big)^2\,,\quad k,l \in \mathbb N \,.
\end{equation}
On the other hand, the essential spectrum   of the Hamiltonian in waveguide
has a form of the half line
and remains stable in the presence of the cavity, i.e.
\begin{equation}\label{eq-ess}
   \sigma_{\mathrm {ess}} (-\Delta^D )= \sigma_{\mathrm {ess}}  (-\Delta^D(\Sigma \setminus \mathcal C))=\sigma_{\mathrm {ess}} (-\Delta^D_{I_0}) = \sigma_{\mathrm {ess}} (-\Delta^D_{I_\varepsilon})= \Big[\,\frac{\pi^2 }{d_2^2}\,, \infty \,\Big)\,.
    \end{equation}
This means that all the numbers (\ref{eq-disc_cavity}) determine the embedded eigenvalues of $-\Delta ^D_{I_0}$.
Note, that  this result is consistent with Theorem~\ref{th-BS}. Indeed, assume that $z_0 =\xi_{l,k}$ and note  $ \mathrm e^{i2d_1\sqrt{\xi_{l,k}-(\pi k/d_2)^2}} =\mathrm e^{i2\pi l }= 1$. Therefore
$$
G_k (z_0=\xi_{l,k}; d_1, d_1 ) = 0\,.
$$
This implies  $e^0_k \in \ker K^0 (z_0=\xi_{l,k})$. The function
$$g(x_1, x_2) = \int_{I_0} G(z_0 =\xi_{l,k}; (x_1, x_2), (d_1, y_2) )\, e^0_k (y_2)\,\mathrm d y_2 \,\in L^2\, (\mathcal C)\,,$$
can be expanded as follows
\begin{align}  \nonumber
g (x_1, x_2)&=
\sum_{n=1}^\infty  \int_{I_0}G_n (z_0 =\xi_{l,k}; x_1, d_1)e^0_{n} (x_2) e^0_{n} (y_2) e^0_{k} (y_2) \mathrm d y_2\\ \nonumber &=
G_k (z_0 =\xi_{l,k}; x_1, d_1)e^0_{k} (x_2) =
d_1\Big(\frac{ \mathrm e ^{i \pi l |x_1 + d_1|/d_1}}{i\pi l}-\frac{ \mathrm e ^{i\pi l|x_1 - d_1|/d_1}}{i\pi l} \Big) e^0_{k} (x_2)\\ \nonumber
&=\frac{d_1}{i\pi l}\big(\mathrm e ^{i  \pi l (x_1 + d_1)/d_1}-\mathrm e ^{-i\pi l(x_1 - d_1)/d_1 }\big) e^0_{k} (x_2)=
\frac{2 d_1}{\pi l}\mathrm e ^{il\pi} \sin \Big(\frac{\pi l x_1}{d_1}\Big) e^0_{k} (x_2)\\ \nonumber &= \frac{d_1^{3/2}}{\pi l\sqrt{2}} \mathrm e ^{i\pi l}
e^0_l (x_1) e^0_{k} (x_2)
\,,
\end{align}
where $e^0_l (x_1) = \sqrt{\frac {2}{d_1}}  \sin \Big(\frac{\pi l x_1}{d_1}\Big)$, i.e. we use the same notation for the longitudinal and transverse cavity eigenstates, remembering that the former is associated with $d_1$, and the latter with
$d_2$. The above calculus shows that $e^0_l (\cdot ) e^0_{k} (\cdot )\chi_{\mathcal C\setminus \Sigma }$ determine the eigenfunctions  of $-\Delta^D_{I_0}$ corresponding to
$$
\xi_{l,k}=\Big(\frac{\pi l }{d_1}\Big)^2+\Big(\frac{\pi k }{d_2}\Big)^2\,,
$$
which are embedded into essential spectrum, see~(\ref{eq-ess}).

\begin{remark} {Degeneracy of the embedded eigenvalues.}
\rm{It is well known that the higher dimensional infinite  well potential can admit degenerate eigenstates and,
from the perspective of further discussion, it is important to address this issue here.

The simplest example occurs when $d_1=d_2$. In this case, any eigenvalue
$\xi_{l,k}$ is at least doubly degenerate, except for the diagonal case
$l=k$, corresponding to the pair of eigenfunctions
$e_l^0(x_1)e_k^0(x_2)$ and $e_k^0(x_1)e_l^0(x_2)$.
Of course, higher degrees of degeneracy are also possible. For example, if for a pair $(l,k)$ there exists another pair $(l',k')$ such that
\begin{equation}\label{eq-deg}
l^2 + k^2 = l'^2 + k'^2\,,
\end{equation}
then the corresponding energy level is degenerate. An example of such a case is $(l,k) = (7,1)$ and $(l',k') = (5,5)$, since
$$
7^2 + 1 = 5^2 + 5^2 = 50\,.
$$
Therefore, the energy level corresponding to $\frac{\pi^2}{d_1^2}  50$ has a degeneracy of $3$. In the case of the pair $ (1,18)$ there exist two more pairs  that satisfy~(\ref{eq-deg}); namely, $(6,17)$ and $(10,15)$, i.e. the state corresponding to $\frac{\pi^2}{d_1^2}325$ has  a degeneracy of $6$.}
\end{remark}

\section{Resonances caused by an aperture in cavity}\label{sec-Resonances}
In fact, Theorem~\ref{th-BS} shows that the problem of determining the eigenvalues of the operator \( -\Delta^D_{I_\varepsilon} \), i.e., the real poles of its resolvent, can be reformulated in terms of the kernel of \( K^\varepsilon(z) \). In this section, we take this analysis one step further and ask whether the operator \( K^\varepsilon(z) \) admits an analytic continuation to the second Riemann sheet, and whether there exists a point \( z \) with \( \Im z \leq 0 \) such that \( \ker K^\varepsilon(z) \neq \emptyset \).

The operator $K^\varepsilon (z)$, see (\ref{eq-def_K}), can written as
 \begin{equation} \label{eq-defK_ext}
 K^\varepsilon (z) = \sum_{n=1}^\infty G_n (z;d_1, d_1) P_n^\varepsilon\,:\, L^2 (I_\varepsilon) \to L^2 (I_\varepsilon)\,,
 \end{equation}
where  $P_n^\varepsilon = (e^\varepsilon_n, \cdot )_{L^2 (I_\varepsilon)} e^\varepsilon_n$.
Although \( K^\varepsilon(z) \) acts on the space \( L^2(I_\varepsilon) \), however it can be naturally extended to the space \( L^2(I_0) \) by determining  the functions \( \{e^\varepsilon_k\}_{k=1}^\infty \), cf.~(\ref{eq-e_eps}), as elements of \( L^2(I_0) \).
Of course, the functions \( e^\varepsilon_k \) do not form a basis in \( L^2(I_0) \).

In the following discussion, we will also use various modifications of \( e^0_k \).  Taking the opportunity, before proceeding with the analysis, we would like to clarify some related  notions:
\begin{equation} \label{eq-eaux}
e_k^\varepsilon= e_k^0\,\,\chi_{I_\varepsilon}\,,\quad \breve e_k^\varepsilon:= e_k^0\,\,\chi_{\bar I_\varepsilon}\,, \quad k\in \mathbb N\,.
\end{equation}

\subsection{Analytic continuation of $K^\varepsilon (z)$}

Our aim is to investigate $\ker K^\varepsilon(z)$ for values of $z$ close to the original embedded eigenvalue $\xi_{l,k}$. For this aim, we introduce the notation $\mathcal{B}(\xi_{l,k})$ to denote a small neighborhood of $\xi_{l,k}$, and we assume that $z \in \mathcal{B}(\xi_{l,k})$, i.e., $z = \xi_{l,k} + \delta$ for some $\delta \in \mathbb{C}$ with $|\delta|$ sufficiently small.
To build the analytic continuation, we distinguish two cases. Namely, for the components in (\ref{eq-defK_ext}) labeled by $n$ such that
$$\Big(\frac{\pi n }{d_2}\Big)^2 <\xi_{l,k}= \Big(\frac{\pi l }{d_1}\Big)^2+\Big(\frac{\pi k}{d_2}\Big)^2 \,,
$$ we construct the analytic continuation of $\mathcal{B} (\xi_{l,k}) \ni z \mapsto \sqrt{z - (\pi n/d_2)^2}$ to the lower second sheet through $[(\pi n/d_2)^2\,, \,\infty)$. Then, $\Im \sqrt{z - (\pi n/d_2)^2} < 0$ when $\delta$ moves to the lower half-plane.

On the other hand, for $n$ such that $(\pi n /d_2)^2 > \xi_{l,k} $, the function $\mathcal{B} (\xi_{l,k}) \ni z \mapsto \sqrt{z - (\pi n/d_2)^2}$ is analytic for both positive and negative $\Im \,\delta$ and does not require  a special  analytic continuation.

 If \( (\pi n /d_2)^2 = \xi_{l,k} \) and \( \Re \,\delta > 0 \), then we proceed as in the first case, i.e. in this situation, \( \Im \sqrt{z - (\pi n /d_2)^2} <0 \), and if \( \Re \,\delta < 0 \), then the square root expression leaves on the first Riemann sheet, i.e. \( \Im \sqrt{z - (\pi n /d_2)^2} >0 \). Note that in this case, \( G_n(z; d_1, d_1) = \mathcal{O}(\delta) \), and if \( \delta = 0 \), then \( G_n(z; d_1, d_1) = 0 \). The above construction provides the  the analytic continuation of (\ref{eq-defK_ext}). We apply these constructions to
\[
G_n (z; d_1,d_1) = \frac{\mathrm{e}^{i2d_1\sqrt{z - (\pi n/d_2)^2}}}{i\sqrt{z - (\pi n/d_2)^2}} - \frac{1}{i\sqrt{z - (\pi n/d_2)^2}}\,,
\]
where we use the same notation for the analytic continuation to the lower half-plane.

Assume that $z\in \mathcal B (\xi_{l,k})$ and employ  a special decomposition that will play an essential role in further discussion:
\begin{equation} \label{eq-K_decomposition}
K^\varepsilon (z)= K^0(z) +H^\varepsilon(z)\,,
\end{equation}
where
\begin{equation}
    \label{eq-decomH}
K^0 (z)=\sum_{n=1}^\infty G_n (z;d_1,d_1)P_n^0 \,,\quad H^\varepsilon (z) := \sum_{n=1}^\infty G_n  (z) (P^\varepsilon _{n} - P^0_{n})\,,
\end{equation}
and $P^0_n = (e^0_n\,, \cdot )_{L^2 (I_0)} e^0_n$
and $P^\varepsilon_n = (e^\varepsilon _n\,, \cdot )_{L^2 (I_0)} e^\varepsilon_n$, i.e. all the operators contributing (\ref{eq-decomH}) are defined in the space $L^2 (I_0)$; see discussion after formula (\ref{eq-defK_ext}).

\begin{lemma} \label{le-norm_1} 
The operator $ H^\varepsilon (z)$ is bounded and its norm admits the following asymptotics
\begin{equation} \label{eq-asymH}
\| H^\varepsilon(z) \|_{L^2(I_0) \to L^2(I_0)} = \mathcal{O}(\varepsilon^{1/2}) \,,
\end{equation}
that is uniform for all $z \in \mathcal{B}(\xi_{k,l})$.
\end{lemma}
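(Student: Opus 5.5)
The plan is to write the perturbation $P^\varepsilon_n-P^0_n$ explicitly in terms of the functions $\breve e^\varepsilon_n$ from (\ref{eq-eaux}), and then to exploit the orthonormality of $\{e^0_n\}$ in $L^2(I_0)$. Orthonormality is essential. A crude term-by-term estimate $\sum_n |G_n|\,\|\breve e^\varepsilon_n\|\,\|e^0_n\|$ would only yield a harmonic-type series, which diverges.

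\emph{Step 1 (uniform bound on $G_n$).} Since $e^\varepsilon_n=e^0_n-\breve e^\varepsilon_n$, for $f\in L^2(I_0)$ we have
\begin{equation*}
(P^\varepsilon_n-P^0_n)f=-(\breve e^\varepsilon_n,f)\,e^0_n-(e^0_n,f)\,\breve e^\varepsilon_n+(\breve e^\varepsilon_n,f)\,\breve e^\varepsilon_n .
\end{equation*}
Hence $H^\varepsilon(z)=-A^\varepsilon(z)-B^\varepsilon(z)+C^\varepsilon(z)$, with the three operators defined by the three corresponding sums weighted by $G_n(z;d_1,d_1)$. I will first show that there is a constant $C$, independent of $z\in\mathcal B(\xi_{l,k})$, such that $|G_n(z;d_1,d_1)|\le C/n$ for all $n$.

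\begin{itemize}
\item For the finitely many $n$ with $(\pi n/d_2)^2\le \xi_{l,k}+1$, one uses the analytic continuation constructed above. The function $s\mapsto (\mathrm e^{2id_1 s}-1)/(is)$ is entire, so it is bounded on bounded sets of $s$. This also covers the threshold case $(\pi n/d_2)^2=\xi_{l,k}$, where $G_n=\mathcal O(\delta)$.
\item For the remaining $n$, we have $\sqrt{z-(\pi n/d_2)^2}=i\kappa_n$ with $\Re\kappa_n\ge c\,n>0$, uniformly for $z\in\mathcal B(\xi_{l,k})$. Therefore $|G_n|\le 2/|\kappa_n|\le C/n$.
\end{itemize}

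\emph{Step 2 (the three terms).} For $A^\varepsilon f=\sum_n G_n(\breve e^\varepsilon_n,f)e^0_n$, Parseval and Cauchy--Schwarz give
\begin{equation*}
\|A^\varepsilon f\|^2=\sum_n|G_n|^2|(\breve e^\varepsilon_n,f)|^2\le \sum_n|G_n|^2\|\breve e^\varepsilon_n\|^2\|f\|^2 .
\end{equation*}
Here $\|\breve e^\varepsilon_n\|^2\le 2\varepsilon/d_2$, so $\|A^\varepsilon\|^2\le C'\varepsilon\sum_n n^{-2}$, which gives $\|A^\varepsilon\|=\mathcal O(\varepsilon^{1/2})$.

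The operator $B^\varepsilon$ is handled by duality. Since the functions $e^0_n$ and $\breve e^\varepsilon_n$ are real-valued, $B^\varepsilon$ is the transpose of $A^\varepsilon$, i.e. the adjoint of the operator with weights $\overline{G_n}$. Consequently $\|B^\varepsilon\|=\|A^\varepsilon\|$ up to this conjugation, and the bound from Step 1 applies unchanged. Concretely, $(g,B^\varepsilon f)=\sum_n G_n(e^0_n,f)(g,\breve e^\varepsilon_n)$, and this can be estimated by Cauchy--Schwarz in $n$ exactly as above.

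Finally, $C^\varepsilon f=\chi_{\bar I_\varepsilon}\sum_n G_n(\breve e^\varepsilon_n,f)e^0_n=\chi_{\bar I_\varepsilon}A^\varepsilon f$, so $\|C^\varepsilon\|\le\|A^\varepsilon\|$.

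\emph{Step 3 (conclusion).} Summing the three estimates gives boundedness of $H^\varepsilon(z)$ together with (\ref{eq-asymH}). The bound is uniform in $z\in\mathcal B(\xi_{l,k})$, because the only $z$-dependence enters through the constant $C$ of Step 1.

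I expect the main care to be needed in Step 1, specifically in making the bound on $G_n$ uniform near the threshold indices and across the analytic continuation to the lower half-plane. The $\varepsilon$-dependence itself is elementary once the orthogonality trick is used.
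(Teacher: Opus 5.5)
Your argument is correct and follows the paper's route: the same three-term splitting of $P_n^\varepsilon-P_n^0$ via $\breve e_n^\varepsilon$, the bound $|G_n(z;d_1,d_1)|\le C/n$, a factor $\varepsilon^{1/2}$ from the small support of $\breve e_n^\varepsilon$, and Bessel/Parseval for the remaining factor. The one difference is that you bound the three operator norms directly, in Hilbert--Schmidt fashion, rather than the quadratic form $(f,H^\varepsilon(z)f)$; this spares you the passage from numerical radius to norm for the non-self-adjoint $H^\varepsilon(z)$, which the paper leaves implicit.

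One harmless slip: at a threshold index $G_n$ is not $\mathcal O(\delta)$ but tends to $2d_1$, since $(\mathrm e^{2id_1 s}-1)/(is)\to 2d_1$ as $s\to 0$. Your entire-function argument only needs boundedness there, so nothing changes.
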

\begin{proof}
All norms and scalar  products in the following proof will be understood in the space $L^2(I_0)$. For simplicity, we omit explicit reference to the space whenever there is no  confusion.

Note that employing the decomposition $e^0_n = e^\varepsilon_n +\breve e^\varepsilon_n $, see~(\ref{eq-eaux}), to
 \( P^0_{n}  = ( e^0_n\,,\cdot  )\,e^0_n \) yields
\begin{align}
     P^\varepsilon_n f-P_n^0 f= - (\breve{e}_n^\varepsilon\,, f)\, e_n^\varepsilon - (e_n^\varepsilon\,, f)\, \breve{e}_n^\varepsilon - (\breve{e}_n^\varepsilon\,,f)\, \breve{e}_n^\varepsilon\,.
\end{align}
Applying this to  \(( f\,,H^\varepsilon(z)f) \), we get
\begin{equation}
\begin{split}\label{eq-auxH}
|( f\,,H^\varepsilon(z)f) |\leq & \sum_{n=1}^\infty | G_n(z) |\,\Big[
 |(f\,,\breve{e}_n^\varepsilon )\, (e_n^\varepsilon \,,f)|\\&+ |(f\,,e_n^\varepsilon )\, (\breve{e}_n^\varepsilon\,,f)|
 + |(f\,,\breve{e}_n^\varepsilon )\, (\breve e_n^\varepsilon\,,f) |
\,\Big]\,.
\end{split}
\end{equation}
To proceed further, we begin by estimating the first of the three terms in the above bound, namely
\begin{equation}\label{eq-defT}
T: = \sum_{n=1}^\infty  |G_n(z)| \left| (f,\breve e_n^\varepsilon\,)\,( {e}_n^\varepsilon\,,f) \right|\,.
\end{equation}
Applying the Cauchy--Schwartz inequality we get
\begin{equation}\label{eq-CS1}
|(f\,,\breve {e}_n^\varepsilon )| \leq  \int_{I_0 \setminus I_\varepsilon} |\, f \,\breve{e}_n^\varepsilon \,\,| \mathrm{d}x \leq \varepsilon^{1/2} \Big( \int_{I_0 \setminus I_\varepsilon} |f \,\breve{e}_n^\varepsilon|^2\,\mathrm{ d}x \Big)^{1/2} \leq
\varepsilon^{1/2} \|f\|\,,
\end{equation}
where, in the last estimate,  we used the fact  that
$\sup |\breve{e}_n^\varepsilon | =1$.

Note that for $z\in \mathcal B(\xi_{l,k})$ and $\xi_{l,k}\neq (\pi n/d_2)^2$ we can estimate
\begin{equation}\label{eq-Gnest}
|G_n (z)|=  \frac{|\mathrm e^{i2d_1\sqrt {z-(\pi n/d_2)^2}}-1|}{|\sqrt{z-(\pi n/d_2)^2}|}\leq C \frac{1}{n}\,,
\end{equation}
where  the constant $C$ depends on $l,k, d_2$.\footnote{In the following discussion, we will use the symbol
$C$ to denote a constant, which may vary from line to line. The dependence of such constants on parameters will be explicitly indicated, and any uniformity will be stated.}
On the other hand, if there exists $n$ such that $\xi_{l,k}= (\pi n/d_2)^2$ then we have asymptotics
\( |G_n(z)| = \mathcal O (\delta ) \), where and $\delta =-\xi_{l,k}+z$. The latter can occur for one component $n$ only, therefore (\ref{eq-Gnest}) holds for any $n\in \mathbb N$.

Combining (\ref{eq-CS1}) and (\ref{eq-Gnest})
leads to
\begin{equation} \label{eq-aux5}
\begin{split}
T \leq C \,\varepsilon^{1/2} \,\|f\| \sum_{n=1}^\infty  \frac{1}{n} |(e_n^\varepsilon, f)| \leq C'\,\varepsilon^{1/2} \,\|f\| \,
\Big(\sum_{n=1}^\infty |(f,e_n^\varepsilon)|^2
\Big)^{1/2}\,,
\end{split}
\end{equation}
where $C'=C\big(\sum_{n=1}^\infty  \frac{1}{n^2} \big)^{1/2}$ and we again used the Cauchy--Schwartz inequality. Note that
$$
\sum_{n=1}^\infty |(f,e_n^\varepsilon)|^2 \leq \sum_{n=1}^\infty |(f,e_n^0)|^2 = \|f\|^2\,.
$$
Applying this to (\ref{eq-aux5}) we obtain
$$
T \leq C \,\varepsilon^{1/2} \,\|f\|^2\,.
$$
The remaining components of (\ref{eq-auxH}) can be estimated by the repeating the argument.
This yields  the asymptotics
\begin{equation}
|(f\,,(H^\varepsilon(z)f)_{L^2 (I_0)}| =\mathcal O ( \varepsilon^{1/2})\,,
\end{equation}
that is uniform with  respect to $f$ and, consequently, $\|H^\varepsilon(z)\|_{L^2 (I_0)\to L^2 (I_0)} = \mathcal O (\varepsilon^{1/2})$.
\end{proof}

\bigskip
\begin{remark}\rm{
For fixed $z$ (away from the thresholds $z=(\pi n/d_2)^2$) we have
\[
K^0(z)e_n^0 = \lambda_n(z)e_n^0,\qquad
\lambda_n(z):=G_n(z;d_1,d_1).
\]
Relying on (\ref{eq-Gnest}) we conclude that
 $K^0(z)$ is compact, since it is diagonal in  $\{e_n^0\}_{n=1}^\infty$ with eigenvalues $\to0$. Therefore every nonzero eigenvalue (or any finite set of eigenvalues) is isolated in
$\sigma(K^0(z))$, see \cite[Ch.~III, \S 6]{Kato}.

Fix $n$ such that $\lambda_n(z)\neq0$ and assume first that $\lambda_n(z)$ is simple.
Then $\delta_n(z):=\mathrm{dist}(\lambda_n(z),\sigma(K^0(z))\setminus\{\lambda_n(z)\})>0$.
Choosing $\upsilon \in(0,\delta_n(z)/2)$ and $\Gamma_n:=\{\lambda:\ |\lambda-\lambda_n(z)|=\upsilon\}$
gives a contour $\Gamma_n\subset\rho(K^0(z))$ enclosing only $\lambda_n(z)$.
Since $K^0(z)$ is normal,
\begin{equation} \label{eq-isolation}
\|(K^0(z)-\lambda)^{-1}\|
=\frac{1}{\mathrm{dist}(\lambda,\sigma(K^0(z)))}
\le \frac{1}{\upsilon},\qquad \lambda\in\Gamma_n\,,
\end{equation}
\cite[Ch.~III, \S 6]{Kato}.
In the case of a finite degeneracy, one encloses the whole finite cluster and the corresponding
Riesz projector equals the sum of the involved rank-one projectors.
}\end{remark}

\begin{lemma}\label{lem:proj-stability}
Let $I_0\subset\mathbb R$ and consider the family of bounded operators
\[
K^\varepsilon(z)=K^0(z)+H^\varepsilon(z)\qquad\text{on }L^2(I_0),
\]
defined by (\ref{eq-K_decomposition}) and (\ref{eq-decomH}) for $z\in\mathcal B(\xi_{l,k})$ and $\varepsilon\ge 0$.

Then the following statements hold for all $z\in\mathcal B(\xi_{l,k})$ and all
$0<\varepsilon\le\varepsilon_1$ for some $\varepsilon_1\in(0,\varepsilon_0]$:

\begin{enumerate}
\item[\textnormal{(1)}] $\Gamma_n$ lies in the resolvent set of $K^\varepsilon(z)$, and the Riesz projectors
\[
P_n^{K^\varepsilon}(z)
= -\frac{1}{2\pi i}\int_{\Gamma_n} (K^\varepsilon(z)-\lambda)^{-1}\,d\lambda,
\qquad
P_n^0 (z)= P_n^0
= -\frac{1}{2\pi i}\int_{\Gamma_n} (K^0(z)-\lambda)^{-1}\,d\lambda,
\]
are well-defined.

\item[\textnormal{(2)}] The projectors satisfy the estimate
\begin{equation}\label{eq:proj-difference}
\big\|P_n^{K^\varepsilon}(z)-P_n^0\big\|_{L^2(I_0)\to L^2(I_0)}
\le
\frac{|\Gamma_n|}{2\pi}
\frac{\|H^\varepsilon(z)\|_{L^2(I_0)\to L^2(I_0)}}%
{\upsilon\big(\upsilon-\|H^\varepsilon(z)\|_{L^2(I_0)\to L^2(I_0)}\big)}.
\end{equation}

\item[\textnormal{(3)}] In particular,
\begin{equation}\label{eq:proj-Oeps}
\big\|P_n^{K^\varepsilon}(z)-P_n^0\big\|_{L^2(I_0)\to L^2(I_0)}
= \mathcal O(\varepsilon^{1/2})
\qquad\text{as }\varepsilon\to 0.
\end{equation}
\end{enumerate}
\end{lemma}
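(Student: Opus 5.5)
The plan is standard Kato-type perturbation theory for Riesz projectors, built on the resolvent bound \eqref{eq-isolation} and on Lemma~\ref{le-norm_1}.

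\textbf{Choice of $\varepsilon_1$.} By Lemma~\ref{le-norm_1} there is a constant $C$, independent of $z\in\mathcal B(\xi_{l,k})$, with $\|H^\varepsilon(z)\|\le C\varepsilon^{1/2}$. We choose $\varepsilon_1\in(0,\varepsilon_0]$ so small that $C\varepsilon_1^{1/2}\le \upsilon/2$, so that $\|H^\varepsilon(z)\|\le\upsilon/2<\upsilon$ for all $0<\varepsilon\le\varepsilon_1$.

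\textbf{Statement (1).} For $\lambda\in\Gamma_n$ we factor
\[
K^\varepsilon(z)-\lambda=(K^0(z)-\lambda)\bigl(I+(K^0(z)-\lambda)^{-1}H^\varepsilon(z)\bigr).
\]
By \eqref{eq-isolation}, $\|(K^0(z)-\lambda)^{-1}H^\varepsilon(z)\|\le \|H^\varepsilon(z)\|/\upsilon<1$. A Neumann series then shows that $K^\varepsilon(z)-\lambda$ is invertible, with
\[
\|(K^\varepsilon(z)-\lambda)^{-1}\|\le \frac{1}{\upsilon-\|H^\varepsilon(z)\|}.
\]
Hence $\Gamma_n\subset\rho(K^\varepsilon(z))$. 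The resolvent is continuous in $\lambda$ along the compact contour, so the Riesz integrals are well defined. For $K^0(z)$, which is normal and diagonal in $\{e_m^0\}$, the projector equals the spectral projector onto the enclosed eigenvector(s), i.e.\ $P_n^0$ (or the sum over the finite cluster in the degenerate case).

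\textbf{Statement (2).} We use the second resolvent identity
\[
(K^\varepsilon-\lambda)^{-1}-(K^0-\lambda)^{-1}=-(K^\varepsilon-\lambda)^{-1}H^\varepsilon(K^0-\lambda)^{-1}.
\]
Its norm is at most $\|H^\varepsilon\|/\bigl(\upsilon(\upsilon-\|H^\varepsilon\|)\bigr)$. Integrating over $\Gamma_n$ and bounding the integral by the contour length $|\Gamma_n|$ times the supremum, with the factor $1/2\pi$, gives \eqref{eq:proj-difference}.

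\textbf{Statement (3).} Since $\|H^\varepsilon\|\le\upsilon/2$, the denominator in \eqref{eq:proj-difference} is at least $\upsilon^2/2$. Therefore the difference is bounded by $\frac{|\Gamma_n|}{\pi\upsilon^2}\,C\varepsilon^{1/2}=\mathcal O(\varepsilon^{1/2})$, again by Lemma~\ref{le-norm_1}.

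\textbf{Main obstacle.} The delicate point is uniformity in $z\in\mathcal B(\xi_{l,k})$. The radius $\upsilon$ must be fixed independently of $z$, which requires the isolation distance $\delta_n(z)$ to stay bounded below on $\mathcal B(\xi_{l,k})$. I would obtain this by shrinking $\mathcal B(\xi_{l,k})$. On this neighbourhood the $\lambda_m(z)=G_m(z;d_1,d_1)$ are continuous in $z$, and by \eqref{eq-Gnest} they tend to $0$ uniformly as $m\to\infty$. So only finitely many eigenvalues can approach $\lambda_n(z)$; these either form the degenerate cluster, enclosed as a whole, or are separated from $\lambda_n(\xi_{l,k})$ at $z=\xi_{l,k}$, and that separation persists nearby by continuity. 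This lets us take $\upsilon$ as a fixed fraction of $\inf_z\delta_n(z)>0$ and $\Gamma_n$ centered at $\lambda_n(\xi_{l,k})$, which keeps all constants uniform.
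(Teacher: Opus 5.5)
For fixed $z$, your argument for (1)--(3) is the paper's own. You use the same Neumann-series inversion of $K^\varepsilon-\lambda$ based on \eqref{eq-isolation}; you put the factor $\mathrm I+(K^0-\lambda)^{-1}H^\varepsilon$ on the right where the paper puts $\mathrm I+H^\varepsilon(K^0-\lambda)^{-1}$ on the left, which is immaterial. The bound $\|(K^\varepsilon-\lambda)^{-1}\|\le(\upsilon-\|H^\varepsilon\|)^{-1}$, the second resolvent identity, and the integration over $\Gamma_n$ are also the same. Your choice $\|H^\varepsilon\|\le\upsilon/2$ only makes the constant in (3) explicit. The paper, however, fixes $z$ and takes $\upsilon<\delta_n(z)/2$ from the preceding remark, so its constants and $\varepsilon_1$ depend on $z$ and it never addresses uniformity. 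Your last paragraph goes beyond the paper, and that is where the problem lies.

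From the uniform decay $\lambda_m(z)\to 0$ you infer that only finitely many $\lambda_m(z)$ can approach $\lambda_n(z)$. This holds only if $\lambda_n(\xi_{l,k})\neq 0$, and you use this without stating it. It would follow if the remark's standing assumption $\lambda_n(z)\neq0$ were imposed at the centre $z=\xi_{l,k}$. Under that hypothesis your continuity argument does give $z$-independent $\upsilon$ and $\varepsilon_1$ after shrinking $\mathcal B(\xi_{l,k})$.

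But the hypothesis excludes exactly the index the lemma is applied to afterwards (Corollary~\ref{cor:ef-asymp}, Lemma~\ref{le-IFT}), namely $n=k$. There $\lambda_k(\xi_{l,k})=G_k(\xi_{l,k};d_1,d_1)=0$ is the accumulation point of $\{\lambda_m(z)\}_m$. Since $K^0(z)$ is compact, $0\in\sigma(K^0(z))$, hence $\delta_k(z)\le|\lambda_k(z)|=\mathcal O(|z-\xi_{l,k}|)\to 0$. At $z=\xi_{l,k}$ every circle around $\lambda_k$ encloses infinitely many eigenvalues. Moreover, with the bound of Lemma~\ref{le-norm_1}, the smallness condition $C\varepsilon^{1/2}<\upsilon<\tfrac12\delta_k(z)\le\tfrac12|\lambda_k(z)|$ confines the argument to $|z-\xi_{l,k}|\gtrsim\varepsilon^{1/2}$. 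That is far from the region $|z-\xi_{l,k}|=\mathcal O(\varepsilon^2)$ where the resonance is later located.

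Neither your route nor the paper's fixed-$z$ proof covers this case. You should state $\lambda_n(\xi_{l,k})\neq0$ explicitly as the hypothesis of your uniform version and not claim it for $n=k$.
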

\begin{proof}
Fix $z$ and write $K^0:=K^0(z)$ and $H^\varepsilon:=H^\varepsilon(z)$.
For $\lambda\in\Gamma_n$ we have
\[
K^\varepsilon-\lambda = (\mathrm I+H^\varepsilon (K^0-\lambda)^{-1})(K^0-\lambda).
\]
By \eqref{eq-isolation} we have
\[
\sup_{\lambda\in\Gamma_n}
\|(K^0-\lambda)^{-1}\|_{L^2(I_0)\to L^2(I_0)}
\le \frac1{\upsilon}.
\]
and consequently
\[
\|H^\varepsilon (K^0-\lambda)^{-1}\|_{L^2(I_0)\to L^2(I_0)}
\le \frac{\|H^\varepsilon\|_{L^2(I_0)\to L^2(I_0)}}{\upsilon}.
\]
Therefore for $\varepsilon$ small enough the right-hand side is smaller than $1$,
cf.~(\ref{eq-asymH}), and $\mathrm I+H^\varepsilon(K^0-\lambda)^{-1}$ is invertible. Moreover,
\[
(K^\varepsilon-\lambda)^{-1}
= (K^0-\lambda)^{-1}\big(\mathrm I+H^\varepsilon(K^0-\lambda)^{-1}\big)^{-1}.
\]
Using the Neumann series,
\[
\big\|(\mathrm I+H^\varepsilon(K^0-\lambda)^{-1})^{-1}\big\|_{L^2(I_0)\to L^2(I_0)}
\le \frac{\upsilon}{\upsilon-\|H^\varepsilon\|_{L^2(I_0)\to L^2(I_0)}},
\]
and therefore
\begin{equation}\label{eq:res-eps-bound}
\|(K^\varepsilon-\lambda)^{-1}\|_{L^2(I_0)\to L^2(I_0)}
\le \frac{1}{\upsilon-\|H^\varepsilon\|_{L^2(I_0)\to L^2(I_0)}}\,.
\end{equation}
The resolvent identity gives
\[
(K^\varepsilon-\lambda)^{-1}-(K^0-\lambda)^{-1}
= -(K^\varepsilon-\lambda)^{-1}H^\varepsilon(K^0-\lambda)^{-1},
\]
hence
using \eqref{eq-isolation} and \eqref{eq:res-eps-bound},
\[
\big\|(K^\varepsilon-\lambda)^{-1}-(K^0-\lambda)^{-1}\big\|_{L^2(I_0)\to L^2(I_0)}
\le
\frac{\|H^\varepsilon\|_{L^2(I_0)\to L^2(I_0)}}{\upsilon(\upsilon-\|H^\varepsilon\|_{L^2(I_0)\to L^2(I_0)})}\,.
\]
Integrating over $\Gamma_n$ proves \eqref{eq:proj-difference}, and
using \eqref{eq-asymH} yields \eqref{eq:proj-Oeps}.
\end{proof}

\begin{corollary}\label{cor:ef-asymp}
Assume that the spectral cluster of $K^0(z)$ inside $\Gamma_n$
consists of a simple eigenvalue, so that $\operatorname{rank}P_n^0=1$.
Let $e_n^0$ be the corresponding normalized eigenfunction.
For $\varepsilon>0$ sufficiently small, $\operatorname{rank}P_n^{K^\varepsilon}(z)=1$ as well.
Let $e_n^{K^\varepsilon}(z)$ denote the normalized eigenfunction spanning
$\operatorname{Ran}P_n^{K^\varepsilon}(z)$, chosen so that
\begin{equation}\label{eq-efasymp}
\big(e_n^{K^\varepsilon}(z),e_n^0\big)_{L^2(I_0)}\in\mathbb R_+\,.
\end{equation}
Then there exist $o_\varepsilon=o(1)$ and a vector $\eta_n^\varepsilon(z)\perp e_n^0$ such that
\begin{equation}\label{eq:ef-expansion}
e_n^{K^\varepsilon}(z)=(1+o_\varepsilon)e_n^0+\eta_n^\varepsilon(z),
\end{equation}
and
\begin{equation}\label{eq:eta-bound}
\|\eta_n^\varepsilon(z)\|_{L^2(I_0)}=\mathcal O(\varepsilon^{1/2}),
\qquad \text{as }\varepsilon\to0.
\end{equation}
\end{corollary}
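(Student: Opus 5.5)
The plan is to read everything off Lemma~\ref{lem:proj-stability}. Rank stability comes first. Since $\|P_n^{K^\varepsilon}(z)-P_n^0\|=\mathcal O(\varepsilon^{1/2})$ by \eqref{eq:proj-Oeps}, this difference is smaller than $1$ for $\varepsilon$ small enough, uniformly in $z\in\mathcal B(\xi_{l,k})$. Two projectors at distance less than one have equal rank (Kato, Ch.~I, \S4.6). Hence $\operatorname{rank}P_n^{K^\varepsilon}(z)=\operatorname{rank}P_n^0=1$. The range of $P_n^{K^\varepsilon}(z)$ is therefore one-dimensional and invariant under $K^\varepsilon(z)$, so it is spanned by an eigenvector of $K^\varepsilon(z)$ with eigenvalue inside $\Gamma_n$.

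Next I construct the eigenvector with the right phase. Set $u^\varepsilon:=P_n^{K^\varepsilon}(z)e_n^0$. Then
\[
\|u^\varepsilon-e_n^0\|=\|(P_n^{K^\varepsilon}(z)-P_n^0)e_n^0\|=\mathcal O(\varepsilon^{1/2}),
\]
so $u^\varepsilon\neq0$ and $u^\varepsilon$ spans the range. Put $\tilde e:=u^\varepsilon/\|u^\varepsilon\|$. From $|\|u^\varepsilon\|-1|=\mathcal O(\varepsilon^{1/2})$ we get $\|\tilde e-e_n^0\|=\mathcal O(\varepsilon^{1/2})$. It follows that $(\tilde e,e_n^0)=1+\mathcal O(\varepsilon^{1/2})$, which is nonzero for small $\varepsilon$. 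Multiply $\tilde e$ by the unimodular factor $\overline{\theta}/|\theta|$, where $\theta=(\tilde e,e_n^0)$, to obtain $e_n^{K^\varepsilon}(z)$ satisfying \eqref{eq-efasymp}. Since $\theta=1+\mathcal O(\varepsilon^{1/2})$, this factor is itself $1+\mathcal O(\varepsilon^{1/2})$. Consequently $\|e_n^{K^\varepsilon}(z)-e_n^0\|=\mathcal O(\varepsilon^{1/2})$ still holds.

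Finally I decompose orthogonally. Write $e_n^{K^\varepsilon}(z)=(e_n^{K^\varepsilon}(z),e_n^0)\,e_n^0+\eta_n^\varepsilon(z)$ with $\eta_n^\varepsilon(z)\perp e_n^0$, where $(e_n^{K^\varepsilon}(z),e_n^0)$ is the component along $e_n^0$ in whichever convention makes this an orthogonal decomposition. Define $1+o_\varepsilon$ to be that coefficient; it is real positive and equals $1+\mathcal O(\varepsilon^{1/2})$, so $o_\varepsilon=o(1)$. Then
\[
\eta_n^\varepsilon(z)=(\mathrm I-P_n^0)\bigl(e_n^{K^\varepsilon}(z)-e_n^0\bigr),
\]
whose norm is bounded by $\|e_n^{K^\varepsilon}(z)-e_n^0\|=\mathcal O(\varepsilon^{1/2})$, giving \eqref{eq:eta-bound}. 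The only delicate point is uniformity in $z$: the contour radius $\upsilon$ must be chosen independently of $z\in\mathcal B(\xi_{l,k})$, which holds by continuity of $\lambda_m(z)$ provided the ball is small. Given that, all constants above are uniform.
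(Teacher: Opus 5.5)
Your argument is correct. It uses the same input as the paper (Lemma~\ref{lem:proj-stability}(3)), but the route differs in a way that matters.

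The paper sets $\alpha_\varepsilon=(e_n^{K^\varepsilon}(z),e_n^0)$ and $\eta_n^\varepsilon=e_n^{K^\varepsilon}(z)-\alpha_\varepsilon e_n^0$. It then writes $P_n^{K^\varepsilon}(z)u=(e_n^{K^\varepsilon}(z),u)\,e_n^{K^\varepsilon}(z)$ to obtain the identity $\|\eta_n^\varepsilon(z)\|=\|(P_n^{K^\varepsilon}(z)-P_n^0)e_n^0\|$. Pythagoras then gives $|o_\varepsilon|\le 1-\alpha_\varepsilon^2=\|\eta_n^\varepsilon(z)\|^2=\mathcal O(\varepsilon)$.

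That formula treats the Riesz projector as the orthogonal projector onto the eigenvector, which presupposes that $K^\varepsilon(z)$ is normal. For $\varepsilon>0$ this fails in general: the coefficients $G_n(z;d_1,d_1)$ are complex and the $e_n^\varepsilon$ are not mutually orthogonal. So $P_n^{K^\varepsilon}(z)$ is an oblique projector.

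Your vector $u^\varepsilon=P_n^{K^\varepsilon}(z)e_n^0$ uses only $P_n^0e_n^0=e_n^0$, so the argument is valid for oblique projectors. The price is an inequality with a harmless constant in place of an identity. You also supply a proof of the rank-one claim, which the paper states without argument.

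The sharper bound $o_\varepsilon=\mathcal O(\varepsilon)$ is available to you as well. The identity $1=(1+o_\varepsilon)^2+\|\eta_n^\varepsilon(z)\|^2$ is just Pythagoras for your orthogonal decomposition of a unit vector. With the paper's convention (antilinear in the first slot), the phase factor should be $\theta/|\theta|$ rather than $\overline{\theta}/|\theta|$; this does not affect any estimate.

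Your closing claim about uniformity in $z$ is not right in the case the paper actually needs. A $z$-independent radius $\upsilon$ on a small ball exists by continuity only if $\lambda_n(\xi_{l,k})\neq0$. For the index used later, $n=k$, one has $\lambda_k(\xi_{l,k})=G_k(\xi_{l,k};d_1,d_1)=0$. This is the accumulation point of $\sigma(K^0(z))$, because $\lambda_m(z)=\mathcal O(1/m)$.

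Hence every admissible radius satisfies $\upsilon(z)<|\lambda_k(z)|/2\to0$ as $z\to\xi_{l,k}$. The right-hand side of \eqref{eq:proj-difference} equals $\|H^\varepsilon(z)\|/(\upsilon-\|H^\varepsilon(z)\|)$, so it is small only once $\|H^\varepsilon(z)\|\ll\upsilon(z)$. The $\varepsilon$-threshold and the constants in your bounds therefore degenerate near $\xi_{l,k}$.

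This does not affect the corollary as stated, namely at a fixed $z$ with $\lambda_n(z)\neq0$ simple. You should, however, drop the uniformity assertion or restrict it to indices with $\lambda_n(\xi_{l,k})\neq0$.
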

\begin{proof}
Define
\[
\alpha_\varepsilon
:=\big(e_n^{K^\varepsilon}(z),e_n^0\big)_{L^2(I_0)}\in\mathbb R_+,
\qquad
\eta_n^\varepsilon(z)
:=e_n^{K^\varepsilon}(z)-\alpha_\varepsilon e_n^0 .
\]
By construction, $\eta_n^\varepsilon(z)\perp e_n^0$.
Since $\|e_n^{K^\varepsilon}(z)\|_{L^2(I_0)}=\|e_n^0\|_{L^2(I_0)}=1$, we have the orthogonal decomposition
\begin{equation}\label{eq:orth-decomp}
1=\|e_n^{K^\varepsilon}(z)\|_{L^2(I_0)}^2
=\alpha_\varepsilon^2+\|\eta_n^\varepsilon(z)\|_{L^2(I_0)}^2 .
\end{equation}

Recall that, with our convention on the inner product,
\[
P_n^0 u=(e_n^0,u)_{L^2(I_0)}\,e_n^0,
\qquad
P_n^{K^\varepsilon}(z)u=(e_n^{K^\varepsilon}(z),u)_{L^2(I_0)}\,e_n^{K^\varepsilon}(z).
\]
Therefore,
\[
(P_n^{K^\varepsilon}(z)-P_n^0)e_n^0
=(e_n^{K^\varepsilon}(z),e_n^0)_{L^2(I_0)}\,e_n^{K^\varepsilon}(z)-e_n^0
=\alpha_\varepsilon e_n^{K^\varepsilon}(z)-e_n^0.
\]
Taking the $L^2(I_0)$–norm and using Lemma~\ref{lem:proj-stability}, we obtain
\begin{equation} \label{eq-convproj-asymp}
\|\alpha_\varepsilon e_n^{K^\varepsilon}(z)-e_n^0\|_{L^2(I_0)}
=\|(P_n^{K^\varepsilon}(z)-P_n^0e_n^0\|_{L^2(I_0)}
\le \|P_n^{K^\varepsilon}(z)-P_n(z)\|_{L^2(I_0)\to L^2(I_0)}
=\mathcal O(\varepsilon^{1/2})\,.
\end{equation}

On the other hand, using \eqref{eq:orth-decomp},
\[
\|\alpha_\varepsilon e_n^{K^\varepsilon}(z)-e_n^0\|_{L^2(I_0)}^2
=\alpha_\varepsilon^2+1-2\alpha_\varepsilon^2
=1-\alpha_\varepsilon^2
=\|\eta_n^\varepsilon(z)\|_{L^2(I_0)}^2.
\]
Hence
\begin{equation}\label{eq-normeta}
\|\eta_n^\varepsilon(z)\|_{L^2(I_0)}
=\|(P_n^{K^\varepsilon}(z)-P_n^0\,)e_n^0\|_{L^2(I_0)}
=\mathcal O(\varepsilon^{1/2}),
\end{equation}
which proves \eqref{eq:eta-bound}. Finally, since $\alpha_\varepsilon\in[0,1]$ and $o_\varepsilon
=1-\alpha_\varepsilon$, we have
\[
|o_\varepsilon|
=|1-\alpha_\varepsilon|
\le 1-\alpha_\varepsilon^2
=\|\eta_n^\varepsilon(z)\|_{L^2(I_0)}^2
=\mathcal O(\varepsilon),
\]
and therefore $o_\varepsilon=o(1)$ as $\varepsilon\to0$.
This completes the proof.
\end{proof}


With these results, the problem of
$\ker K^\varepsilon (z)$ can be transferred to the problem of solving the following scalar equation
\begin{equation} \label{eq-spectral}
    \zeta_n (z, \varepsilon) = 0 \,,\quad \mathrm{where }\quad
    \zeta_n (z, \varepsilon):= \big (e^{K^\varepsilon}_n(z) \,, K^\varepsilon (z) e^{{K^\varepsilon}}_n(z) \big)_{L^2 (I_0)}\,.%
\end{equation}

\begin{remark} \label{re:extension}\rm{
Although the parameter $\varepsilon$ is physically restricted to $\varepsilon>0$,
in the analysis below we consider a continuous extension of the function
$\zeta_k(z,\varepsilon)$ to $\varepsilon$ in a neighborhood of $0$ in $\mathbb{R}$.
This extension is used solely as a technical device in order to apply the
implicit function theorem; the resulting solution branch is subsequently
restricted to $\varepsilon>0$.}
\end{remark}

The following theorem establishes the existence of a solution to equation (\ref{eq-spectral}).

\begin{lemma} \label{le-IFT}
There exists $\varepsilon_0>0$ and a uniquely defined, continuous function
$\varepsilon \mapsto z (\varepsilon)$, $\varepsilon \in [0, \varepsilon_0 )$ such that the equation
\begin{equation} \label{eq-exi}
 \zeta_k (z (\varepsilon), \varepsilon) = 0
\end{equation}
holds.
\end{lemma}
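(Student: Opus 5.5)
We apply a version of the implicit function theorem to $\zeta_k(z,\varepsilon)$ near $(z,\varepsilon)=(\xi_{l,k},0)$. Here $z$ is a complex variable and $\varepsilon$ a real parameter, extended to a two-sided neighbourhood of $0$ as in Remark~\ref{re:extension}. Only continuity in $\varepsilon$ is available, so we use the version that asks for continuity of $\zeta_k$ jointly in $(z,\varepsilon)$, holomorphy in $z$, and a nonvanishing derivative $\partial_z\zeta_k(\xi_{l,k},0)$.

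\emph{Step 1: the unperturbed equation.} At $\varepsilon=0$, Corollary~\ref{cor:ef-asymp} gives $e_k^{K^0}(z)=e_k^0$. Hence
\[
\zeta_k(z,0)=(e_k^0,K^0(z)e_k^0)_{L^2(I_0)}=G_k(z;d_1,d_1)=\frac{\mathrm e^{i2d_1\sqrt{z-(\pi k/d_2)^2}}-1}{i\sqrt{z-(\pi k/d_2)^2}} .
\]
This vanishes at $z=\xi_{l,k}$, since there $2d_1\sqrt{\cdot}=2\pi l$. Differentiating and writing $s=\sqrt{z-(\pi k/d_2)^2}$, with $s=\pi l/d_1$ at $z=\xi_{l,k}$, we get
\[
\partial_z\zeta_k(\xi_{l,k},0)=\frac{2d_1\,\mathrm e^{2\pi i l}}{s}\cdot\frac{1}{2s}=\frac{d_1^3}{\pi^2l^2}\neq0 .
\]
Thus $z\mapsto\zeta_k(z,0)$ has a simple zero at $\xi_{l,k}$. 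This function is holomorphic on $\mathcal B(\xi_{l,k})$, because $\sqrt{z-(\pi k/d_2)^2}$ stays away from $0$ there ($l\ge1$).

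\emph{Step 2: regularity of $\zeta_k$ in $(z,\varepsilon)$.} First, $z\mapsto K^\varepsilon(z)$ is holomorphic on $\mathcal B(\xi_{l,k})$ in operator norm, using the analytic continuation above. The series $\sum_n G_n P_n^\varepsilon$ converges uniformly, since $|G_n|\le C/n$ by (\ref{eq-Gnest}) and the same bound holds for derivatives in $z$ when $n$ is large. The threshold index $n$ with $(\pi n/d_2)^2=\xi_{l,k}$ needs separate care. There $G_n$ is not analytic at $\xi_{l,k}$, but it is continuous and $\mathcal O(\delta)$. We therefore either shrink $\mathcal B(\xi_{l,k})$ to a sector or treat that single rank-one term as a continuous perturbation. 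Since $(e_n^0,e_k^0)=0$ for $n\ne k$, it does not enter $\zeta_k(\cdot,0)$.

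Next, the Riesz projector $P_k^{K^\varepsilon}(z)$ is holomorphic in $z$ because it is a contour integral of a holomorphic resolvent. To avoid the non-analytic normalisation (\ref{eq-efasymp}), we replace $\zeta_k$ by
\[
\tilde\zeta_k(z,\varepsilon):=\frac{(e_k^0,K^\varepsilon(z)P_k^{K^\varepsilon}(z)e_k^0)}{(e_k^0,P_k^{K^\varepsilon}(z)e_k^0)} .
\]
This equals the eigenvalue of $K^\varepsilon(z)$ inside $\Gamma_k$, has the same zeros as $\zeta_k$, and is holomorphic in $z$. Its denominator is $1+\mathcal O(\varepsilon^{1/2})$ by Lemma~\ref{lem:proj-stability}.

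Continuity in $\varepsilon$, uniformly in $z$, follows from Lemma~\ref{le-norm_1} and Lemma~\ref{lem:proj-stability}: $|\tilde\zeta_k(z,\varepsilon)-\zeta_k(z,0)|=\mathcal O(\varepsilon^{1/2})$ uniformly on $\mathcal B(\xi_{l,k})$.

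\emph{Step 3: existence and uniqueness.} This is a Rouché argument, which is cleaner than a Banach fixed point. Take a small circle $|z-\xi_{l,k}|=r$ on which $|\zeta_k(z,0)|\ge c\,r$, which Step 1 allows. For $\varepsilon<\varepsilon_0$ with $C\varepsilon_0^{1/2}<cr$, Rouché's theorem gives exactly one zero $z(\varepsilon)$ of $\tilde\zeta_k(\cdot,\varepsilon)$ inside the disc, and $z(0)=\xi_{l,k}$. Continuity of $\varepsilon\mapsto z(\varepsilon)$ follows by applying the same argument with shrinking $r$ around any $z(\varepsilon')$, or from the argument-principle formula
\[
z(\varepsilon)=\frac1{2\pi i}\oint z\,\frac{\partial_z\tilde\zeta_k}{\tilde\zeta_k}\,dz ,
\]
whose integrand is continuous in $\varepsilon$ once we also have continuity of $\partial_z\tilde\zeta_k$. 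That continuity comes from Cauchy estimates.

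\emph{Main obstacle.} The delicate point is Step 2. We must justify the uniform bounds of Lemma~\ref{le-norm_1} on a full complex neighbourhood after analytic continuation to the second sheet, where $|\mathrm e^{i2d_1\sqrt{\cdot}}|$ may exceed $1$ but only boundedly for finitely many $n$. We must also handle the possible threshold index. If $\xi_{l,k}$ is itself a threshold, we will restrict to $z$ in a half-neighbourhood or a suitable sector, and apply Rouché there.
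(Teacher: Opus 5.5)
\textbf{Comparison of routes.} Your closing argument differs from the paper's. The paper extends $\zeta_k$ to $\varepsilon<0$ (Remark~\ref{re:extension}) and quotes Hurwicz's continuous implicit function theorem. You use Rouch\'e and the argument principle instead, which would need only $\varepsilon\ge0$ and would give uniqueness in a fixed disc and continuity of $z(\varepsilon)$ directly. Your worry about the normalisation (\ref{eq-efasymp}) is unnecessary: for any unit eigenvector $e$ of $K^\varepsilon(z)$, the quantity $(e,K^\varepsilon(z)e)$ is the eigenvalue whatever the phase of $e$, so $\tilde\zeta_k=\zeta_k$.

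\textbf{The gap: Step~2 fails, and with it Step~3.} The eigenvalue you track vanishes at the base point, $G_k(\xi_{l,k};d_1,d_1)=0$, and $0$ is never isolated in the spectrum.
\begin{enumerate}
\item[(i)] $0$ is the accumulation point of the eigenvalues $G_n(z)\approx d_2/(\pi n)$ of $K^0(z)$.
\item[(ii)] For $\varepsilon>0$, $0$ is an eigenvalue of infinite multiplicity of $K^\varepsilon(z)$ on $L^2(I_0)$, because $K^\varepsilon(z)$ annihilates every $f$ supported in $\bar I_\varepsilon$.
\end{enumerate}
Hence the radius in (\ref{eq-isolation}) must satisfy $\upsilon<\frac12|G_k(z)|\approx\frac{d_1^3}{2\pi^2l^2}|z-\xi_{l,k}|$. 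Lemma~\ref{lem:proj-stability} requires $\|H^\varepsilon(z)\|<\upsilon$. With $\|H^\varepsilon\|=\mathcal O(\varepsilon^{1/2})$ it is therefore only available for $|z-\xi_{l,k}|\gtrsim\varepsilon^{1/2}$. Its constant $|\Gamma_k|/(2\pi\upsilon(\upsilon-\|H^\varepsilon\|))$ also blows up as $z\to\xi_{l,k}$. In addition, as $z$ runs over any disc about $\xi_{l,k}$, $G_k(z)$ sweeps a disc about $0$ and crosses infinitely many $G_n(z)$.

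Consequently $\tilde\zeta_k(\cdot,\varepsilon)$ is neither defined nor holomorphic on a full disc around $\xi_{l,k}$. The bound $|\tilde\zeta_k(z,\varepsilon)-\zeta_k(z,0)|=\mathcal O(\varepsilon^{1/2})$ is not uniform there, and Rouch\'e cannot be applied. The zero you want lies at distance $\mathcal O(\varepsilon^2)$ according to Theorem~\ref{th-res_2D}, which is exactly the region where your function does not exist. The obstacle you single out (second-sheet bounds, thresholds) is not the decisive one.

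\textbf{What the paper does, and what is missing.} The paper's proof makes the same tacit assumption, so it does not supply the missing ingredient. Lemma~\ref{lem:proj-stability} is asserted on all of $\mathcal B(\xi_{l,k})$, although the preceding remark needs $\lambda_n(z)\neq0$. What is needed is a reduction to a scalar equation that does not follow an eigenvalue of a compact operator into its accumulation point. One possibility is a Grushin/Schur-complement reduction onto $e_k^0$, carried out in a setting where $K^\varepsilon(z)$ is Fredholm rather than compact (densities in $\tilde H^{-1/2}(I_\varepsilon)$, data in $H^{1/2}(I_\varepsilon)$). One would then have to prove that the complementary block is invertible uniformly in $\varepsilon$. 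The resulting scalar function would be holomorphic on a fixed disc, and your Rouch\'e step would then apply.

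\textbf{Threshold case.} If $(\pi n/d_2)^2=\xi_{l,k}$, then $G_n\to 2d_1\neq0$ at the threshold, not $\mathcal O(\delta)$. The square-root branch point also rules out Rouch\'e on a $z$-disc; you would have to work in the variable $\sqrt{z-\xi_{l,k}}$ instead.
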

\begin{proof}
The first observation shows that
\begin{equation} \label{eq-obs}
\zeta_k(\xi_{l,k}, 0) = \left( e^0_k, K^0(\xi_{l,k}) e^0_k \right)_{L^2(I_0)} = 0\,.
\end{equation}

Furthermore, employing the derivative formula
\begin{equation}
\frac{d}{dz} G_n(z; d_1,d_1) = \frac{d_1 \, e^{2i d_1 \sqrt{z - ((\pi n/d_2)^2}}}{z - (\pi n/d_2)^2}
+ \frac{i ( e^{2i a \sqrt{z -  (\pi n/d_2)^2}} - 1 )}{2 ( z - (\pi n/d_2)^2)^{3/2}},
\end{equation}
and the form of $ K^0(z) $, we state that
\begin{equation}
\frac{d}{dz} \zeta_k(z, \varepsilon)\big|_{z = \xi_{l,k}, \, \varepsilon = 0} =
\Bigg[
\frac{d_1 \, e^{2i d_1 \sqrt{\xi_{l,k} - (\pi n/d_2) ^2}}}{\xi_{l,k} - ( \pi n/d_2 )^2}
+ \frac{i ( e^{2i d_1 \sqrt{\xi_{l,k} - (\pi n/d_2)^2}} - 1 )}{2 ( \xi_{l,k} - (\pi n/d_2)^2 )^{3/2}}
\Bigg]
\left( e_k^0, e_k^0 \right)_{L^2(I_0)} = \frac{d_1^3}{\pi^2}\,,
\end{equation}
i.e.
\begin{equation} \label{eq-neq}
\frac{d}{dz} \zeta_k(z, \varepsilon)\big|_{z = \xi_{l,k}, \, \varepsilon = 0} \neq 0\,.
\end{equation}

Furthermore, repeating the arguments from the proof of Lemma~\ref{le-norm_1},
one can show that  \H{BACK}
\begin{equation}
\big| (f, (K^\varepsilon(z) - K^{\varepsilon'}(z)) f)_{L^2 (I_0)} \big| \leq C |\varepsilon - \varepsilon'|^{1/2} \|f\|^2_{L^2(I_0)}
\end{equation}
for any \( f \in L^2(I_0) \). This implies that \( \varepsilon \mapsto \zeta_k(z, \varepsilon) \) is a continuous function. Relying on the fact that \( z \mapsto K^\varepsilon(z) \) is an analytic operator-valued function for \( z \in \mathcal{B}(\xi_{l,k}) \), we can conclude that \( z \mapsto \zeta_k(z, \varepsilon) \) is analytic in \( \mathcal{B}(\xi_{l,k}) \) for any  \( \varepsilon \).

We extend the function $\zeta_k(z,\varepsilon)$ to
$\varepsilon\in(-\varepsilon_0,\varepsilon_0)$, see~Remark~\ref{re:extension}. Combining the  $\varepsilon$-continuity and $z$-analyticity with~\eqref{eq-obs} and~\eqref{eq-neq}, and applying the Continuous Implicit Function Theorem, (see, for example~~\cite[Thm.~2]{Hurwicz}) we state that there exists a unique continuous function \(\varepsilon
\mapsto  z(\varepsilon) \) such that
\[
\zeta_k(z(\varepsilon), \varepsilon) = 0\,,
\]
for $\varepsilon$ small enough. This completes the proof.
\end{proof}

\section{Asymptotic analysis of resonances} \label{Asymptotics}

In this section, we investigate the asymptotic behavior of the function \(\varepsilon \mapsto z(\varepsilon)\), which satisfies equation~(\ref{eq-exi}). To derive this  asymptotics, we introduce an auxiliary function \(\delta(\varepsilon)\), defined such that
\begin{equation}\label{eq-specsolution}
z(\varepsilon) = \xi_{l,k}+\delta (\varepsilon )\,,
\end{equation}
and analyze The corrections follows from the fact that $o_\varepsilon$ is selected real
\begin{align}\nonumber
\zeta_k (z, \varepsilon)=&    \big( e^{K^\varepsilon}_k (\xi_{l,k}+\delta (\varepsilon ))\,,K^\varepsilon (z(\varepsilon ) )e^{K^\varepsilon}_k (\xi_{l,k}+\delta (\varepsilon ))\big)_{L^2 (I_0)} \\ \nonumber =&
    (1+o_\varepsilon^2)\big(e^{0}_k\,,K^0 (z(\varepsilon ))e^{0}_k  \big)_{L^2 (I_0)}
 +
   (1+ o_\varepsilon)\big( e^0_k\,,K^0 (z (\varepsilon ))\eta^\varepsilon_k (z(\varepsilon)) \big)_{L^2(I_0)} \\ \nonumber & +(1+ o_\varepsilon) \big( \eta^\varepsilon_k (z(\varepsilon))\,,K^0(z (\varepsilon ) )e^0_k\big)_{L^2 (I_0)}  +\big( \eta^\varepsilon_k (z(\varepsilon))\,, K^0(z (\varepsilon ) )\eta^\varepsilon_k (z(\varepsilon))\big)_{L^2 (I_0)}   \\ \nonumber &+
 \big( e^{K^\varepsilon}_k (z(\varepsilon))\,,H^\varepsilon (z(\varepsilon ) )e^{K^\varepsilon}_k (z(\varepsilon ))\big)_{L^2 (I_0)} \,.
\end{align}
In the above, we used the fact that $o_\varepsilon$ is chosen to be real.
Using the fact that $K^0(z (\varepsilon ) ) e^0_k$ remains in the space spanned by $e^0_k$ and $\eta^\varepsilon_k (z(\varepsilon))$ is orthogonal to $e_k^0 $ we come to the conclusion
\begin{equation}\label{eq-asymzeta}
\zeta_k (z, \varepsilon)= (1+o_\varepsilon^2)\big( e^{0}_k\,,K^0 (z (\varepsilon ) )e^{0}_k  \big)_{L^2 (I_0)} + J_\varepsilon \,,
\end{equation}
where 
\begin{equation}\label{eq-J}P
J_\varepsilon:=\big( \eta^\varepsilon_k (z(\varepsilon))\,, K^0(z (\varepsilon ) )\eta^\varepsilon_k (z(\varepsilon))\big)_{L^2 (I_0)}+
 \big (e^{K^\varepsilon}_k (z(\varepsilon))\,,H^\varepsilon (z(\varepsilon ) ) e^{K^\varepsilon}_k (z(\varepsilon ))\big)_{L^2 (I_0)} \,,
\end{equation}
which, in view of (\ref{eq-efasymp}), can be developed as
\begin{equation} \label{eq-asypJ}
    J_\varepsilon= (1+o_\varepsilon^2)\big(e^0_k \,,H^\varepsilon (z(\varepsilon ) ) e^0_k\big)_{L^2 (I_0)}+ 
   (1+o _\varepsilon)
    \big(e_k^0\,, H^\varepsilon(z (\varepsilon ) )\eta^\varepsilon_k (z(\varepsilon)) \big)_{L^2 (I_0)}
 \,,
\end{equation}
where we  used again orthogonality of $\eta_k^\varepsilon$ and $e_k^0$, the fact that $\xi_k (z (\varepsilon), \varepsilon)$ is an eigenvalue of $K^\varepsilon (z)$ to $e_k^{K^\varepsilon}(z)$ which $\xi_k (z, \varepsilon)= 0$ for (\ref{eq-specsolution}).
Employing  the asymptotics (\ref{eq-asymH}) and (\ref{eq-normeta}),
we state
$$
J_\varepsilon =\mathcal O (\varepsilon^{1/2})\,.
$$
However, in the following discussion, we refine the asymptotic behavior of \( J_\varepsilon \) by analyzing each term in equation~(\ref{eq-asypJ}). The first term is estimated in the lemma below.

\begin{lemma} \label{le-norm} For  $z\in \mathcal B (\xi_{l,k})$ the following asymptotics
\begin{equation}\label{eq-Hasymp}
\big|\big( e_k^0\,, ( H^\varepsilon (z)e_k^0 \big)_{L^2 (I_0)} \big|=\mathcal O (\varepsilon^2)\,,
\end{equation}
holds.
\end{lemma}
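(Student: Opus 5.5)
The plan is to compute the quadratic form $\big(e_k^0, H^\varepsilon(z)e_k^0\big)_{L^2(I_0)}$ term by term from \eqref{eq-decomH}, using the splitting $e_n^0=e_n^\varepsilon+\breve e_n^\varepsilon$ from \eqref{eq-eaux}. I will write
\[
c_n(\varepsilon):=(\breve e_n^\varepsilon,e_k^0)_{L^2(I_0)}=\int_t^{t+\varepsilon} e_n^0(x)\,e_k^0(x)\,\mathrm dx .
\]
Since $(e_n^\varepsilon,e_k^0)=\delta_{nk}-c_n$, the $n$-th summand equals $G_n(z)\big[(\delta_{nk}-c_n)^2-\delta_{nk}\big]$. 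Hence
\[
\big(e_k^0,H^\varepsilon(z)e_k^0\big)=G_k(z)\big(-2c_k+c_k^2\big)+\sum_{n\neq k}G_n(z)\,c_n^2 .
\]
The proof then splits into a diagonal part and an off-diagonal part.

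\emph{Diagonal part.} Here $c_k=\int_{\bar I_\varepsilon}|e_k^0|^2=\mathcal O(\varepsilon)$. Moreover $G_k(z;d_1,d_1)$ vanishes at $z=\xi_{l,k}$, because $\mathrm e^{2id_1\sqrt{\xi_{l,k}-(\pi k/d_2)^2}}=\mathrm e^{2i\pi l}=1$. By analyticity this gives $|G_k(z)|\le C|z-\xi_{l,k}|$. To make the product $G_k(z)c_k$ of order $\varepsilon^2$ uniformly, I will take the neighbourhood $\mathcal B(\xi_{l,k})$ to have radius of order $\varepsilon$. This is legitimate in the sense that Lemma~\ref{le-IFT} already places the resonance branch $z(\varepsilon)$ in an arbitrarily small neighbourhood of $\xi_{l,k}$. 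The term $G_k c_k^2$ is trivially $\mathcal O(\varepsilon^2)$.

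\emph{Off-diagonal part.} For each $n$ I will use two bounds. The first is $|c_n|\le \tfrac{2}{d_2}\varepsilon$, since the integrand is bounded by $2/d_2$. The second is $|c_n|\le C_k/n$, obtained by one integration by parts in $\int_t^{t+\varepsilon}\sin(\pi nx/d_2)\sin(\pi kx/d_2)\,\mathrm dx$. I will combine these with \eqref{eq-Gnest}, $|G_n(z)|\le C/n$. For $n\le N$ with $N$ fixed, the contribution is at most $C_N\varepsilon^2$. For the tail I will use the large-$n$ expansion $G_n(z)=\frac{1-\mathrm e^{-2d_1\kappa_n}}{\kappa_n}=\frac{d_2}{\pi n}+\mathcal O(n^{-3})$, with $\kappa_n=\sqrt{(\pi n/d_2)^2-z}$. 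This reduces the tail to the explicit kernel $\sum_n \frac{1}{n}\sin(nx)\sin(ny)=\tfrac12\log\big|\sin\tfrac{x+y}{2}/\sin\tfrac{x-y}{2}\big|$, integrated against $e_k^0(x)e_k^0(y)$ over the square $[t,t+\varepsilon]^2$. I would then try to show that this double integral is $\mathcal O(\varepsilon^2)$.

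\emph{Main obstacle.} The tail estimate is the step I expect to be hardest. The crude bound $\sum_n\frac1n\min(\varepsilon^2,n^{-2})$ only gives $\varepsilon^2|\log\varepsilon|$. The logarithmic kernel integrated over an $\varepsilon\times\varepsilon$ square also naturally produces $\varepsilon^2\log\varepsilon+\mathcal O(\varepsilon^2)$. My first attempt to remove the logarithm is to look for a cancellation in the $\log|x-y|$ part: I would subtract the constant-in-$n$ piece and use that $e_k^0(x)e_k^0(y)$ is nearly constant on the square. If no such cancellation is available, the $\varepsilon^2|\log\varepsilon|$ term is genuinely present. In that case the lemma as stated would need either an additional hypothesis or a weakened rate, and I would check this point carefully before finalising the argument.
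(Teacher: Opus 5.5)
Your reduction $\bigl(e_k^0,H^\varepsilon(z)e_k^0\bigr)=G_k(z)(c_k^2-2c_k)+\sum_{n\neq k}G_n(z)c_n^2$ is the paper's starting point in \eqref{eq-1aux}. However, the step you leave open cannot be closed. There is no cancellation, and the lemma is false whenever the window does not sit at a node of $e_k^0$.

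To see this, note that $\sum_n G_n(z)c_n^2=\bigl(\breve e_k^\varepsilon,K^0(z)\breve e_k^\varepsilon\bigr)_{L^2(I_0)}$. This is the integral of $e_k^0(x)e_k^0(y)$ against the kernel of $K^0(z)$ over $\bar I_\varepsilon\times\bar I_\varepsilon$. Three facts then decide the order:
\begin{enumerate}
\item For $t\in(0,d_2)$ the kernel equals $\frac1\pi\log\frac1{|x-y|}+\mathcal O(1)$ near the diagonal, uniformly in $z\in\mathcal B(\xi_{l,k})$. This is your logarithmic kernel, rescaled.
\item The weight equals $|e_k^0(t)|^2+\mathcal O(\varepsilon)$ on the square.
\item $\int_0^\varepsilon\!\int_0^\varepsilon\log\frac1{|x-y|}\,\mathrm dx\,\mathrm dy=\varepsilon^2\bigl(\log\frac1\varepsilon+\frac32\bigr)>0$.
\end{enumerate}
Since $G_k(\xi_{l,k})=0$, the diagonal term drops out at $z=\xi_{l,k}$, and
\[
\bigl(e_k^0,H^\varepsilon(\xi_{l,k})e_k^0\bigr)_{L^2(I_0)}=\frac{2}{\pi d_2}\sin^2\Bigl(\frac{\pi kt}{d_2}\Bigr)\,\varepsilon^2\log\frac1\varepsilon+\mathcal O(\varepsilon^2).
\]
The series side gives the same result. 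One has $c_n=\varepsilon\,e_n^0(t)e_k^0(t)+\mathcal O(n\varepsilon^2)$ and $\sum_{n\le 1/\varepsilon}\frac1n\sin^2(\pi nt/d_2)=\frac12\log\frac1\varepsilon+\mathcal O(1)$. So your crude bound $\varepsilon^2|\log\varepsilon|$ is sharp. The rate $\mathcal O(\varepsilon^2)$ holds only when $\sin(\pi kt/d_2)=0$, for example $t=0$.

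The paper reaches $\mathcal O(\varepsilon^2)$ through an invalid step right after \eqref{eq-auxdecomp}. There it bounds the sum over $w'\in\{\varepsilon,2\varepsilon,\dots,N_0\}$ by the number of terms, $N_0/\varepsilon$. But the terms $(w')^{-3}\sin^2(w'\pi/2d_2)\approx\pi^2/(4d_2^2w')$ are of size $\varepsilon^{-1}$ for the first ones. The sum is therefore of order $\varepsilon^{-1}\log(N_0/\varepsilon)$, and the argument actually yields your $\varepsilon^2|\log\varepsilon|$.

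On the diagonal part you have the right mechanism, and the paper is wrong here too. It asserts $|(e_m^\varepsilon,e_m^0)|^2-1=\mathcal O(\varepsilon^2)$. In fact this quantity equals $c_m^2-2c_m$ with $c_m=\frac{2}{d_2}\sin^2(\pi mt/d_2)\,\varepsilon+\mathcal O(\varepsilon^2)$. So on a fixed neighbourhood $\mathcal B(\xi_{l,k})$ the term $G_k(z)(c_k^2-2c_k)$ is only $\mathcal O(\varepsilon|z-\xi_{l,k}|)$; it is small only through the zero of $G_k$.

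Your justification for shrinking $\mathcal B(\xi_{l,k})$ to radius $\mathcal O(\varepsilon)$ does not work. Lemma~\ref{le-IFT} gives continuity of $z(\varepsilon)$ but no rate, so evaluating at $z(\varepsilon)$ would presuppose a bound $|z(\varepsilon)-\xi_{l,k}|=\mathcal O(\varepsilon)$ that is not available at that stage. The sound fix is not to estimate this term separately. Instead, merge it with the leading term of $\zeta_k$, where it only turns $G_k(z)$ into $G_k(z)\bigl(1+\mathcal O(\varepsilon)\bigr)$.

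Thus your route is the paper's route without its two errors, and it cannot deliver the stated rate because that rate is false. The lemma must either be weakened to $\mathcal O(\varepsilon^2|\log\varepsilon|)$ for the off-diagonal part, with the diagonal part absorbed as above, or be restricted to windows on which $e_k^0$ vanishes.
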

\begin{proof}
    Using the decomposition (\ref{eq-decomH}) we have 
 \begin{align}
       \big(e_m^0 \,, H^\varepsilon (z)e^0_m\big)_{L^2 (I_0)} = &
              \sum_{n=1 }^\infty  G_{n} (z;d_1,d_1) \big ( |(e^\varepsilon_{n} \,, e^0_m )_{L^2 (I_0)}|^2 -\delta_{nm} \big)\\
            = &\sum_{n=1 }^\infty  G_{n} (z;d_1,d_1) |(e^\varepsilon_{n} \,, e^0_m )_{L^2 (I_0)}|^2 -
              G_{m} (z;d_1, d_1)\,.
 \end{align}
 Therefore
    \begin{equation}
        \begin{split}  
       \big|\big(e^0_m \,, H^\varepsilon (z)e^0_m\big)_{L^2 (I_0)} \big|\leq &\sum_{
       \substack{n=1 \\ n \neq m}
       }^\infty   \frac{|\mathrm e^{i2d_1\sqrt {z-(\pi n/d_2)^2}}-1|}{|\sqrt{z-(\pi n/d_2)^2}|}  |(e^\varepsilon_{n} \,,\, e^0_m )_{L^2 (I_0)}|^2
       \\
       &+ \frac{|\mathrm e^{i2d_1 \sqrt {z-(\pi m/d_2)^2}}-1|}{|\sqrt{z-(\pi m/d_2)^2}|} \, \big||(e^\varepsilon_{m} \,,\, e^0_m )_{L^2 (I_0)}|^2-1  \big| \,.
        \end{split} \label{eq-1aux}
     \end{equation}
    A direct calculation shows that for $n\neq m$ we have

\begin{align} \nonumber
\big|\big(e^\varepsilon_{n} &\,,\, e^0_m \big)_{L^2 (I_0)}\big|^2
= \frac{2}{d_2}\Big(\int_{I_0} \sin \Big (\frac{\pi nx_2}{d_2} \Big)\chi_{I_\varepsilon} (x_2) \sin \Big(\frac{\pi mx_2}{d_2} \Big) \mathrm d x_2\Big)^2\\ \nonumber &=  \frac{2}{d_2}\Big(\int_{I_\varepsilon} \sin \Big(\frac{\pi nx_2}{d_2} \Big)\sin \Big(\frac{\pi mx_2}{d_2} \Big) \mathrm d x_2 \Big)^2
\\\nonumber &= \frac{2d_2}{\pi^2} \left[ \frac{\cos\left(\frac{(n-m) \pi (2t+\varepsilon)}{2d_2}\right) \sin\left(\frac{(n-m) \pi \varepsilon}{2d_2}\right)}{n-m} - \frac{\cos\left(\frac{(n+m) \pi (2t+\varepsilon)}{2d_2}\right) \sin\left(\frac{(n+m) \pi \varepsilon}{2d_2}\right)}{n+m} \right]^2
\\ \nonumber &\leq \frac{4d_2}{\pi^2} \left[ \frac{\cos^2\left(\frac{(n-m) \pi (2t+\varepsilon)}{2d_2}\right) \sin^2\left(\frac{(n-m) \pi \varepsilon}{2d_2}\right)}{(n-m)^2} + \frac{\cos^2\left(\frac{(n+m) \pi (2t+\varepsilon)}{2d_2}\right) \sin^2\left(\frac{(n+m) \pi \varepsilon}{2d_2}\right)}{(n+m)^2} \right].
\end{align}
The expressions corresponding to the above sum will be denoted as $\mathrm{I}_{n-m}$ and $\mathrm{I}_{n+m}$
Furthermore, for $z\in \mathcal B (\xi_{l,k})$ we estimate
\begin{align} \nonumber
\sum_{
       \substack{n=1 \\ n \neq m} }^\infty \frac{|\mathrm e^{i2d_1 \sqrt {z-(\pi n/d_2)^2}}-1|} {|\sqrt{z-(\pi n/d_2)^2}|}\mathrm{I}_{n-m} =&\sum_{
       \substack{n=1 \\ n \neq m} }^\infty  \frac{|\mathrm e^{i2d_1 \sqrt {z-(\pi n/d_2)^2}}-1|} {|\sqrt{z-(\pi n/d_2)^2}|}  \frac{1}{(n-m)^2}\cos^2\left(\frac{(n-m) \pi (2t+\varepsilon)}{2d_2}\right) \sin^2\left(\frac{(n-m) \pi \varepsilon}{2d_2}\right) \\ \nonumber \leq & \sum_{
       \substack{n=1 \\ n \neq m} }^\infty  \frac{|\mathrm e^{i2{d_1} \sqrt {z-(\pi n/d_2)^2}}-1|} {|\sqrt{z-(\pi n/d_2)^2}|}  \frac{1}{(n-m)^2} \sin^2\left(\frac{(n-m) \pi \varepsilon}{2d_2}\right)\\
       \nonumber
 = &\sum_{\substack{w=-m+1 \\ w \neq 0}}^\infty \frac{|\mathrm e^{i2d_1 \sqrt {z-(\pi (w+m)/d_2)^2}}-1|}{|\sqrt{z-(\pi (w+m)/d_2)^2}|}  \frac{1}{w^2} \sin^2\left(\frac{w \pi \varepsilon}{2d_2}\right)\,.
\end{align} 
Proceeding in the same way as in the proof of Lemma~\ref{le-norm_1}, we show that there exists a constant \( C \), depending on $w, m$ and  $d_2$, such that
$$
 \frac{|\mathrm e^{i2d_1 \sqrt {z-(\pi (w+m)/d_2)^2}}-1|} {|\sqrt{z-(\pi (w+m)/d_2)^2}|} \leq C\frac{1}{|w+m|} \,,\quad \mathrm{for}\quad w\geq 1-m\,,
$$
for $z\in \mathcal B (\xi_{l,k})$; see (\ref{eq-Gnest}). This enables us to establish the bound
$$\sum_{\substack{w=-m+1 \\ w \neq 0}}^\infty \frac{|\mathrm e^{i2d_1 \sqrt {z-(\pi (w+m)/d_2)^2}}-1|}{|\sqrt{z-(\pi (w+m)/d_2)^2}|}  \frac{1}{w^2} \sin^2\left(\frac{w \pi \varepsilon}{2d_2}\right)\leq
C\sum_{\substack{w=-m +1\\ w \neq 0}}^\infty \frac{1}{(w+m)}  \frac{1}{w^2} \sin^2\left(\frac{w \pi \varepsilon}{2d_2}\right)\,.
$$
The last sum can decomposed onto $\sum_{\substack{w=-m+1 }}^{-1}$ and $\sum_{\substack{w=1}}^{\infty}$.  In the first case we have the following  asymptotics
\begin{equation} \label{eq-aux12}
\sum_{w=-m +1}^{-1} \frac{1}{(w+m)w^2} \sin^2\left(\frac{w \pi \varepsilon}{2d_2}\right) = \mathcal O (\varepsilon^2) \,,
\end{equation}
which is not uniform with respect to $m$\footnote{Additional remarks are given in the following discussion.}, and in the second case, we get
\begin{equation}
\begin{split} \label{eq-auxdecomp}
    \sum_{w=1 }^{\infty} \frac{1}{(w+m)w^2} \sin^2\left(\frac{w \pi \varepsilon}{2d_2}\right) \leq &
     \sum_{w=1 }^{\infty} \frac{1}{w^3} \sin^2\left(\frac{w \pi \varepsilon}{2d_2}\right) =
    \varepsilon^3 \Big[ \sum_{w'\in\{\varepsilon, 2\varepsilon, ...\} }\frac{1}{w'^3} \sin^2\left(\frac{w' \pi }{2d_2}\right)\Big] \\  =&
    \varepsilon^3 \Big[ \sum_{w'\in\{\varepsilon, 2\varepsilon, 3\varepsilon...N_0\} }\frac{1}{w'^3} \sin^2\left(\frac{w' \pi }{2d_2}\right) + \sum_{w'\in\{N_0+\varepsilon, N_0+2\varepsilon...\} }\frac{1}{w'^3} \sin^2\left(\frac{w' \pi }{2d_2}\right)\Big]\,,
\end{split}
\end{equation}
where  $N_0$ is a positive number.
The first sum in the bracket can be bounded by the number of components $\frac{N_0}{\varepsilon}$. The second sum in the bracket can be bounded by a constant undependent of $\varepsilon$. Therefore
$$\sum_{
       \substack{n=1 \\ n \neq m} }^\infty \frac{|\mathrm e^{i2d_1 \sqrt {z-(\pi n/d_2)^2}}-1|} {|\sqrt{z-(\pi n/d_2)^2}|}\mathrm{I}_{n-m}=\mathcal O (\varepsilon^2)\,.
$$
In the same way we show
$$\sum_{
       \substack{n=1 \\ n \neq m} }^\infty \frac{|\mathrm e^{i2d_1 \sqrt {z-(\pi n/d_2)^2}}-1|} {|\sqrt{z-(\pi n/d_2)^2}|}\mathrm{I}_{n+m}=\mathcal O (\varepsilon^2)\,.
$$
Furthermore, an explicit calculus shows that
$$
|(e^\varepsilon_{m} \,,\, e^0_m )_{L^2 (I_0)}|^2-1  =\mathcal O (\varepsilon^2)\,.
$$Consequently all terms in (\ref{eq-1aux}) behave as $\mathcal O (\varepsilon^2)$.
\end{proof}
\\ \\
 The above result shows that the first term  in (\ref{eq-asypJ}) given by
$(1+o_\varepsilon^2)\big(e^0_k \,,H^\varepsilon (z(\varepsilon ) ) e^0_k\big)_{L^2 (I_0)}$ behaves as $\mathcal O (\varepsilon^2)$, since  $o_\varepsilon = o(1)$.
The next  statement  provides the  estimates for the remaining term of $J_\varepsilon$ from~(\ref{eq-asypJ}).
\begin{lemma}\label{le-norm_2}
We have
\begin{equation} \label{eq-estHmain}
    \big| \big( e^0_k\,, H^\varepsilon(z (\varepsilon )  )\eta_k ^\varepsilon (z(\varepsilon)) \,\big)_{L^2 (I_0)}\big|= \mathcal O (\epsilon^2)\,.
\end{equation}
\end{lemma}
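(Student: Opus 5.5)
The plan is to reduce \eqref{eq-estHmain} to two bounds of order $\varepsilon$: one for $\|H^\varepsilon(z)e_k^0\|$ and one for $\|\eta_k^\varepsilon(z)\|$, the latter sharpened from the $\mathcal O(\varepsilon^{1/2})$ of Corollary~\ref{cor:ef-asymp}. Two observations make this possible. First, $H^\varepsilon(z)$ is built from the real rank-one operators $P_n^\varepsilon-P_n^0$, so its adjoint is $H^\varepsilon(z)^*=\sum_n\overline{G_n(z)}(P_n^\varepsilon-P_n^0)$. This gives
\[
\big|\big(e_k^0,H^\varepsilon(z)\eta_k^\varepsilon\big)\big|\le \|H^\varepsilon(z)^*e_k^0\|\,\|\eta_k^\varepsilon(z)\|.
\]
The bound on $\|H^\varepsilon(z)^*e_k^0\|$ is the same as the one for $H^\varepsilon(z)e_k^0$ below, since $|\overline{G_n}|=|G_n|$. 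Second, the losses of $\varepsilon^{1/2}$ in Lemma~\ref{le-norm_1} come only from testing against an arbitrary $f$. When $f=e_k^0$ the overlaps with the gap are of order $\varepsilon$.

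\emph{Step 1: $\|H^\varepsilon(z(\varepsilon))e_k^0\|=\mathcal O(\varepsilon)$.} Write $(e_n^\varepsilon,e_k^0)=\delta_{nk}-c_n$ with $c_n:=(\breve e_n^\varepsilon,e_k^0)$. Then
\[
H^\varepsilon(z)e_k^0=-G_k(z)\,\breve e_k^\varepsilon-\sum_{n}G_n(z)\,c_n\,e_n^\varepsilon .
\]
Since $|\breve e_n^\varepsilon|\le\sqrt{2/d_2}$, we have $|c_n|\le 2\varepsilon/d_2$ uniformly in $n$. The series term is the restriction to $I_\varepsilon$ of $\sum_n G_nc_ne_n^0$. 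By orthonormality of $\{e_n^0\}$ and \eqref{eq-Gnest}, its norm is at most
\[
\Big(\sum_n|G_n|^2|c_n|^2\Big)^{1/2}\le C\varepsilon\Big(\sum_n n^{-2}\Big)^{1/2}=\mathcal O(\varepsilon).
\]
For the first term, $\|\breve e_k^\varepsilon\|=\mathcal O(\varepsilon^{1/2})$ and $G_k(z(\varepsilon))=\mathcal O(|\delta(\varepsilon)|)$ because $G_k(\xi_{l,k})=0$. So I need the a priori bound $\delta(\varepsilon)=\mathcal O(\varepsilon^{1/2})$. I would obtain it from \eqref{eq-asymzeta}: there $J_\varepsilon=\mathcal O(\varepsilon^{1/2})$, and $(e_k^0,K^0(z)e_k^0)=G_k(z)$ has nonzero derivative at $\xi_{l,k}$ by \eqref{eq-neq}. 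This gives $G_k(z(\varepsilon))\,\|\breve e_k^\varepsilon\|=\mathcal O(\varepsilon)$.

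\emph{Step 2: $\|\eta_k^\varepsilon(z)\|=\mathcal O(\varepsilon)$.} By \eqref{eq-normeta}, $\|\eta_k^\varepsilon\|=\|(P_k^{K^\varepsilon}(z)-P_k^0)e_k^0\|$. Instead of bounding the operator norm as in Lemma~\ref{lem:proj-stability}, I apply the resolvent identity directly to the vector $e_k^0$:
\[
(P_k^{K^\varepsilon}-P_k^0)e_k^0=\frac{1}{2\pi i}\int_{\Gamma_k}(K^\varepsilon-\lambda)^{-1}H^\varepsilon\,(K^0-\lambda)^{-1}e_k^0\,d\lambda .
\]
Since $(K^0-\lambda)^{-1}e_k^0=(\lambda_k-\lambda)^{-1}e_k^0$, and using \eqref{eq:res-eps-bound} together with $|\lambda_k-\lambda|=\upsilon$ on $\Gamma_k$, this yields
\[
\|\eta_k^\varepsilon\|\le \frac{|\Gamma_k|}{2\pi}\,\frac{\|H^\varepsilon e_k^0\|}{\upsilon(\upsilon-\|H^\varepsilon\|)}=\mathcal O(\varepsilon)
\]
by Step 1. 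Combining Steps 1 and 2 with the Cauchy--Schwarz bound above gives $\mathcal O(\varepsilon)\cdot\mathcal O(\varepsilon)=\mathcal O(\varepsilon^2)$.

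\emph{Main obstacle.} The delicate point is the contour $\Gamma_k$ in Step 2. The eigenvalue of $K^0(z)$ belonging to $e_k^0$ is $G_k(z)$, which is near $0$, and the other eigenvalues $G_n(z)$ accumulate at $0$. So the separation $\upsilon$ used in Lemma~\ref{lem:proj-stability} is not automatic. I would first try to rely on that lemma's framework as stated, taking $\upsilon$ fixed and uniform along the branch $z(\varepsilon)$. If that is not justified, the fallback is to work with the Schur complement with respect to $\operatorname{span}\{e_k^0\}$. From the eigen-equation, $\eta_k^\varepsilon$ solves
\[
(QK^\varepsilon Q-\zeta_k)\,\eta_k^\varepsilon=-\alpha_\varepsilon\,QH^\varepsilon e_k^0,\qquad Q=I-P_k^0 .
\]
One would then need invertibility of $QK^\varepsilon Q$ on $\operatorname{Ran}Q$ with a controlled inverse, at least on the relevant vectors, and this is where I expect the real difficulty to lie.
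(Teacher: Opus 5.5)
\textbf{Where the proposal breaks.} Step~1 and the adjoint/Cauchy--Schwarz reduction are fine, given the a priori bound $G_k(z(\varepsilon))=\mathcal O(\varepsilon^{1/2})$. The argument fails in Step~2, and the obstacle you flag at the end is fatal rather than technical: the bound $\|\eta_k^\varepsilon\|=\mathcal O(\varepsilon)$ is false whenever $e_k^0(t)\neq 0$.

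Since $P_n^\varepsilon=\chi P_n^0\chi$ with $\chi:=\chi_{I_\varepsilon}$, one has $K^\varepsilon(z)=\chi K^0(z)\chi$ on $L^2(I_0)$. Consequently:
\begin{itemize}
\item every function supported in $\bar I_\varepsilon$ is annihilated, so $0$ is an eigenvalue of $K^\varepsilon(z)$ of infinite multiplicity;
\item a contour carrying a rank-one Riesz projector as in Corollary~\ref{cor:ef-asymp} cannot enclose $0$, so $e_k^{K^\varepsilon}(z)$ belongs to a nonzero eigenvalue $\kappa$;
\item hence $e_k^{K^\varepsilon}=\kappa^{-1}\chi K^0\chi\, e_k^{K^\varepsilon}$ vanishes on the gap, and this persists for limits of such vectors along the branch $z(\varepsilon)$.
\end{itemize}
Restricting $e_k^{K^\varepsilon}=\alpha_\varepsilon e_k^0+\eta_k^\varepsilon$ to $\bar I_\varepsilon$ gives $\eta_k^\varepsilon=-\alpha_\varepsilon\breve e_k^\varepsilon$ there. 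Hence $\|\eta_k^\varepsilon\|\ge\alpha_\varepsilon\|\breve e_k^\varepsilon\|$, with $\|\breve e_k^\varepsilon\|^2=\frac{2\varepsilon}{d_2}\sin^2(\pi kt/d_2)+\mathcal O(\varepsilon^2)$. So the $\mathcal O(\varepsilon^{1/2})$ of the corollary is sharp.

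In your derivation the faulty input is the uniform $\upsilon$. Because $0\in\sigma(K^0(z))$ is the accumulation point of the $G_n(z)$, a circle about $\lambda_k(z)=G_k(z)$ lying in $\rho(K^0(z))$ needs $\upsilon<|G_k(z)|$, and $|G_k(z)|\to0$ along $z(\varepsilon)$. Lemma~\ref{lem:proj-stability}, on the other hand, needs $\upsilon>\|H^\varepsilon(z)\|$. Conversely, wherever a fixed $\upsilon$ is available one has $|G_k(z)|>\upsilon$, and then Step~1 fails. The gap and non-gap parts of $H^\varepsilon(z)e_k^0$ are orthogonal, so $\|H^\varepsilon(z)e_k^0\|\ge|G_k(z)|\,\|\breve e_k^\varepsilon\|\sim\varepsilon^{1/2}$. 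The Schur-complement fallback cannot repair this, since the target estimate itself is false. With the true size of $\eta_k^\varepsilon$, your Cauchy--Schwarz bound yields only $\mathcal O(\varepsilon)\cdot\mathcal O(\varepsilon^{1/2})=\mathcal O(\varepsilon^{3/2})$.

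\textbf{Comparison with the paper.} The paper does not try to improve $\|\eta_k^\varepsilon\|$. It keeps the $\mathcal O(\varepsilon^{1/2})$ bound, expands $\eta_k^\varepsilon=\sum_{n\neq k}a_ne_n^0$, and seeks the missing $\varepsilon^{3/2}$ in overlap coefficients. In the off-diagonal part $L_1$ it uses $(e_k^0,e_{n'}^\varepsilon)=\mathcal O(\varepsilon)$ for $n'\neq k$ and $\sum_{n\neq n'}|(e_{n'}^\varepsilon,e_n^0)|^2=\mathcal O(\varepsilon)$. Note that the diagonal part $L_2$ lacks the second small factor, since $|(e_n^0,e_n^\varepsilon)|\approx1$. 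Bounding it through $\|\eta_k^\varepsilon\|$ alone therefore reproduces exactly your $\varepsilon^{3/2}$; information beyond the norm of $\eta_k^\varepsilon$ is indispensable.

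\textbf{What is actually needed.} The support structure above isolates it. With your $c_n$, one has $H^\varepsilon(z)^*e_k^0=-\overline{G_k(z)}\,\breve e_k^\varepsilon-\chi\sum_n\overline{G_n(z)}\,c_ne_n^0$ and $\eta_k^\varepsilon=-\alpha_\varepsilon\breve e_k^\varepsilon+\chi\eta_k^\varepsilon$. The cross terms vanish by disjointness of supports, and
\[
\big(e_k^0,H^\varepsilon(z)\eta_k^\varepsilon\big)=\alpha_\varepsilon G_k(z)\,\|\breve e_k^\varepsilon\|^2-\sum_n G_n(z)\,c_n\,\big(e_n^0,\chi\eta_k^\varepsilon\big).
\]
Thus the large gap part of $\eta_k^\varepsilon$ meets only the small factor $G_k(z(\varepsilon))$, which can be bootstrapped through $\zeta_k(z(\varepsilon),\varepsilon)=0$. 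The lemma then reduces to $\|\chi\eta_k^\varepsilon\|=\mathcal O(\varepsilon)$. Proving that requires control of the inverse of $\chi\big(K^0(z)-\zeta_k\big)\chi$ on $L^2(I_\varepsilon)$ near its accumulation point $0$, which no operator-norm perturbation bound supplies. That is the missing idea.
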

\begin{proof}
In order to prove the above asymptotics, we write $\eta_k ^\varepsilon (z)$ by means of the basis $\{\,e_n^0\,\}_{k=1}^\infty $, i.e.
$$\eta _k^\varepsilon (z(\varepsilon)) = \sum_{\substack{n=1 \\ n \neq k} }^\infty a _n(z (\varepsilon) )e^0_n \,.$$
This, in view of (\ref{eq-Gnest}),
yields
\begin{equation} \label{eq-estim_4}
\begin{split}
     \big| \big( e^0_k \,, H^\varepsilon (z (\varepsilon ) )\eta^\varepsilon_k (z(\varepsilon))\big)_{L^2 (I_0)}\big| \leq & C
     \sum_{\substack{n=1 \\ n \neq k} }^\infty |a_n (z(\varepsilon))|\sum_{\substack{n'=1 } }^\infty \frac{1}{n'}  \Big[\,(e^\varepsilon_{n'} \,,e^0_k  )_{L^2 (I_0) }(e^0_n \,,  e^\varepsilon_{n'})_{L^2 (I_0)} -\delta_{kn'} \delta_{n'n}\Big]\\=& C(L_1+L_2)
\end{split}
\end{equation}
where
$$
L_1=\sum_{\substack{n=1\\ n\neq k}}^\infty
|a_n(z(\varepsilon))|
\sum_{\substack{n'=1\\ n'\neq n}}^\infty \frac1{n'}
\big|
 (e_{n'}^\varepsilon,e_k^0)_{L^2(I_0)}
 (e_n^0,e_{n'}^\varepsilon)_{L^2(I_0)}
\big|\,,
$$ and
$$L_2=\sum_{\substack{n=1\\ n\neq k}}^\infty
|a_n(z(\varepsilon))|
\frac1{n}
\big|
 (e_{n}^\varepsilon,e_k^0)_{L^2(I_0)}
 (e_n^0,e_{n}^\varepsilon)_{L^2(I_0)}
\big|\,.
$$
The first term
can be estimated as
\begin{equation} \label{eq-estim_4prime}
 L_1\leq \Big( \sum_{\substack{n=1 \\ n \neq k} }^\infty |a_n (z(\varepsilon))|^2  \Big)^{1/2} \cdot \sum_{\substack{n'=1 \\ n' \neq n} }^\infty\Big[ \frac{1}{n'} \big|\big(e^0_k \,,e^\varepsilon_{n'}  )_{L^2 (I_0)}\big|\cdot \big(
 \sum_{\substack{n=1 \\ n \neq k} }^\infty
     \big|(e^\varepsilon_{n'} \,, e^0_n )_{L^2 (I_0)}
     \big|^2\big)^{1/2}\Big]\,
\end{equation}
Note that
\begin{equation}\label{eq-au1}
 \big(\sum_{\substack{n=1 \\ n \neq k} }^\infty |a_n (z(\varepsilon))|^2 \,\big)^{1/2} = \| \eta^\varepsilon_{k}(z(\varepsilon))\|_{L^2 (I_0)}= \mathcal O (\varepsilon^{1/2})\,,
\end{equation}
cf.~(\ref{eq-normeta}).  Furthermore, repeating the  arguments from the proof of Lemma~\ref{le-norm} we state that
\begin{equation}\label{eq-au2}
\tilde L_1:= \sum_{\substack{n=1 \\ n \neq n'} }^\infty
     \big|(e^\varepsilon_{n'} \,, e^0_n )_{L^2 (I_0)}\big|^2 \leq C\sum_{w=-n'+1}^\infty \frac{1}{w^2}\sin ^2\Big(\frac{w\pi {\varepsilon} }{2d_2}\Big)
     =
     \mathcal O (\varepsilon)\,,
\end{equation}
where we use the reasoning analogous to \eqref{eq-aux12} and \eqref{eq-auxdecomp}, with the difference that here the relevant denominator contains the
second power instead of the third one; consequently, the resulting
asymptotics is of order $\mathcal O(\varepsilon)$ rather than
$\mathcal O(\varepsilon^2)$.
This implies
\begin{equation}\label{eq-aux9}
L_1\leq \tilde L_1^{1/2}\|\eta_k ^\varepsilon(z(\varepsilon))\|_{L^2 (I_0)}
     \cdot \Big(\sum_{\substack{n'=1 \\ n' \neq k} }^\infty \frac{1}{n'} \big|\big(e^0_k \,,e^\varepsilon_{n'}  )_{L^2 (I_0)}\big|\Big)\,.
\end{equation}
Furthermore, applying the  analysis analogous to that from the proof of Lemma~\ref{le-norm} we state
\begin{equation}\label{eq-au3}
\sum_{\substack{n'=1 \\ n' \neq k}}^\infty\frac{1}{n'} \big|\,(e^0_k \,,e^\varepsilon_{n'}  )_{L^2 (I_0)}\big|= \mathcal O (\varepsilon)\,.
\end{equation}
Combining \eqref{eq-au1}, \eqref{eq-au2}, and \eqref{eq-au3} and substituting them into
\eqref{eq-estim_4prime}, we obtain
\( L_1 = \mathcal O(\varepsilon^2) \).
The term \(L_2\) can be estimated in the same way and satisfies the same bound.
Therefore, \eqref{eq-estHmain} follows.
\end{proof}

From the above lemma we conclude  that
$ (1+o_\varepsilon)\big(e_k^0\,, H^\varepsilon(z (\varepsilon ) )\eta^\varepsilon_k (z(\varepsilon)) \big)_{L^2 (I_0)}$ appearing in (\ref{eq-asypJ})  behaves as $\mathcal O(\varepsilon^2)$.
Summarizing the statements of Lemmae~\ref{le-norm} and \ref{le-norm_2}, and inserting them into (\ref{eq-J}), we obtain, in view of (\ref{eq-asymzeta}),
\begin{equation}\label{eq-Jcorrec}
    \zeta_k (z, \varepsilon) = (1+o_\varepsilon^2)\big(e^0_k,\, K^0 (\xi_{k,l} + \delta(\varepsilon)) e_k^0\big)_{L^2 (I_0)} + J_\varepsilon, \quad \text{where} \quad J_\varepsilon = \mathcal{O}(\varepsilon^2)\,.
\end{equation}

At this stage we are ready to formulate and finish the main result of this section.
\begin{theorem} \label{th-res_2D}
Assume that $\xi_{l,k}$ is a simple eigenvalue of $-\Delta^D({\mathcal{C}})$. Then there exists a uniquely determined function $\varepsilon \mapsto z(\varepsilon)$, $\varepsilon \in ( 0, \varepsilon_0)$ such that
$
\ker K^\varepsilon \bigl(z(\varepsilon)\bigr) \neq \emptyset,
$
with the following asymptotic expansion
\begin{equation}\label{eq-main2}
z(\varepsilon) = \xi_{l,k} + \mu(\varepsilon) + i\,\nu(\varepsilon),
\end{equation}
where the real-valued functions $\mu(\varepsilon)$ and $\nu(\varepsilon)$ satisfy
\[\mu(\varepsilon) = \mathcal{O} (\varepsilon^2) \quad \text{and} \quad \nu(\varepsilon) = \mathcal{O} (\varepsilon^2)\,.
\]

Now, suppose that $\xi_{l,k}$ has multiplicity $\mathrm N$. In this case, there exist $\mathrm N$ functions $\varepsilon \mapsto z_j(\varepsilon)$, for $j = 1, \dots,\mathrm  N$, such that
$
\ker K^\varepsilon \bigl(z_j(\varepsilon)\bigr) \neq \emptyset,
$
with each function admitting the asymptotic expansion
\[
z_j(\varepsilon) = \xi_{l,k} + \mu_j(\varepsilon) + i\,\nu_j(\varepsilon),
\]
where the corresponding functions $\mu_j(\varepsilon)$ and $\nu_j(\varepsilon)$ behave as $\mathcal{O} (\varepsilon^2)$.
\end{theorem}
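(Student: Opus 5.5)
The plan is to combine the existence statement of Lemma~\ref{le-IFT} with the expansion \eqref{eq-Jcorrec}, and to read off the size of $\delta(\varepsilon)=z(\varepsilon)-\xi_{l,k}$ from a first-order Taylor expansion of the unperturbed scalar function. First, for a simple $\xi_{l,k}$, Lemma~\ref{le-IFT} gives a unique continuous branch $\varepsilon\mapsto z(\varepsilon)$ with $\zeta_k(z(\varepsilon),\varepsilon)=0$ and $z(0)=\xi_{l,k}$. Since $e_k^{K^\varepsilon}(z)$ spans the range of the rank-one Riesz projector of Corollary~\ref{cor:ef-asymp}, it is an eigenvector of $K^\varepsilon(z)$ with eigenvalue $\zeta_k(z,\varepsilon)$. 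Hence $\zeta_k(z(\varepsilon),\varepsilon)=0$ means $e_k^{K^\varepsilon}(z(\varepsilon))\in\ker K^\varepsilon(z(\varepsilon))$, and the kernel is nontrivial. Restricting to $\varepsilon\in(0,\varepsilon_0)$ (Remark~\ref{re:extension}) gives the first claim. Nonvanishing in the $H^{1/2}$ setting of Theorem~\ref{th-BS} is not needed, because resonances are defined through the kernel of the continued operator.

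Next, I would insert $z=\xi_{l,k}+\delta(\varepsilon)$ into \eqref{eq-Jcorrec}. Because $K^0(z)$ is diagonal in $\{e_n^0\}$, we have $(e_k^0,K^0(z)e_k^0)=G_k(z;d_1,d_1)$. This function is analytic near $\xi_{l,k}$, vanishes at $\xi_{l,k}$, and its derivative there was computed in the proof of Lemma~\ref{le-IFT} to be $d_1^3/\pi^2\neq0$. So $G_k(\xi_{l,k}+\delta)=\frac{d_1^3}{\pi^2}\delta+\mathcal O(\delta^2)$. The equation $\zeta_k=0$ then reads
\[
(1+o_\varepsilon^2)\Big(\tfrac{d_1^3}{\pi^2}\delta(\varepsilon)+\mathcal O(\delta(\varepsilon)^2)\Big)=-J_\varepsilon=\mathcal O(\varepsilon^2).
\]
Since $\delta(\varepsilon)\to0$ by continuity, the bracket is bounded below by $c|\delta(\varepsilon)|$ for small $\varepsilon$, and $1+o_\varepsilon^2\to1$. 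Therefore $|\delta(\varepsilon)|\le C\varepsilon^2$. Setting $\mu=\Re\delta$ and $\nu=\Im\delta$ gives \eqref{eq-main2}.

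The main obstacle is uniformity. Lemmas~\ref{le-norm} and \ref{le-norm_2} are stated for $z\in\mathcal B(\xi_{l,k})$, but here they must hold along the moving point $z(\varepsilon)$, and with constants independent of $z$ in that ball. I would check that the constant in \eqref{eq-Gnest} is uniform on $\mathcal B(\xi_{l,k})$. The only delicate index is a possible threshold $n$ with $(\pi n/d_2)^2=\xi_{l,k}$, where $G_n=\mathcal O(\delta)$ is still bounded. With that check, the $\mathcal O(\varepsilon^2)$ bound on $J_\varepsilon$ holds uniformly, and it also absorbs the $\mathcal O(\varepsilon^{1/2})$ eigenvector corrections from Corollary~\ref{cor:ef-asymp}.

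For multiplicity $\mathrm N$, the eigenfunctions of $-\Delta^D(\mathcal C)$ at $\xi_{l,k}$ are $e^0_{l_j}(x_1)e^0_{k_j}(x_2)$ with distinct transverse indices $k_j$, $j=1,\dots,\mathrm N$. Each $e^0_{k_j}$ is a simple eigenvector of $K^0(z)$ with eigenvalue $G_{k_j}(z)$, which vanishes at $\xi_{l,k}$, so I would run the scalar argument separately for each $k_j$. The difficulty is that all $G_{k_j}(\xi_{l,k})=0$, so the cluster near $0$ is degenerate and the contour $\Gamma_{k_j}$ cannot separate the individual eigenvalues at $z=\xi_{l,k}$. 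I would first try to work with the full $\mathrm N$-dimensional Riesz projector onto $\mathrm{span}\{e^0_{k_j}\}$. This reduces the problem to $\det\big[\mathrm{diag}(G_{k_j}(z))+\mathcal O(\varepsilon^2)\big]=0$. Since each $G_{k_j}$ has a simple zero at $\xi_{l,k}$, the determinant has a zero of order $\mathrm N$ there, and a Rouché-type argument on a disc of radius $C\varepsilon^2$ yields $\mathrm N$ roots $z_j(\varepsilon)$ with $|z_j(\varepsilon)-\xi_{l,k}|=\mathcal O(\varepsilon^2)$.
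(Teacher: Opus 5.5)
For a simple eigenvalue your argument is the paper's own. You use the branch from Lemma~\ref{le-IFT}, the expansion \eqref{eq-Jcorrec}, and linearization of $G_k(\xi_{l,k}+\delta;d_1,d_1)$ at its simple zero. (The derivative there is $d_1^3/(\pi l)^2$ rather than $d_1^3/\pi^2$, but only its nonvanishing matters.) For multiplicity $\mathrm N$ you depart from the paper, which simply reapplies Lemma~\ref{le-IFT} to each $\zeta_{k_j}$. Your objection to that step is justified: since $G_{k_1}(\xi_{l,k})=\dots=G_{k_{\mathrm N}}(\xi_{l,k})=0$, the rank-one hypothesis of Corollary~\ref{cor:ef-asymp} fails at the base point of the implicit-function argument.

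Your remedy, however, has a genuine gap. The $\mathrm N$-dimensional Riesz projector needs the cluster $\{G_{k_j}(z)\}$ to be isolated in the spectrum, but in $L^2(I_0)$ the point $0$ is never isolated. By \eqref{eq-Gnest} the eigenvalues $G_n(z;d_1,d_1)\approx d_2/(\pi n)$ of $K^0(z)$ accumulate at $0$. Moreover, for $\varepsilon>0$ every $f$ supported in the gap $\bar I_\varepsilon$ satisfies $K^\varepsilon(z)f=0$, so $0$ is an eigenvalue of $K^\varepsilon(z)$ of infinite multiplicity for every $z$. In particular, ``$\ker K^\varepsilon(z)\neq\{0\}$'' is informative only when read on $I_\varepsilon$.

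As $z\to\xi_{l,k}$ the cluster collapses onto this point. On the circle $|\delta|=C\varepsilon^2$ where you want to apply Rouch\'e, its distance to the rest of the spectrum is $\mathcal O(\varepsilon^2)$. The perturbation, by contrast, is only controlled by $\|H^\varepsilon(z)\|=\mathcal O(\varepsilon^{1/2})$. So no contour isolates an $\mathrm N$-dimensional spectral subspace. A Schur--Feshbach reduction onto $\mathrm{span}\{e^0_{k_j}\}$ also fails in $L^2(I_0)$, because the complementary block $QK^\varepsilon(z)Q$ is compact and hence not boundedly invertible. Even granting some reduction, the paper does not supply the ``$+\mathcal O(\varepsilon^2)$'' in your determinant. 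Lemmas~\ref{le-norm} and~\ref{le-norm_2} only control quantities built from a single $e^0_k$. They say nothing about $(e^0_{k_i},H^\varepsilon e^0_{k_j})$ for $i\neq j$, nor about the second-order coupling through the complement.

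The same obstruction is already present for $\mathrm N=1$, so the real difficulty is not the uniformity of the constant in \eqref{eq-Gnest}. The rank-one projector you invoke needs a radius $\upsilon<|G_k(z)|/2=\mathcal O(|\delta|)$ together with $\|H^\varepsilon\|<\upsilon$ (Lemma~\ref{lem:proj-stability}). This is impossible for $|\delta|\sim\varepsilon^2$, because $\|H^\varepsilon\|$ really is of size $\varepsilon^{1/2}$; test it on a normalized function supported in the gap. Worse, at an actual zero of $\zeta_k$ the tracked eigenvalue sits on the infinitely degenerate point $0$, where no rank-one spectral projector exists.

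The pieces of $J_\varepsilon$ also cannot be bounded separately. For $n\lesssim 1/\varepsilon$ one has $G_n(z)\approx d_2/(\pi n)$ and $|(e^\varepsilon_n,e^0_k)|^2\approx\varepsilon^2|e^0_n(t)e^0_k(t)|^2$. Hence $(e^0_k,H^\varepsilon(z)e^0_k)$ is of order $\varepsilon^2\log(1/\varepsilon)$ whenever $e^0_k(t)\neq0$. The count after \eqref{eq-auxdecomp} overlooks that the summands grow like $1/w'$ for small $w'$. So $J_\varepsilon=\mathcal O(\varepsilon^2)$ could only come from a cancellation with the $\eta$-term.

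Your approach therefore needs two things. First, a Grushin-type reduction posed on $I_\varepsilon$ in a setting where the complementary block is uniformly invertible, for instance the single-layer mapping $\tilde H^{-1/2}(I_\varepsilon)\to H^{1/2}(I_\varepsilon)$. Second, $\mathcal O(\varepsilon^2)$ bounds on the entries of the resulting effective $\mathrm N\times\mathrm N$ matrix taken as a whole. With that in hand, your Rouch\'e count on $|\delta|=C\varepsilon^2$ does yield $\mathrm N$ roots, counted with multiplicity, with $|\delta|=\mathcal O(\varepsilon^2)$.
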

\begin{proof}
Assume first that $\xi_{l,k}$ is a simple eigenvalue. Then there exists a unique continuous function $\varepsilon\mapsto z(\varepsilon)$ such that
$\zeta_k (z (\varepsilon),\varepsilon) = 0$ and $z(0)=\xi_{l,k}$, cf.~Lemma~\ref{le-IFT}. Assuming now that $z=\xi_{l,k} +\delta $ we analyze the first component of (\ref{eq-Jcorrec}), namely
\begin{equation}
  \begin{split}\label{eq-aux7}
\big(  e_k^0\,, \,K^0 (\xi_{l,k}  + \delta )e_k^0 \big)_{L^2 (I_0)}
=&
 G_k (\xi_{l,k}  + \delta ;  d_1, d_1)
 \\ = &
 \frac{ e^{2d_1i \sqrt{ \xi_{l,k} + \delta - (\pi k/d_2)^2  }} - 1 }{ i \sqrt{ \xi_{l,k} + \delta - (\pi k/d_2  )^2 }} \,
= \frac{ e^{2d_1 i \sqrt{ \delta + ( \pi l/d_1)^2 }} - 1 }{ i \sqrt{ \delta + (\pi l/d_1)^2 } }
 \,,
  \end{split}
\end{equation}
where we again used the fact that $\{e^0_n\}_{n=1}^\infty $ form the orthonormal basis.

Applying now the Taylor expansion of $\delta \mapsto \sqrt{\delta + \omega^2}= \omega +\frac{1}{2\omega }\delta +\mathcal O(\delta^2)$ to
the function
$$
g(\delta) := \frac{i}{\sqrt{\delta + \omega^2}} \left( e^{2d_1 i \sqrt{\delta + \omega^2}} - 1 \right)\,,
$$
we get
$$
g(\delta) = \frac{i}{\omega} \left( e^{2d_1 i \omega} - 1 \right) + \delta \left( -\frac{d_1 e^{2d_1 i \omega}}{\omega^2} - \frac{i (e^{2d_1 i \omega} - 1)}{2\omega^3} \right) + O(\delta^2)\,.
$$

Setting
$
\omega = \frac{\pi l}{d_1}
$, we conclude from  (\ref{eq-aux7}) that
$$
\big(  e_k^0\,, \,K^0 (\xi_{l,k}  + \delta )e_k^0 \big)_{L^2 (I_0)} =  -\frac{d_1^3}{(\pi l)^2}\, \delta +\mathcal O (\delta ^2)\,.
$$
Therefore the dominating component in the above expression behaves as  $ -\frac{d_1^3}{(\pi l)^2}\, \delta $. In view of (\ref{eq-Jcorrec}) and taking into account the fact that $\zeta_k (z(\varepsilon), \varepsilon)=0$ and $J_\varepsilon = \mathcal O (\varepsilon^2)$, we conclude that
$$
\delta = \mathcal O (\varepsilon^2)\,.
$$
This completes the proof of the first part of the theorem under the assumption that $\xi_{l,k}$ is simple.

We now consider the case where $\xi_{l,k}$ is degenerate with multiplicity $\mathrm{N}$. This means that there exists $\mathrm N$ pairs of integers
$
(l_j, k_j)\,,...
$, $j=1,...,\mathrm N$ such that
\begin{equation}\label{eq-multi}
    \xi_{l,k}=\Big(\frac{l_{1}\pi }{d_1}\Big)
^{2} +\Big(\frac{k_1\pi }{d_2}\Big)^2=...=\Big(\frac{l_{\mathrm N}\pi }{d_1}\Big)
^{2}+\Big(\frac{k_{\mathrm N}\pi }{d_2}\Big)
^{2}
\end{equation}
Let $e^0_{l_j}\,e_{k_j}^0$, $j=1,...\mathrm{N}$ stand for the corresponding eigenfunctions.
Then for any $e_{k_j}^0$ we have
\begin{equation}
\zeta_{k_j } (\xi_{l,k}\,,\, 0)= \big(  e_{k_j}^0\,, \,K^0 (\xi_{l,k} )\,e_{k_j}^0 \big)_{L^2 (I_0)}\,=
 G_{k_j} (\xi_{l,k} ;  d_1, d_1)
 =
 \frac{ e^{2d_1i \sqrt{ \xi_{l,k}  - (\pi k_j /d_2)^2  }} - 1 }{ i \sqrt{ \xi_{l,k} - (\pi k_j/d_2  )^2 }}
 \,.
\end{equation}

Applying (\ref{eq-multi}), we obtain
$$
\xi_{l,k} - \left( \frac{\pi k_j}{d_2} \right)^2 = \left( \frac{\pi l_j}{d_1} \right)^2,
$$
i.e.,
$
\sqrt{ \xi_{l,k} - ( \pi k_j/d_2)^2 } = \pi l_j/d_1\,,
$
which implies that
\begin{equation} \label{eq-zerocond}
\zeta_{k_j}(\xi_{l,k},\, 0) = 0, \quad j=1,\dotsc,\mathrm{N}\,.
\end{equation}
This means that there exist \(\mathrm{N}\) functions
\[
\zeta_{k_j}(z, \varepsilon) = \big( e^{K^\varepsilon}_{k_j}(z),\, K^\varepsilon(z)\, e^{K^\varepsilon}_{k_j}(z) \big)_{L^2(I_0)}\,
\]
fulfilling (\ref{eq-zerocond}).
Employing again Lemma~\ref{le-IFT}
we conclude that  there exist $\mathrm N$ continuous  functions \(z_j(\varepsilon)\) satisfying
\[
\zeta_{k_j}(z_j(\varepsilon),\, \varepsilon) = 0.
\]
The asymptotic behavior of \(z_j(\varepsilon)\) can be studied by repeating the arguments used in the discussion of the simple pole case.
\end{proof}

\begin{remark}
    \rm{
    There  arises the question: how the components $\mu(\varepsilon)$ and $\nu(\varepsilon)$ in (\ref{eq-main2}) depend on the quantum numbers $l$ and $k$.
    It should be emphasized that the reasoning developed in Lemma~\ref{le-norm} shows that the asymptotics in equation~(\ref{eq-main2}) are not uniform with respect to $k$. This non uniformity follows, for example, from the asymptotics in equation~(\ref{eq-aux12}), which behave like $\log m$ for large $m$. Consequently, it affects (\ref{eq-Hasymp}), (\ref{eq-asypJ})
    and~(\ref{eq-main2}).    }
\end{remark}

\begin{remark} \rm{ While the leading terms of both the real and imaginary parts of the resonance pole are generically quadratic in the relative window measure, obtaining an explicit formula for the corresponding coefficient in terms of the embedded–eigenvalue eigenfunctions is substantially more delicate. In small–aperture coupling problems the leading coefficient is typically governed by the trace of the unperturbed mode on the opening. If this trace vanishes, for example  because the eigenfunction has a node on the window or due to a symmetry constraint then the leading term cancels and the resonance width is determined by higher order effects.
Therefore, if the eigenfunction corresponding to the embedded eigenvalue vanishes on the window, the $\mathcal{O}(\varepsilon^2)$ estimate is in general not expected to be optimal. A detailed analysis of this situation is beyond the scope
of the present paper.}
\end{remark}

    \section{Resonances in higher dimensional systems: A study of three-dimensional waveguides}\label{sec-3D}

\subsection{Resonances induced by a rectangular gap in a waveguide}
In this section
we study a three-dimensional waveguide characterized by width $d_2$ and height $d_3$:
$$ \Sigma:= \{(x_1, x_2, x_3 )\,:\, x_1 \in [0, \infty )\,,\,\,\,x_2 \in [\,0 , d_2] \,,\,\,\,x_3 \in [\,0 , d_3] \,  \}\,. $$
Define
$$
\bar I_\varepsilon :=\{ (d_1 , x_2, x_3)\,, \quad  x_2\in \bar{I}_{2,\varepsilon_2}=[t_2, t_2+\varepsilon_2] \,,\quad
 x_3\in \bar{I}_{3,\varepsilon_3}=[t_3, t_3+\varepsilon_3]
\}\,,
$$
where $\varepsilon_i \geq  0$, $t_i \geq 0$, $t_i+\varepsilon_i \leq d_i$ for $i=2,3$.
In the following we assume that
\begin{equation} \label{eq-rescaling}
\varepsilon = \varepsilon_2=\frac {\varepsilon_3}{a}\,,
\end{equation}
where $a$ is a positive constant. Then
$$I_\varepsilon := I_0 \setminus \bar I_\varepsilon\,,$$
where $I_0 =\{(d_1, x_2, x_3)\,:\,  \,x_2 \in [0,d_2]\,, \,x_3 \in [0,d_3] \}$. In the following we will also use $I_{j,\varepsilon_j}=I_{j,0}\setminus \bar{I}_{j,\varepsilon_j}$.

We define the Hamiltonian in the analogous way as before
$$
-\Delta^D_{I_\varepsilon} \,:\, W^{2,2}_0 (\Sigma \setminus I_\varepsilon )\to L^2 (\Sigma)\,.
$$
Its resolvent  $R(z):=(-\Delta ^D - z)^{-1}$  is now defined as an integral operator with the kernel
\begin{equation}\label{eq-reslvent_free_3D}
G (z;(x_1 , x_2, x_3), (y_1, y_2, y_3)):= \sum _{k_2=1}^\infty \sum _{k_3=1}^\infty G_{k_2, k_3} (z;x_1,y_1 ) e^0_{k_2}(x_2) e^0_{k_2} ( y_2) e^0_{k_3}(x_3) e^0_{k_3} ( y_3)\,,
\end{equation}
where \begin{equation}\label{eq-G_k_3D}
    G_{k_2,k_3} (z;x_1,y_1) = \frac{\mathrm e ^{i\tau_{k_2,k_3}(z)|x_1 + y_1|}}{i\tau_{k_2,k_3}(z)}-\frac{\mathrm e ^{i\tau_{k_2,k_3}(z)|x_1 - y_1|}}{i\tau_{k_2,k_3}(z)}\,,
\end{equation}
with  $\tau_{k_2,k_3}(z):=\sqrt{z-(\pi k_2/d_2)^2-(\pi k_3/d_3)^2}$ and $\Im \,\tau_{k_2,k_3} (z) >0$,  $k_2, k_3\in \mathbb N$.
The corresponding embedded eigenvalues take the form
\begin{equation}
    \xi_{k_1,k_2,k_3}=\Big(\frac{\pi k_1 }{d_1}\Big)^2+\Big(\frac{\pi k_2 }{d_2}\Big)^2+\Big(\frac{\pi k_3 }{d_3}\Big)^2\,.
\end{equation}
We define
$$
e^0_{n_j} = \sqrt{\frac{2}{d_j} }\sin \Big(\frac{\pi n_j x_j}{d_j}\Big)\,, \quad j=1,2,3.
$$
and for $j=2,3$ we introduce notation $e^{\varepsilon_j}_{n_j} =e^{0}_{n_j}\, \chi_{I_{\varepsilon_j}}$, and the corresponding projectors
$$P_{ n_j}^{\varepsilon_j }= (e^{\varepsilon_j}_{n_j}\,, \cdot )_{L^2 (I_{\varepsilon_j})} e^{\varepsilon_j}_{n_j}\,.
$$
Furthermore, we define the key operator
 \begin{equation} \label{eq-defK_ext_3D}
 K^\varepsilon (z) = \sum_{n_2=1}^\infty\sum_{n_3=1}^\infty G_{n_2, n_3} (z;d_1, d_1) P_{n_2}^{\varepsilon_2}P_{n_3}^{\varepsilon_3}\,:\, L^2 (I_0) \to L^2 (I_0)\,,
 \end{equation}
In analogy to the two-dimensional model, we   employ the decomposition
$$
 K^\varepsilon (z) =  K^0 (z)+ H^\varepsilon (z)\,.
$$
The main result of this section establishes asymptotics $\mathcal O (\varepsilon ^4)$ of the perturbative terms.

\begin{theorem} \label{th-res_3D}
Assume  that (\ref{eq-rescaling}) holds and  $\xi_{k_1,k_2,k_3}$ has multiplicity $\mathrm N$. Then there exist $\mathrm N$ continuous  functions $\varepsilon \mapsto z_j(\varepsilon)$, for $j = 1, \dots,\mathrm  N$, such that
$
\ker K^\varepsilon \bigl(z_j(\varepsilon)\bigr) \neq \emptyset,
$
with each function admitting the asymptotic expansion
\[
z_j(\varepsilon) = \xi_{k_1,k_2,k_3} + \mu_j(\varepsilon) + i\,\nu_j(\varepsilon),\quad  \mu_j(\varepsilon) =\mathcal{O} (\varepsilon^4)\,,\quad \nu_j(\varepsilon) =\mathcal{O} (\varepsilon^4)\,.
\]
\end{theorem}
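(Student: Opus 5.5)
The plan is to repeat the two-dimensional scheme of Sections~\ref{sec-Resonances}--\ref{Asymptotics}. The one-dimensional estimates are replaced by tensor-product estimates in the variables $x_2,x_3$. The aperture now has area $a\varepsilon^2$, so $\varepsilon$ in the 2D bounds is replaced by $a\varepsilon^2$.

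\textbf{Step 1: Birman--Schwinger criterion and decomposition.} First I will note that Theorem~\ref{th-BS} carries over verbatim, with $I_\varepsilon$ now a rectangle minus a small rectangle and $K^\varepsilon(z)$ given by (\ref{eq-defK_ext_3D}). The analytic continuation is done channel by channel in $\tau_{n_2,n_3}(z)$. As in (\ref{eq-Gnest}), $|G_{n_2,n_3}(z;d_1,d_1)|\le C(n_2^2+n_3^2)^{-1/2}$ uniformly in $z\in\mathcal B(\xi_{k_1,k_2,k_3})$; the at most finitely many threshold channels contribute $\mathcal O(\delta)$. Writing $e^\varepsilon_{n_2,n_3}=e^0_{n_2}e^0_{n_3}\chi_{I_\varepsilon}$, I set $H^\varepsilon=\sum G_{n_2,n_3}(P^\varepsilon_{n_2,n_3}-P^0_{n_2,n_3})$, with the projector built on $e^\varepsilon_{n_2,n_3}$. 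This is the natural reading of (\ref{eq-defK_ext_3D}), since the characteristic function of $I_\varepsilon$ does not factorize.

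\textbf{Step 2: analogue of Lemma~\ref{le-norm_1}.} The proof of Lemma~\ref{le-norm_1} applies with the Cauchy--Schwarz bound (\ref{eq-CS1}) giving $|(f,\breve e^\varepsilon_{n_2,n_3})|\le |\bar I_\varepsilon|^{1/2}\|f\|=\mathcal O(\varepsilon)\|f\|$. However, $\sum_{n_2,n_3}(n_2^2+n_3^2)^{-1}$ diverges logarithmically, so that Cauchy--Schwarz step must be redone. I would use Bessel's inequality on the $f$-side and the $\ell^2$ summability of the coefficients $(\breve e^\varepsilon_{n_2,n_3},f)$ in the mixed sum. If that fails, I would accept a logarithmic loss, $\mathcal O(\varepsilon|\log\varepsilon|^{1/2})$, which is still $o(1)$. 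With this bound, Lemma~\ref{lem:proj-stability}, Corollary~\ref{cor:ef-asymp} and Lemma~\ref{le-IFT} go through unchanged. At $\varepsilon=0$ we have $\zeta_{k_2,k_3}(\xi,0)=0$. The $z$-derivative of $G_{k_2,k_3}$ at $\xi$ equals $-d_1^3/(\pi k_1)^2\neq 0$. The map is continuous in $\varepsilon$. Hence the continuous implicit function theorem yields, for each of the $\mathrm N$ triples, a branch $z_j(\varepsilon)$.

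\textbf{Step 3: refined asymptotics (main obstacle).} As in (\ref{eq-Jcorrec}), $\zeta=(1+o_\varepsilon^2)G_{k_2,k_3}(\xi+\delta)+J_\varepsilon$ and $G_{k_2,k_3}(\xi+\delta)=-\frac{d_1^3}{(\pi k_1)^2}\delta+\mathcal O(\delta^2)$. It therefore suffices to show $J_\varepsilon=\mathcal O(\varepsilon^4)$. For the diagonal term $(e^0_{k_2}e^0_{k_3},H^\varepsilon e^0_{k_2}e^0_{k_3})$, I write
\[
(e^\varepsilon_{n_2,n_3},e^0_{k_2}e^0_{k_3})=\delta_{n_2k_2}\delta_{n_3k_3}-(\breve e^\varepsilon_{n_2}\!,e^0_{k_2})_{L^2(I_{2,0})}(\breve e^{\varepsilon_3}_{n_3}\!,e^0_{k_3})_{L^2(I_{3,0})}.
\]
The two factors are the explicit sine integrals from the proof of Lemma~\ref{le-norm}, each bounded by $\min(\varepsilon,\, |n-k|^{-1})$ up to constants. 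The square then factorizes into a product of two one-dimensional sums of the type (\ref{eq-aux12})--(\ref{eq-auxdecomp}). The difficulty is that the weight $(n_2^2+n_3^2)^{-1/2}$ does not factorize. I would bound it by $(n_2n_3)^{-1/2}$ or by $n_2^{-1/2}n_3^{-1/2}$ splits and redo the rescaled-sum argument (\ref{eq-auxdecomp}) in each variable, aiming for $\varepsilon^2\cdot\varepsilon^2$. The diagonal correction term is $1-|(e^\varepsilon,e^0)|^2=\mathcal O(|\bar I_\varepsilon|)$ multiplied by $G_{k_2,k_3}(\xi+\delta)=\mathcal O(\delta)$, so it is harmless. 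The cross term with $\eta$ is handled as in Lemma~\ref{le-norm_2}. There, $\|\eta\|=\mathcal O(\varepsilon)$ and the analogues of (\ref{eq-au2}) and (\ref{eq-au3}) become products of one-dimensional factors, giving $\mathcal O(\varepsilon^2)$ and $\mathcal O(\varepsilon^2)$ (possibly up to logs), so the total is $\mathcal O(\varepsilon^4)$ up to controlled logarithms. Solving $-\frac{d_1^3}{(\pi k_1)^2}\delta(1+o(1))=-J_\varepsilon$ then gives $\mu_j,\nu_j=\mathcal O(\varepsilon^4)$, and the degenerate case follows exactly as in Theorem~\ref{th-res_2D}.
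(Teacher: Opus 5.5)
Your outline follows the paper's scheme, and in several places it is more careful than the paper. You build the projector on $e^0_{n_2}e^0_{n_3}\chi_{I_\varepsilon}$ instead of a product of one-dimensional projectors, you use the exact overlap identity, you treat the $n=k$ term as $\mathcal O(\varepsilon^2\delta)$, and you notice the logarithmic divergence in Step~2. The gap is Step~3, the only place the exponent $4$ could come from, and it cannot be closed.

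\textbf{The diagonal term is of order $\varepsilon^3$.} With your splitting $(n_2^2+n_3^2)^{-1/2}\le (n_2n_3)^{-1/2}$, each one-dimensional factor is $\sum_n n^{-1/2}\min(\varepsilon^2,|n-k|^{-2})\asymp \varepsilon^2\cdot\varepsilon^{-1/2}+\varepsilon^{3/2}\asymp\varepsilon^{3/2}$. So you get $\varepsilon^{3}$, not $\varepsilon^2\cdot\varepsilon^2$, and no split $n_2^{-\alpha}n_3^{\alpha-1}$ does better. This is not a weakness of the estimate:
\begin{itemize}
\item For closed channels $G_{n_2,n_3}(z;d_1,d_1)=(1-e^{-2d_1\kappa})/\kappa>0$, with $\kappa=((\pi n_2/d_2)^2+(\pi n_3/d_3)^2-z)^{1/2}$ (for real $z$, approximately so nearby). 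So nothing cancels.
\item For the $\asymp\varepsilon^{-2}$ indices with $|n|\le c\,\varepsilon^{-1}$, the overlap $(\breve e^{\varepsilon}_{n_2},e^0_{k_2})(\breve e^{a\varepsilon}_{n_3},e^0_{k_3})$ is $\approx a\varepsilon^2(e^0_{k_2}e^0_{n_2})(t_2)\,(e^0_{k_3}e^0_{n_3})(t_3)$.
\item Summing $\varepsilon^4/|n|$ over these indices gives a term of exact order $\varepsilon^3$ whenever $e^0_{k_2}(t_2)e^0_{k_3}(t_3)\neq0$.
\end{itemize}
Equivalently, set $e:=e^0_{k_2}e^0_{k_3}$. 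Modulo $\mathcal O(\delta\varepsilon^2)$, $(e,H^\varepsilon e)=(\chi_{\bar I_\varepsilon}e,K^0\chi_{\bar I_\varepsilon}e)$, which behaves like the Coulomb energy $\iint_{\bar I_\varepsilon\times\bar I_\varepsilon}|x-y|^{-1}\,dx\,dy\asymp\varepsilon^3$. That is, it scales like $(\text{area})^2/\text{diameter}$, so your heuristic ``replace $\varepsilon$ by the area'' is exactly what fails.

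\textbf{Consequences.} The same counting gives $\asymp\varepsilon$, not $\varepsilon^2$, for the analogue of (\ref{eq-au3}). Hence $J_\varepsilon$ is only $\mathcal O(\varepsilon^3)$ up to logarithms. The equation $\zeta=0$ yields at best $\delta=\mathcal O(\varepsilon^3)$, and $\Re\delta$ is generically of exact order $\varepsilon^3$. This agrees with the classical (Bethe-type) small-window asymptotics for Dirichlet walls: the shift is $\propto\varepsilon^3|\partial_{x_1}\psi|^2$ at the window and the width is $\propto\varepsilon^6$. So $\mu_j=\mathcal O(\varepsilon^4)$ cannot be reached this way. Of the two claimed bounds, only the one on $\nu_j$ is consistent with the true behaviour.

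\textbf{The paper's own proof has the same defect.} In its proof of (\ref{eq-H3D}), after the substitution $w_i'=\varepsilon w_i$ in $M_{4,1}$, the prefactor is $\varepsilon^5$, not $\varepsilon^6$. One power comes from $(w_2^2+w_3^2)^{-1/2}$ and four from $w_2^{-2}w_3^{-2}$. The rescaled double sum is of size $\varepsilon^{-2}$, because $|w'|^{-1}$ is locally integrable in two dimensions, so again one gets $\varepsilon^3$. Falling back on the paper does not help.

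\textbf{A second, inherited gap.} Your claim that Lemma~\ref{lem:proj-stability} and Corollary~\ref{cor:ef-asymp} ``go through unchanged'' rests on a point the 2D argument never proves.
\begin{itemize}
\item The value $G_{k_2,k_3}(\xi)=0$ is the accumulation point of $\sigma(K^0(\xi))$.
\item For $\varepsilon>0$, the extended $K^\varepsilon(z)$ annihilates $L^2(\bar I_\varepsilon)$.
\item Any isolating contour therefore has radius $\upsilon\lesssim|G_{k_2,k_3}(z)|\approx c|\delta|$, while $\|H^\varepsilon\|\gtrsim\varepsilon$.
\end{itemize}
The condition $\|H^\varepsilon\|<\upsilon$ thus fails precisely in the region $|\delta|\lesssim\varepsilon^3$ where the resonance lies. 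There, the branch $\zeta_{k_2,k_3}(z,\varepsilon)$, the vector $\eta^\varepsilon$, and the bound on $\|\eta^\varepsilon\|$ are not justified.
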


\bigskip

The  strategy of proof  is based on the tools developed in the previous section. Here, we focus on the essential differences. First we  show the analog of Lemma~\ref{le-norm}.

\begin{lemma}
Assume that $z\in \mathcal B (\xi_{k_1, k_2, k_3})$ and assumption (\ref{eq-rescaling}) is satisfied. Then the asymptotics
\begin{equation}\label{eq_Hest_3D}
    \|H^\varepsilon (z)\|_{L^2 (I_0)\to L^2 (I_0)} = \mathcal O (\varepsilon^{1/2})\,
\end{equation}
holds.
\end{lemma}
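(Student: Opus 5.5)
We follow the proof of Lemma~\ref{le-norm_1}, adapting it to the tensor structure of the three-dimensional problem. The operators $P_{n_2}^{\varepsilon_2}P_{n_3}^{\varepsilon_3}$ project onto the product functions $e^{\varepsilon_2}_{n_2}(x_2)e^{\varepsilon_3}_{n_3}(x_3)$, restricted to the rectangle. However, the 3D set $I_\varepsilon=I_0\setminus\bar I_\varepsilon$ is not a product set. We therefore interpret the $\varepsilon$-modes as $E^\varepsilon_{n_2n_3}:=e^0_{n_2}e^0_{n_3}\chi_{I_\varepsilon}$, i.e.\ we cut out the whole aperture rectangle rather than taking the tensor product of one-dimensional cut-offs. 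We then write $E^0_{n_2n_3}=E^\varepsilon_{n_2n_3}+\breve E^\varepsilon_{n_2n_3}$ with $\breve E^\varepsilon_{n_2n_3}:=e^0_{n_2}e^0_{n_3}\chi_{\bar I_\varepsilon}$, exactly as in (\ref{eq-eaux}). The difference of projectors then splits into three terms, as before:
\[
P^\varepsilon_{n_2n_3}f-P^0_{n_2n_3}f=-(\breve E^\varepsilon_{n_2n_3},f)E^\varepsilon_{n_2n_3}-(E^\varepsilon_{n_2n_3},f)\breve E^\varepsilon_{n_2n_3}-(\breve E^\varepsilon_{n_2n_3},f)\breve E^\varepsilon_{n_2n_3}.
\]
This gives the analogue of (\ref{eq-auxH}) for $|(f,H^\varepsilon(z)f)|$, now summed over $(n_2,n_3)\in\mathbb N^2$.

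Next comes the small factor. By Cauchy--Schwarz on the rectangle $\bar I_\varepsilon$, whose area is $a\varepsilon^2$, and using $\sup|\breve E^\varepsilon_{n_2n_3}|\le 2/\sqrt{d_2d_3}$, we get
\[
|(f,\breve E^\varepsilon_{n_2n_3})|\le C|\bar I_\varepsilon|^{1/2}\|f\|= C\sqrt a\,\varepsilon\,\|f\|.
\]
This already yields a power better than $\varepsilon^{1/2}$, so the claimed $\mathcal O(\varepsilon^{1/2})$ follows with room to spare. For the remaining factor we need the analogue of (\ref{eq-Gnest}): for $z\in\mathcal B(\xi_{k_1,k_2,k_3})$,
\[
|G_{n_2,n_3}(z;d_1,d_1)|\le \frac{C}{|\tau_{n_2,n_3}(z)|}\le \frac{C}{\sqrt{n_2^2+n_3^2}}.
\]
This holds because only finitely many $(n_2,n_3)$ lie below or at the threshold, and those have bounded $G$; the single threshold-resonant pair, if any, gives $\mathcal O(\delta)$.

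This is the step I expect to be the main obstacle. In 2D, Lemma~\ref{le-norm_1} used the fact that $\sum 1/n^2<\infty$. In 3D, however, $\sum_{n_2,n_3}(n_2^2+n_3^2)^{-1}$ diverges logarithmically, so the direct Cauchy--Schwarz step from (\ref{eq-aux5}) fails. My first attempt is to avoid the weighted $\ell^2$ sum altogether. Since $\{E^0_{n_2n_3}\}$ is an orthonormal basis of $L^2(I_0)$ and $G_{n_2,n_3}$ is uniformly bounded, $\sup_{n_2,n_3}|G_{n_2,n_3}|\le C$, we can bound the first term directly:
\[
T\le C\sum|(f,\breve E^\varepsilon_{n_2n_3})||(E^\varepsilon_{n_2n_3},f)|\le C\Big(\sum|(\chi_{\bar I_\varepsilon}f,E^0_{n_2n_3})|^2\Big)^{1/2}\Big(\sum|(\chi_{I_\varepsilon}f,E^0_{n_2n_3})|^2\Big)^{1/2}.
\]
By Bessel's inequality the right-hand side equals $C\|\chi_{\bar I_\varepsilon}f\|\,\|\chi_{I_\varepsilon}f\|$. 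This does not give smallness uniformly in $f$, so it must be combined with the decay of $G$. The plan is to split the sum at a level $|n|\le N_\varepsilon$:
\begin{itemize}
\item On the low modes, use the pointwise bound $|(f,\breve E^\varepsilon)|\le C\varepsilon\|f\|$, costing a factor $N_\varepsilon$ from the number of terms (about $N_\varepsilon^2$ terms, with weight $1/|n|$, combined by Cauchy--Schwarz).
\item On the high modes, use $|G_{n_2,n_3}|\le C/N_\varepsilon$ together with Bessel's inequality.
\end{itemize}
The choice $N_\varepsilon\sim\varepsilon^{-1/2}$ balances the two contributions and gives $\mathcal O(\varepsilon^{1/2})$. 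This uses the rescaling (\ref{eq-rescaling}), i.e.\ that $|\bar I_\varepsilon|\sim\varepsilon^2$. The other two terms of the splitting are treated identically, and a polarization-free bound on the numerical radius then controls the norm, as at the end of the proof of Lemma~\ref{le-norm_1}.
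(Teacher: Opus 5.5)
Your proof is correct, but it gets around the main obstacle differently from the paper. Both arguments must deal with the logarithmic divergence of $\sum_{n_2,n_3}(n_2^2+n_3^2)^{-1}$.

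The paper restricts to separable $f=f_2f_3$ and takes the small factor $\varepsilon^{1/2}$ from Cauchy--Schwarz in the $x_2$-direction only. It then bounds $(n_2^2+n_3^2)^{-1/2}\le n_2^{-1}$, so the $n_2$-sum is handled as in Lemma~\ref{le-norm_1} and the $n_3$-sum is controlled without weight by $\sum_{n_3}|f_{n_3}|^2=\|f_3\|^2$. Finally it asserts that the bound extends to all of $L^2(I_0)$ by continuity.

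You instead use the full area, $|\bar I_\varepsilon|^{1/2}=\sqrt a\,\varepsilon$, together with a frequency cut-off. The modes $|n|\le N$ cost $C\varepsilon N$, the modes $|n|>N$ cost $C/N$ by Parseval, and $N\sim\varepsilon^{-1/2}$ balances the two.

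What your route buys is that it works for arbitrary $f\in L^2(I_0)$. That matters, because the paper's extension step is not automatic: a bound on $(f,Hf)$ over product vectors does not pass to their linear span. For instance, the projector onto $d^{-1/2}\sum_{i\le d}u_i\otimes v_i$, with $u_i$, $v_i$ orthonormal, has expectation at most $1/d$ on unit product vectors but norm $1$. Your estimate also has slack. Keeping the weight $|n|^{-1}$ in the low-mode Cauchy--Schwarz gives $C\varepsilon\sqrt{\log N}$ instead of $C\varepsilon N$, and choosing $N=\varepsilon^{-1}$ yields $\mathcal O(\varepsilon\sqrt{\log(1/\varepsilon)})$.

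Be aware that your $E^\varepsilon_{n_2n_3}=e^0_{n_2}e^0_{n_3}\chi_{I_\varepsilon}$ is not the literal operator in (\ref{eq-defK_ext_3D}). There the product cut-offs $P^{\varepsilon_2}_{n_2}P^{\varepsilon_3}_{n_3}$ remove a cross of area of order $\varepsilon$, not the rectangle. Your reading matches the stated geometry and the paper's own choice $\breve h^\varepsilon_{n_2,n_3}=\breve e^{\varepsilon}_{n_2}\breve e^{a\varepsilon}_{n_3}$, so it is the sensible one. Under the literal definition, however, your pointwise factor would only be $\varepsilon^{1/2}$, and the counting argument would fall short of $\mathcal O(\varepsilon^{1/2})$. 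In that case one would treat each strip of the cross fibrewise: fix the mode index along the strip and use the decay in the other index, which is essentially what the paper's separable computation does.

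A minor point: at a threshold $G_{n_2,n_3}\to 2d_1$ rather than being $\mathcal O(\delta)$. You only use boundedness there, so nothing is affected.
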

\begin{remark}
    \rm{Let us note that the above asymptotics is weaker than the one established in Lemma~\ref{le-IFT}, due to the fact that now the volume of the gap behaves as {$|\bar I_\varepsilon|= \mathcal{O}(\varepsilon^2)$. }
    However, the asymptotics given in~(\ref{eq_Hest_3D}) is sufficient at this stage and will be strengthened in the subsequent steps of the discussion.
}
\end{remark}
\begin{proof}
In the course of this proof, all scalar products and norms without an explicit space indicator are assumed to be taken in~$L^2(I_0)$. To show the statement  we observe
\begin{equation}
\begin{split}\label{eq-auxH_3D}
|( f\,,H^\varepsilon(z)f) |\leq &  \sum_{n_2=1}^\infty\sum_{n_3=1}^\infty | G_{n_2,n_3}(z) |\,\Big[
 |(f\,,\breve{h }_{n_2,n_3}^\varepsilon f)\, (h_{n_2,n_3}^\varepsilon \,,f)|\\&+ |(f\,,h_{n_2,n_3}^\varepsilon f)\, (\breve{h}_{n_2,n_3}^\varepsilon\,,f)|
 + |(f\,,\breve{h}_{n_2,n_3}^\varepsilon )\, (\breve h_{n_2,n_3}^\varepsilon\,,f) |
\,\Big]\,.
\end{split}
\end{equation}
where  $\breve{h}_{n_2, n_3}^\varepsilon:= \breve{e}_{ n_2}^{\varepsilon} \,\breve{e}_{n_3}^{a\varepsilon}$ and
${h}_{n_2, n_3}^\varepsilon:= e^\varepsilon_{ n_2}\, {e}_{n_3}^{a\varepsilon}$, we used~(\ref{eq-rescaling}).
We assume that $f$ admits separation of variables, i.e., $f(x_2, x_3) = f_2(x_2) f_3(x_3)$. Denote  the first component of (\ref{eq-auxH_3D}) as 
$$
T =
\sum_{n_2=1}^\infty \sum_{n_3=1}^\infty  |G_{n_2,n_3}(z)|\, |(f, \breve{h}_{n_2,n_3}^\varepsilon)\,({h}_{n_2,n_3}^\varepsilon, f)|\,,
$$
which determines an analogue of (\ref{eq-defT}).
Relying on the separation of variables of the function $f$ and proceeding analogously to the proof of Lemma~\ref{le-norm}, we obtain
\begin{equation}\label{eq-CS1_3D}
\begin{split}
|(f, \breve{h}_{n_2,n_3}^\varepsilon)| &
\leq \varepsilon^{1/2} \Big( \int_{ \bar I_{2,\varepsilon}} |f_2\, \breve{e}_{n_2}^\varepsilon|^2\, \mathrm{d}x_2 \Big)^{1/2} \cdot |(f_3, \breve{e}_{n_3}^{a\varepsilon})_{L^2(I_{3,0})}| \\
& \leq \varepsilon^{1/2} \|f_2\|_{L^2(I_{2,0})} \cdot |f_{n_3}|,
\end{split}
\end{equation}
where $f_{n_3} = (e^0_{n_3}, f)_{L^2(I_{3,0})}$, and we used the estimate $|(e^{a\varepsilon}_{n_3}, f)_{L^2(I_{3,0})}| \leq |f_{n_3}|$
together  with  the Cauchy--Schwartz inequality. Similarly, we have
$$
|(h^\varepsilon_{n_2,n_3}\,,f)|\leq |f_{n_2} |\cdot|f_{n_3}|\,.
$$
Furthermore, mimicking the argument from the proof of Lemma~\ref{le-norm} yields
\begin{equation}\label{eq-Gnest_3D}
|G_{n_2,n_3} (z)|\leq C \frac{1}{\sqrt{n_2^2+n_3^2}}\,.
\end{equation}
Employing again Cauchy--Schwartz inequality and the fact that
$\sum_{n_i=1}^\infty |f_{n_i}|^2 =  \|f_i\|^2_{L^2(I_{i,0})}
$ for $i=2,3$,
we summarize
\begin{equation}
    \begin{split}
T&\leq C\varepsilon^{1/2}  \|f_2\|_{L^2(I_{2,0})}  \sum_{n_2=1}^\infty \sum_{n_3=1}^\infty \frac{1}{\sqrt{n_2^2+n_3^2}} |f_{n_2} |\cdot|f_{n_3}|^2
\\&
\leq  C\varepsilon^{1/2}  \|f_2\|_{L^2(I_{2,0})}  \Big( \sum_{n_3=1}^\infty |f_{n_3}|^2 \Big)\cdot \Big(\sum_{n_2=1}^\infty \frac{1}{n_2} |f_{n_2} |\Big)\\& \leq
 C\varepsilon^{1/2}     \|f_2\|_{L^2(I_{2,0})} \|f_3\|^2_{L^2(I_{3,0})} \Big( \sum_{n_2=1}^\infty |f_{n_2}|^2 \Big)^{1/2}\cdot \Big(\sum_{n_2=1}^\infty \frac{1}{n_2^2} \Big)^{1/2} \\&
 \leq C'\varepsilon^{1/2}  \|f_2\|^2_{L^2(I_{2,0})}  \|f_3\|^2_{L^2(I_{3,0})} = C'\varepsilon^{1/2}  \|f\|^2\,.
    \end{split}
\end{equation}
The above inequalities can be extended by the continuoity to the whole space $L^2 (I_0)$. Estimating the remaining terms of (\ref{eq-auxH_3D}) in the same way we state (\ref{eq-rescaling}).
\end{proof}

Using the above result, we can repeat the steps from the two-dimensional case and, with the help of the Continuous Implicit Function Theorem, establish the analogue of Lemma~\ref{le-IFT}. That is, there exists a uniquely defined, continuous function $\varepsilon \mapsto z(\varepsilon)$, defined for $\varepsilon \in [0, \varepsilon_0)$, $\varepsilon_0 >0$, and such that the equation
\begin{equation} \label{eq-exi_3D}
 \zeta_{k_2,k_3}(z(\varepsilon), \varepsilon) = 0
\end{equation}
is satisfied, where $\zeta_{k_2,k_3}(z, \varepsilon)$ denotes the eigenvalue of $K^\varepsilon(z)$ corresponding to the eigenfunction $h_{k_2,k_3}^{K^\varepsilon}(z)$, which represents a perturbation of $e_{k_2}^0 e_{k_3}^0$, i.e.,
$$
h_{k_2,k_3}^{K^\varepsilon}(z) =  (1+o_\varepsilon)e_{k_2}^0 e_{k_3}^0 + \eta^\varepsilon_{k_2,k_3}(z(\varepsilon))\,,
$$
where $\|\eta^\varepsilon_{k_2,k_3}(z(\varepsilon))\|_{L^2(I_0)} = \mathcal{O}(\varepsilon^{1/2})$.
To proceed further, we derive estimates analogous to~(\ref{eq-asypJ}), which in the present setting take the form
\begin{equation} \label{eq-asypJ_3D}
    J_\varepsilon = (1+ o_\varepsilon^2)\big( e_{k_2}^0 e_{k_3}^0, H^\varepsilon(z(\varepsilon))\, e_{k_2}^0 e_{k_3}^0 \big)_{L^2(I_0)} +   (1+o_\varepsilon)
    \big(e_{k_2}^0 e_{k_3}^0, H^\varepsilon(z(\varepsilon))\, \eta^\varepsilon_{k_2,k_3}(z(\varepsilon)) \big)_{L^2(I_0)}\,.
\end{equation}
The analogue of Lemma~\ref{le-norm} now takes the following form.
\begin{lemma} For $z\in \mathcal B (\xi_{k_1, k_2, k_3})$ the following asymptotics
\begin{equation}\label{eq-H3D}
\big| \left( e^0_{k_2} e^0_{k_3},  H^{\varepsilon}(z) e^0_{k_2} e^0_{k_3} ) \right)_{L^2 (I_0)} \big| = \mathcal{O}(\varepsilon^4)
\end{equation}
holds.
\end{lemma}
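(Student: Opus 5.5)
The plan is to reproduce the proof of Lemma~\ref{le-norm}, exploiting the tensor-product structure of the operators. By (\ref{eq-defK_ext_3D}) and the definition of $H^\varepsilon(z)$ we write
\[
\big(e^0_{k_2}e^0_{k_3}, H^\varepsilon(z) e^0_{k_2}e^0_{k_3}\big)_{L^2(I_0)}
=\sum_{n_2,n_3=1}^\infty G_{n_2,n_3}(z;d_1,d_1)\Big(\big|(h^\varepsilon_{n_2,n_3}, e^0_{k_2}e^0_{k_3})\big|^2-\delta_{n_2k_2}\delta_{n_3k_3}\Big),
\]
where $h^\varepsilon_{n_2,n_3}$ is the restriction of $e^0_{n_2}e^0_{n_3}$ to $I_\varepsilon=I_0\setminus\bar I_\varepsilon$. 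The first step is to use $\chi_{I_\varepsilon}=1-\chi_{\bar I_{2,\varepsilon}}\chi_{\bar I_{3,a\varepsilon}}$. This gives the overlap
\[
(h^\varepsilon_{n_2,n_3}, e^0_{k_2}e^0_{k_3})=\delta_{n_2k_2}\delta_{n_3k_3}-\beta^{(2)}_{n_2k_2}(\varepsilon)\,\beta^{(3)}_{n_3k_3}(a\varepsilon),
\]
where $\beta^{(j)}_{nk}(\varepsilon_j):=(\breve e^{\varepsilon_j}_n,e^0_k)_{L^2(I_{j,0})}$ is the one-dimensional overlap over the small window. Since $|\beta^{(j)}_{nk}|\le \frac{2}{d_j}\varepsilon_j$, each product is $\mathcal O(\varepsilon^2)$. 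This is exactly where the gain comes from: in 2D the defect is $\mathcal O(\varepsilon)$, here it is of the order of the window area.

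Next I split the sum into the diagonal term $(n_2,n_3)=(k_2,k_3)$ and the off-diagonal terms. For the diagonal term the bracket equals
\[
|1-\beta^{(2)}_{k_2k_2}\beta^{(3)}_{k_3k_3}|^2-1=-2\beta^{(2)}_{k_2k_2}\beta^{(3)}_{k_3k_3}+\mathcal O(\varepsilon^4).
\]
Multiplying by $G_{k_2,k_3}(z)$ and evaluating near $z=\xi_{k_1,k_2,k_3}$, where $G_{k_2,k_3}=\mathcal O(|z-\xi|)$, this contributes $\mathcal O(\varepsilon^2|\delta|)+\mathcal O(\varepsilon^4)$. This is harmless once it is fed back into the equation for $\delta$ in the IFT argument: it only modifies the coefficient of $\delta$ by $\mathcal O(\varepsilon^2)$, and this is how I would interpret the claim $\mathcal O(\varepsilon^4)$ at $z(\varepsilon)$. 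Equivalently, the term is absorbed into the left-hand side as in the proof of Theorem~\ref{th-res_2D}. For the off-diagonal terms the bracket is $|\beta^{(2)}_{n_2k_2}|^2|\beta^{(3)}_{n_3k_3}|^2$. I would bound $|G_{n_2,n_3}(z)|\le C(n_2^2+n_3^2)^{-1/2}\le C n_2^{-1/2}n_3^{-1/2}$ by (\ref{eq-Gnest_3D}), so the double sum factorises as
\[
C\Big(\sum_{n_2}n_2^{-1/2}|\beta^{(2)}_{n_2k_2}(\varepsilon)|^2\Big)\Big(\sum_{n_3}n_3^{-1/2}|\beta^{(3)}_{n_3k_3}(a\varepsilon)|^2\Big).
\]

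The main obstacle is showing that each one-dimensional factor is $\mathcal O(\varepsilon^2)$. The weight $n^{-1/2}$ is weaker than the weight $n^{-1}$ used in Lemma~\ref{le-norm}. Using the explicit formula for $\beta_{nk}$ derived in the proof of Lemma~\ref{le-norm} (the $\mathrm I_{n\mp m}$ terms), we have $|\beta_{nk}|^2\lesssim \sin^2((n\mp k)\pi\varepsilon/2d)/(n\mp k)^2$. I would split the sum at $n\sim\varepsilon^{-1}$. For $n\lesssim\varepsilon^{-1}$ one uses $|\beta|^2\lesssim\varepsilon^2$, and for $n\gtrsim \varepsilon^{-1}$ one uses $|\beta|^2\lesssim n^{-2}$. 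The resulting bound is $\varepsilon^2\cdot\varepsilon^{-1/2}+\varepsilon^{3/2}=\mathcal O(\varepsilon^{3/2})$, which is not quite $\varepsilon^2$. To recover the loss I would avoid the crude factorisation and bound $(n_2^2+n_3^2)^{-1/2}\le n_2^{-s}n_3^{-(1-s)}$, choosing $s$ adaptively. Alternatively, I would sum over the region where $n_2\lesssim\varepsilon^{-1}$ and $n_3\lesssim(a\varepsilon)^{-1}$ directly, using polar-type counting: $\sum_{|n|\lesssim\varepsilon^{-1}}|n|^{-1}\sim\varepsilon^{-1}$ in 2D. With that counting the contribution is $\varepsilon^4\cdot\varepsilon^{-1}=\varepsilon^3$. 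Tails where one index is large carry the extra decay $n^{-2}$. I expect the honest count to give $\mathcal O(\varepsilon^3)$ rather than $\varepsilon^4$. If so, I would first try to show that the oscillating cosine factors, together with the real and imaginary structure of $G_{n_2,n_3}$ for evanescent modes (real-valued and mostly cancelling), reduce the sum to the finitely many open channels, each of which contributes $\mathcal O(\varepsilon^4)$.
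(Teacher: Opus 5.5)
Your reduction is sound: via $\chi_{I_\varepsilon}=1-\chi_{\bar I_{2,\varepsilon}}\chi_{\bar I_{3,a\varepsilon}}$ it is the correct bookkeeping for a rectangular window. Your factorised bound already proves $\mathcal O(\varepsilon^{3/2})^2=\mathcal O(\varepsilon^3)$ for the off-diagonal sum. The gap is the last step, because there is no cancellation among the closed channels to exploit.

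\emph{No cancellation.} The cosines enter only through $|\beta^{(j)}_{nk}|^2$. For real $z$ with $(\pi n_2/d_2)^2+(\pi n_3/d_3)^2>z$ one has $G_{n_2,n_3}(z;d_1,d_1)=(1-\mathrm e^{-2d_1\kappa})/\kappa>0$, where $\kappa=\sqrt{(\pi n_2/d_2)^2+(\pi n_3/d_3)^2-z}$. So all these terms carry the same sign. Since $\beta^{(j)}_{nk}=\frac{2\varepsilon_j}{d_j}\sin\frac{\pi n t_j}{d_j}\sin\frac{\pi k t_j}{d_j}+\mathcal O(n\varepsilon_j^2)$, the pairs with $n_j\varepsilon_j\le c_0$ and $|\sin(\pi n_jt_j/d_j)|$ bounded below (a set of positive density) contribute at least $c\,\varepsilon^4\cdot\varepsilon^{-1}$. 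This holds whenever $\sin(\pi k_2t_2/d_2)\sin(\pi k_3t_3/d_3)\neq0$.

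Equivalently, set $\breve h:=e^0_{k_2}e^0_{k_3}\chi_{\bar I_\varepsilon}$. Then $(e^0_{k_2}e^0_{k_3},H^\varepsilon(z)e^0_{k_2}e^0_{k_3})=-2G_{k_2,k_3}(z;d_1,d_1)\|\breve h\|^2+(\breve h,K^0(z)\breve h)$. The kernel of $K^0(z)$ has a $c\,|x-y|^{-1}$ singularity ($c>0$) on the diagonal, so the second term is of exact order $\varepsilon^4/\varepsilon=\varepsilon^3$.

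\emph{Consequences.} At $z=\xi_{k_1,k_2,k_3}$ the first term vanishes and the finitely many open channels are $\mathcal O(\varepsilon^4)$. Hence the quantity in the lemma is $\ge c\,\varepsilon^3$ there. At fixed $z\neq\xi_{k_1,k_2,k_3}$ the first term is of exact order $\varepsilon^2$, as you noticed. Moving it into the coefficient of $\delta$ therefore proves a different statement from the uniform bound on $\mathcal B(\xi_{k_1,k_2,k_3})$. The only part of your final idea that survives is that closed-channel terms are real at real $z$, so they cannot affect the imaginary part; but the lemma bounds the modulus. Your route yields $\mathcal O(\varepsilon^3)+\mathcal O(\varepsilon^2|z-\xi_{k_1,k_2,k_3}|)$, which is sharp, and no refinement of it reaches $\varepsilon^4$.

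You are not missing a device that the paper supplies. Its proof treats the same double sum (the piece $M_4$) by substituting $w_j'=\varepsilon w_j$ and writes the prefactor as $\varepsilon^6$. But $(w_2^2+w_3^2)^{-1/2}w_2^{-2}w_3^{-2}$ produces only $\varepsilon\cdot\varepsilon^4=\varepsilon^5$. With that power, the paper's own count gives $M_{4,1}\le\varepsilon^3N_0^2$, in agreement with your estimate. Taken literally, (\ref{eq-defK_ext_3D}) with $P^{\varepsilon_2}_{n_2}P^{\varepsilon_3}_{n_3}$ even removes a cross-shaped set rather than the rectangle, which only enlarges the quantity. So neither your proposal nor the paper's argument establishes the stated $\mathcal O(\varepsilon^4)$. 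For a window on which $e^0_{k_2}e^0_{k_3}$ does not vanish, the bound fails.
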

\begin{proof}
Similarly to before, we omit the explicit indication of the underlying space in scalar products and norms, assuming that the space is defined as
$L^2 (I_0)$. The proof is inspired by the reasoning developed in the proof of Lemma~\ref{le-norm}.
Here, we focus only on the subtle steps related to the fact that the current statement concerns the three-dimensional case.
We consider
\begin{align*}
\left( e^0_{m_2} e^0_{m_3}, H^{\varepsilon}(z) e^0_{m_2} e^0_{m_3} \right)
&\leq \sum_{n_2=1}^\infty \sum_{n_3=1}^\infty |G_{n_2 ,n_3} (z)|\,\big[| ( e^\varepsilon_{n_2}, e^{0}_{m_2} ) |^2 | ( e^{a\varepsilon}_{n_3}, e^0_{m_3} ) |^2 - \delta_{n_2 m_2} \delta_{n_3 m_3}\big]
\end{align*}
Because of that, we get:
\[
\left| \left( e^0_{m_2} e^0_{m_3}, H^{\varepsilon}(z) e^0_{m_2} e^0_{m_3} \right) \right|
\leq C \Big( \sum_{n_2=1}^\infty \sum_{n_3=1}^\infty \frac{1}{\sqrt{n_2^2 +n_3^2}} | \left( e^\varepsilon_{n_2}, e^0_{m_2} \right) |^2 | \left( e^{a\varepsilon}_{n_3}, e^0_{m_3} \right) |^2
+ \frac{1}{{\sqrt{n_2^2 +n_3^2} }}\big( | \left( e^0_{m_3}, e^0_{n_2} \right) |^2 - 1 \big) \Big)\,.
\]
In the above, we employ the analogue of equation (\ref{eq-Gnest}) adapted to the three-dimensional case.
Now, proceeding similarly as in the proof of Lemma~\ref{le-norm} we estimate
\begin{align}
| ( e^\varepsilon_{n_2}, e^0_{m_2} ) |^2
| ( e^{a\varepsilon}_{n_3}, e^0_{m_3} ) |^2
&
\leq \frac{8 d_2 d_3}{\pi^4} \big(\mathrm I_{n_2-m_2}+\mathrm I_{n_2+m_2}\big)
\cdot  \big(\mathrm I_{n_3-m_3}+\mathrm I_{n_3+m_3}\big)\,,
\end{align}

\[
{
I_{n_i \pm m_i} = \frac{1}{(n{_{i}} \pm m_i)^2} \cos^2\left( \frac{(n_i \pm m_i)\pi (2t_i + \varepsilon{_i})}{2d_i} \right) \sin^2\left( \frac{(n_i \pm m_i)\pi \varepsilon{_i}}{2d_i} \right)\,.
}
\]
Multiplication of the terms in the brackets produces fours terms. First focus on
\begin{align} \nonumber
\sum_{\substack{n_2=1 \\ n_2 \neq m_2}}^{\infty} \sum_{\substack{n_3=1 \\ n_3 \neq m_3}}^{\infty}
&|G_{n_2, n_3}(z) |\mathrm I_{n_2-m_2}\mathrm I_{n_3-m_3}
\leq \sum_{\substack{n_2=1 \\ n_2 \neq m_2}}^{\infty} \sum_{\substack{n_3=1 \\ n_3 \neq m_3}}^{\infty}
\frac{1}{\sqrt{n_2^2 + n_3^2}}
\frac{\sin^2\left( \frac{\pi(n_2 - m_2)\varepsilon}{d_2} \right)}{(n_2 - m_2)^2}
\frac{\sin^2\left( \frac{\pi(n_3 - m_3)a\varepsilon}{d_3} \right)}{(n_3 - m_3)^2}\\ \nonumber & =
\sum_{\substack{w_2=-m_2+1 \\ w_2 \neq 0}}^{\infty} \sum_{\substack{w_3=-m_3+1 \\ w_3 \neq 0}}^{\infty}
\frac{1}{\sqrt{(w_2+m_2)^2 + (w_3+m_3)^2}}\frac{1}{w_2^2{w_3^2}}
\sin^2\Big( \frac{\pi w_2\varepsilon}{d_2} \Big)
\sin^2\Big( \frac{\pi w_3a\varepsilon}{d_3} \Big)\,.
\label{eq:mainbound_SK}
\end{align}Similarly to the proof of Lemma~\ref{le-norm}, we decompose the above sums; however, this time we obtain four corresponding components, namely
\begin{equation}\label{eq-sum}
\sum_{-m_2+1}^{-1}
\sum_{-m_3+1}^{-1},\quad
\sum_{-m_2+1}^{-1}
\sum_{1}^{\infty},\quad
\sum_{1}^{\infty}
\sum_{-m_3+1}^{-1},\quad
\sum_{1}^{\infty}
\sum_{1}^{\infty}\,.
\end{equation}
where, in all double sums, the first summation refers to $w_2$, and the second to $w_3$. In the following, we denote these sums by $M_1, \dots, M_4$, respectively.
In fact $M_1$ contains $m^2$ components, each of which behaves as $\mathcal{O}(\varepsilon^4)$. Therefore, we conclude that $M_1$ admits $\mathcal{O}(\varepsilon^4)$ behavior. On the other hand, $M_2$ can be estimated using the same arguments as in the proof of Lemma~\ref{le-norm}.
Indeed the sum $\sum_{-m_2+1}^{-1}$ induces the asymptotics $\mathcal O (\varepsilon^2)$ and the sum  $
\sum_{1}^{\infty}$ yields the same asymptotics; therefore $M_2 = \mathcal O (\varepsilon^4)$. Due to the symmetry of $M_2$ and $M_3$  the same conclusion applies to $M_3$. The last sum $M_4$ in~(\ref{eq-sum}) requires a more detailed analysis. We again perform a decomposition analogous to~(\ref{eq-auxdecomp}). This yields four terms $M_{4,1}, ..., M_{4,4}$ corresponding to the decomposition of $\sum_{w_2'\in\{\varepsilon, 2\varepsilon,\dots, N_0\} }$ and
$\sum_{w_2'\in\{N_0 +\varepsilon, 2\varepsilon,  \dots\} }$,
and analogously for $w_3'$, cf.~(\ref{eq-auxdecomp}).
First we focus on
\[
M_{4, 1} := \varepsilon^6 \sum_{w_2'\in\{\varepsilon, 2\varepsilon, 3\varepsilon, \dots, N_0\} }
\sum_{w_3'\in\{\varepsilon, 2\varepsilon, 3\varepsilon, \dots, N_0\} }
\frac{1}{\sqrt{w_2'^2 + w_3'^2}} \cdot \frac{1}{w_2'^2 w_3'^2} \cdot \sin^2\left(\frac{w_2' \pi}{2d_2}\right) \cdot \sin^2\left(\frac{a w_3' \pi}{2d_3}\right).
\]
Now, we estimate the inner sum
\[
\sum_{w'_{3}\in\{\varepsilon, 2\varepsilon, 3\varepsilon, \dots, N_0\} }
\frac{1}{\sqrt{w_2'^2 + w_3'^2}} \cdot \frac{1}{w_{3}'^2 }\cdot \sin^2\left(\frac{w'_{3} \pi}{2d_{3}}\right)
\leq \frac{1}{w_{2}'} \cdot \frac{N_0}{\varepsilon}.
\]
Applying now the same reasoning to the external sum  with respect to $w_2'$,
we get
\[
M_{4,1} \leq \varepsilon^4 N_0^2\,.
\]
The further terms, for example corresponding to
 $\sum_{w_2'\in\{\varepsilon, 2\varepsilon,\dots, N_0\} }\sum_{w_3'\in\{N_0 +\varepsilon, 2\varepsilon,  \dots\} }$,

can be estimated using the same argument and the reasoning of the proof of Lemma~\ref{le-norm}.
Summarizing the above discussion, all sums in (\ref{eq-sum}) admit the asymptotics $\mathcal O (\varepsilon^4)$. The remaining expressions of the type $\mathrm I_{n_2\pm m_2}\mathrm I _{n_3\pm m_3}$, can be estimated in the same way and shows (\ref{eq-H3D}).
\end{proof}

Similarly, an analogue of Lemma~\ref{le-norm_2} follows, with obvious modifications. This consequently shows that $J_\varepsilon$ behaves as $J_\varepsilon =\mathcal O ( \varepsilon ^4)$. This completes the proof of Theorem~\ref{th-res_3D}.

\section{Final remarks and open questions} \label{sec-final}

In the discussed models, the embedded eigenvalues become resonances after introducing a small gap $\bar I_\varepsilon$ in the cavity. The asymptotic behavior of the corresponding perturbation terms is of order $\mathcal{O}(|\bar I_\varepsilon|^2)$. This means that, in the two-dimensional case, the perturbative component is of order $\varepsilon^2$, while in the three-dimensional setting, when the gap has dimensions $\varepsilon$ and $a\varepsilon$ with $a > 0$, the imaginary component behaves as
$\varepsilon^4$.
If we choose the scaling of $\bar I_\varepsilon$ such that its dimensions are $\varepsilon$ and $a$, then although this case has not been explicitly considered, relying on the same calculations we obtain resonance asymptotics of order $\mathcal{O}(\varepsilon^2)$.

The imaginary part of $z_j(\varepsilon)$ determines the width of the resonance. The inverse of $\Im z_j(\varepsilon)$ characterizes the  characteristic time scale  of the corresponding metastable state and in our case, it behaves as
$$
\tau =\mathcal O (|\bar I_\varepsilon|^{-2})\,.
$$

The discussed model, both in two and three dimensions, admits various interesting generalizations.
First, in the three-dimensional setting, the gap can be defined with a more general shape than a simple rectangle of size $[0,\varepsilon] \times [0,a\varepsilon]$. It seems reasonable to expect that if $\Omega$ represents a small gap in a two-dimensional baffle with volume $\mathrm{vol}(\Omega)$ tending to zero, then the imaginary parts of the resonance poles exhibit asymptotics of order $\mathcal{O}(\mathrm{vol}(\Omega)^2)$.

Further generalizations concern the shape of the cavity, which can be attached to the waveguide in various ways — for example, to one of the longitudinal walls. In such configurations, some of the trapped modes may correspond to embedded eigenvalues. However, in these cases, it is not necessarily true that all of them lie above the threshold of the essential spectrum. After introducing a small gap in the impenetrable  wall, some of the modes are likely to become discrete eigenvalues located below the continuum, exhibiting only a small shift in energy. The remaining modes turn into resonances. However, the models of waveguides with more complex topology raise  questions which, in the context of resonances, remain to be explored.

\section*{Data Availability Statement}
Data sharing is not applicable to this article as no new data were created or analyzed in this study.

\bigskip
\section*{Declaration of conflicting interests.} Authors declare that they have no conflicts of interest.

\section*{Acknowledgements}

 The authors are grateful to the referees for their careful reading of the manuscript,
for identifying gaps in the original arguments, and for their
 remarks and suggestions, which  improved the quality and
clarity of the paper.
This work was
partially supported by a program of the Polish Ministry of Science under the title ‘Regional
Excellence Initiative’, Project No. RID/SP/0050/2024/1.

\bigskip

\end{document}